\DeclareMathOperator*{\Ex}{\mathbb{E}}
\DeclareMathOperator*{\Prob}{Pr}
\DeclareMathOperator{\traceOp}{tr}
\DeclareMathOperator{\spanOp}{span}
\DeclareMathOperator{\RealOp}{Re}
\DeclareMathOperator*{\selectOp}{select}
\DeclarePairedDelimiter\ceil{\lceil}{\rceil}
\DeclarePairedDelimiter\bra{\langle}{\vert}
\DeclarePairedDelimiter\ket{\vert}{\rangle}
\DeclarePairedDelimiterX\braket[2]{\langle}{\rangle}%
{#1\delimsize\vert\mathopen{}#2}
\DeclarePairedDelimiter\paren{(}{)}
\DeclarePairedDelimiter\bracket{[}{]}
\DeclarePairedDelimiter\abs{\lvert}{\rvert}
\DeclarePairedDelimiter{\norm}{\lVert}{\rVert}
\DeclarePairedDelimiterX\brakett[3]{\langle}{\rangle}%
{#1\delimsize\vert\,\mathopen{}#2\,\delimsize\vert\mathopen{}#3}
\DeclarePairedDelimiterX\ketbra[2]{\vert}{\vert}{#1 {\delimsize\rangle\langle} #2}
\DeclarePairedDelimiterX{\inp}[2]{\langle}{\rangle}{#1, #2} % Inner product
\DeclarePairedDelimiterXPP\bigo[1]{O}{(}{)}{}{#1}
\DeclarePairedDelimiterXPP\littleo[1]{o}{(}{)}{}{#1}
\DeclarePairedDelimiterXPP\bigomega[1]{\Omega}{(}{)}{}{#1}
\DeclarePairedDelimiterXPP\bigtheta[1]{\Theta}{(}{)}{}{#1}
\DeclarePairedDelimiterXPP\probability[1]{\Prob}{[}{]}{}{#1}
\DeclarePairedDelimiterXPP\probabilityq[2]{\underset{#1}{\Prob}}{[}{]}{}{#2}
\DeclarePairedDelimiterXPP\expectation[1]{\Ex}{[}{]}{}{#1}
\DeclarePairedDelimiterXPP\expectationq[2]{\Ex_{#1}}{[}{]}{}{#2}
\DeclarePairedDelimiterXPP\entropy[1]{\entropyOp}{(}{)}{}{#1}
\DeclarePairedDelimiterXPP\information[1]{\informationOp}{(}{)}{}{#1}
\DeclarePairedDelimiterXPP\trace[1]{\traceOp}{[}{]}{}{#1}
\DeclarePairedDelimiterXPP\ptrace[2]{\traceOp_{#1}}{[}{]}{}{#2}
\DeclarePairedDelimiterXPP\spanSet[1]{\spanOp}{\{}{\}}{}{#1}
\DeclarePairedDelimiterXPP\fidelity[2]{\fidelityOp}{(}{)}{}{#1, #2}
\DeclarePairedDelimiterXPP\fidelitywc[2]{\fidelityWCOp}{(}{)}{}{#1, #2}
\DeclarePairedDelimiterXPP\Real[1]{\RealOp}{(}{)}{}{#1}
\DeclarePairedDelimiterXPP\select[1]{\selectOp}{(}{)}{}{#1}
\DeclareMathOperator{\poly}{poly}
\newcommand{\swap}{swap}
\newcommand{\li}{l}
\newcommand{\ri}{r}
\newcommand{\lr}[1]{\left( #1\right)}
\newcommand{\tU}{\widetilde{U}}
\newcommand{\ii}{\mathrm{i}}
\newcommand{\dd}{\mathrm{d}}
\newcommand{\spliteq}{\nonumber \\&\qquad}
\newcommand{\rt}[1]{\mathrm{rt}(#1)}
\newcommand{\hrt}[1]{\mathrm{hrt}(#1)}
\newcommand{\hrtd}[1]{\mathrm{hrt}_{\Delta}(#1)}
\newcommand{\rtf}[1]{\mathrm{rt}_F(#1)}
\newcommand{\hrtf}[1]{\mathrm{hrt}_F(#1)}
\newcommand{\Acoeffvec}{A_{\overrightarrow{\li_{p_1}},\overrightarrow{\ri_{q}}, \overrightarrow{c_{p_2}}; P,Q}^{\overrightarrow{\alpha_{p_1}}, \overrightarrow{\beta_{q}},\overrightarrow{\gamma_{p_2}}}}
\newcommand{\Acoeff}{A_{\li_1\cdots \li_{p_1}, \ri_1 \cdots \ri_{q}, c_1 \cdots c_{p_2}; P,Q}^{\alpha_1\cdots \alpha_{p_1}, \beta_1\cdots \beta_{q},\gamma_1 \cdots \gamma_{p_2}}}
\newcommand{\Atilde}{A_{\widetilde{\li_1}\cdots \widetilde{\li_{\widetilde{p_1}}}, \widetilde{c_1} \cdots \widetilde{c_{\widetilde{p_2}}},\ri_1\cdots \ri_{q}; P+1,Q}^{\widetilde{\alpha_1}\cdots \widetilde{\alpha_{\widetilde{p_1}}}, \widetilde{\beta_1}\cdots \widetilde{\beta_{q}},\widetilde{\gamma_1} \cdots \widetilde{\gamma_{\widetilde{p_2}}}}}
\newcommand{\Cavg}{\expectationq*{\ket{\psi} \sim \mu}{C_{\Delta}(\ket{\psi},t)}}
\newcommand{\hc}{\mathrm{h.c.}}
\crefname{section}{Sec.}{Secs.}
\newtheorem{theorem}{Theorem}[section]
\newtheorem{lemma}[theorem]{Lemma}
\newtheorem{corollary}[theorem]{Corollary}
\theoremstyle{definition}
\newtheorem{definition}[theorem]{Definition}
\renewcommand{\p@subsection}{}
\renewcommand{\p@subsubsection}{}
\begin{document}

\title{Quantum Routing and Entanglement Dynamics Through Bottlenecks}
\author{Dhruv Devulapalli}
\thanks{D. D. and C. Y. contributed equally. Corresponding authors: devulapallidhruv@gmail.com, chao.yin@colorado.edu}
    \affiliation{Joint Center for Quantum Information and Computer Science, 
        NIST/University of Maryland, College Park, MD 20742, USA}
    \affiliation{Joint Quantum Institute, NIST/University of Maryland, 
        College Park, MD 20742, USA}

\author{Chao Yin}
\thanks{D. D. and C. Y. contributed equally. Corresponding authors: devulapallidhruv@gmail.com, chao.yin@colorado.edu}
\affiliation{Department of Physics and Center for Theory of Quantum Matter,
University of Colorado, Boulder, CO 80309, USA}

\author{Andrew~Y.~Guo}
\affiliation{Quantinuum, Broomfield, CO, 80021, USA}

\author{Eddie~Schoute}
    \affiliation{Computer, Computational, and Statistical Sciences Division, Los Alamos National Laboratory, Los Alamos, NM 87545, USA}
    \affiliation{IBM Quantum, MIT-IBM Watson AI Lab, Cambridge, MA 02142, USA}
        
\author{Andrew~M.~Childs}
    \affiliation{Joint Center for Quantum Information and Computer Science, 
        NIST/University of Maryland, College Park, MD 20742, USA}
    \affiliation{Institute for Advanced Computer Studies, University of Maryland, College Park,
        MD 20742, USA}
    \affiliation{Department of Computer Science, University of Maryland, College Park,
        MD 20742, USA}
        
\author{Alexey~V.~Gorshkov}
    \affiliation{Joint Center for Quantum Information and Computer Science, 
        NIST/University of Maryland, College Park, MD 20742, USA}
    \affiliation{Joint Quantum Institute, NIST/University of Maryland, 
        College Park, MD 20742, USA}

\author{Andrew~Lucas}   \affiliation{Department of Physics and Center for Theory of Quantum Matter,
University of Colorado, Boulder, CO 80309, USA}

\begin{abstract}
    To implement arbitrary quantum circuits in architectures with restricted interactions, one may effectively simulate all-to-all connectivity by routing quantum information. We consider the entanglement dynamics and routing between two regions only connected through an intermediate ``bottleneck'' region with few qubits. In such systems, where the entanglement rate is restricted by a vertex boundary rather than an edge boundary of the underlying interaction graph, existing results such as the small incremental entangling theorem give only a trivial constant lower bound on the routing time (the minimum time to perform an arbitrary permutation). We significantly improve the lower bound on the routing time in systems with a vertex bottleneck. Specifically, for any system with two regions $L, R$  with $N_L, N_R$ qubits, respectively, coupled only through an intermediate region $C$ with $N_C$ qubits, for any $\delta > 0$ we show a lower bound of $\bigomega{N_R^{1-\delta}/\sqrt{N_L}N_C}$ on the Hamiltonian quantum routing time when using piecewise time-independent Hamiltonians, or time-dependent Hamiltonians subject to a smoothness condition. We also prove an upper bound on the average amount of bipartite entanglement between $L$ and $C,R$ that can be generated in time $t$ by such architecture-respecting Hamiltonians in systems constrained by vertex bottlenecks, improving the scaling in the system size from $\bigo{ N_L t}$ to $\bigo{ \sqrt{N_L} t}$. As a special case, when applied to the star graph (i.e., one vertex connected to $N$ leaves), we obtain an $\bigomega{\sqrt{N^{1-\delta}}}$ lower bound on the routing time and on the time to prepare $N/2$ Bell pairs between the vertices. We also show that, in systems of free particles, we can route optimally on the star graph in time $\bigtheta{\sqrt{N}}$ using Hamiltonian quantum routing, obtaining a speed-up over gate-based routing, which takes time $\bigtheta{N}$.
\end{abstract}

\maketitle

\section{Introduction}
The promise of wide-ranging applications of quantum computing has sparked interest in developing scalable quantum architectures. 
Realistic quantum architectures have constrained interactions, often with fixed connectivity~\cite{BCMS19, Kjaergaard20}. This connectivity is usually specified by a graph $G$, with vertices $V$ representing qubits, and edges $E$ representing pairs of qubits between which interactions are permitted. On the other hand, quantum algorithms, quantum error correction schemes, and other quantum information-processing protocols often require interactions between arbitrary pairs of qubits. To implement these interactions, quantum compilers effectively simulate all-to-all connectivity by moving quantum information,
at the cost of an overhead in the running time. This task, known as quantum routing, aims to make general quantum algorithms feasible on devices with connectivity constraints by efficiently implementing any permutation of qubits on a given architecture graph~\cite{Childs2019}.
Quantum routing can be performed with \swap{} gates using classical techniques \cite{Alon1994,cowtan, Wagner_Bärmann_Liers_Weissenbäck_2023, Childs2019, Li_Ding_Xie_2019}, or employing more general quantum operations such as continuous-time Hamiltonian evolution \cite{reversals,adv_lim} or unitary gates \cite{adv_lim}, possibly assisted by measurement and fast classical feedback \cite{teleportation, rosenbaum, Hahn_Pappa_Eisert_2019}.
Recent work has shown that the worst-case depth overhead of implementing arbitrary quantum circuits while respecting connectivity constraints is proportional to the circuit depth of worst-case routing \cite{yuan2024}, highlighting the importance of lower bounds and fast protocols for this task.

 Routing is closely related to the task of entanglement distribution, since by routing entangled qubits, we can distribute entanglement. 
The rate at which entanglement can be generated between parts of a system constrains our ability to perform many quantum information processing tasks, as entanglement is a critical resource for quantum algorithms \cite{Jozsa_1997, Jozsa_Linden_2003, Ekert_Jozsa_1998, Bruß_Macchiavello_2011, Zhao_Zhou_Childs_2024}, quantum error correction \cite{Bravyi_Lee_Li_Yoshida_2024, Bennett_DiVincenzo_Smolin_Wootters_1996, Knill_Laflamme_1997}, quantum communication \cite{Buhrman_communication, Gisin_Thew_2007, Horodecki_2009}, and quantum sensing \cite{Degen_Reinhard_Cappellaro_2017, Giovannetti_Lloyd_Maccone_2004, Bollinger_Itano_Wineland_Heinzen_1996}. The fundamental importance of entanglement and its application to a wide range of quantum information tasks has raised many questions and led to a large number of investigations into the dynamics of entanglement \cite{SIE, STE, flow, edss, bravyi_upper_bounds, Dür_Vidal_Cirac_Linden_Popescu_2001, Bennett_Harrow_Leung_Smolin_2003}. 

The fact that routing can be used to distribute entanglement means that bounds on entanglement dynamics can also be used to constrain the time taken to perform routing. Previous work has used this idea to bound routing times \cite{adv_lim} in terms of entanglement rates \cite{SIE} and entanglement capacities \cite{STE} across system bipartitions. These two major results on entanglement dynamics can be interpreted in terms of bottlenecks, which are of two kinds:
\begin{enumerate}
    \item Edge bottlenecks: The number of edges across any bipartition of the system constrains the rate at which the bipartitions can be entangled, by the small incremental entangling (SIE) theorem \cite{SIE}. Thus, in systems with a small number of edges connecting large subsystems, the routing time is large. An example of such system is a $d$-dimensional array, where the number of edges joining two halves of the system scales as $\bigtheta{N^{d-1}}$, which is asymptotically smaller than the number of vertices on each side ($\bigtheta{N^d}$). We refer to such a set of edges as an edge bottleneck.
    \item Vertex bottlenecks: A vertex bottleneck is a set of vertices, typically much smaller than the rest of the system, that constrains the interactions of the rest of the system. An archetypal example of a vertex bottleneck is found in the star graph $S_{N}$, shown in \cref{fig:star_graph}, 
    \begin{figure}[t]
    \centering
    \includegraphics[width=0.5\columnwidth]{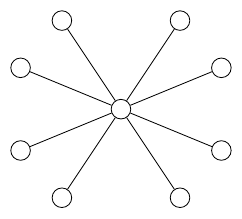}
    \caption{The star graph on 9 vertices, $S_9$.}
    \label{fig:star_graph}
\end{figure}
    which is the complete bipartite graph $K_{1,N-1}$ between one (center) vertex and $N-1$ leaf vertices. Intuitively, to implement interactions between qubits on the leaves, we must make use of the central vertex, which forms a bottleneck. On the other hand, this graph does not have a significant edge bottleneck, due to the large number of edges between the central vertex and the leaves. A general vertex bottleneck is depicted in \cref{fig:tripartite}, 
    \begin{figure}[t]
    \centering
    \includegraphics[width=\columnwidth]{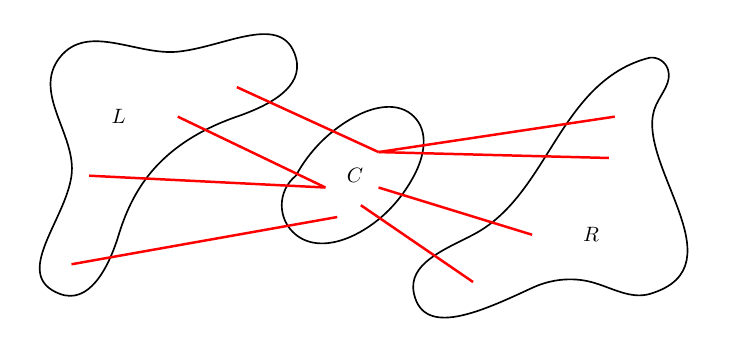}

    \caption{A tripartition with vertex bottleneck $C$. Parts $L, C, R$ contain $N_L, N_C, N_R$ qubits, respectively. By the tripartite connectivity constraint, there are no edges between qubits in $L$ and qubits in $R$.}
    \label{fig:tripartite}
\end{figure}
    which represents systems with a tripartite structure. Such systems can be divided into three parts $L, C, R$ such that edges are only present connecting $L$ to $C$ and $C$ to $R$. The small total entangling property \cite{STE} bounds the entanglement capacity, or total amount of entanglement that can be generated across a partition, in terms of the local dimension of the vertex boundary (sites on one side of a partition that are adjacent to sites on the other side), and therefore constrains the ability to generate entanglement through a vertex bottleneck.
\end{enumerate}
In the gate-based routing model, we make use of 2-qubit gates that act locally, obeying the connectivity constraints of the architecture. A special case of this is classical swap-based routing, where only connectivity-respecting swap gates are permitted. Alternatively, in the more powerful Hamiltonian routing model \cite{adv_lim}, we are permitted continuous-time evolution by an architecture-respecting Hamiltonian with norm-bounded interactions. In this work, we primarily investigate piecewise time-independent Hamiltonians.
The routing time of a model for a given graph $G$ is defined as the time taken to implement a worst-case permutation on $G$. In the gate-based model, this corresponds to the minimum circuit depth to implement the permutation, while in the Hamiltonian routing model this corresponds to the minimum evolution time required. These routing models are described in more detail in \cref{sec:prelims}.  

Using bounds on the entangling rate \cite{SIE}, tight bounds can be obtained for routing constrained by edge bottlenecks in both the gate-based and continuous-time evolution routing models. However, for routing through vertex bottlenecks, only entanglement bounds in the gate-based model \cite{STE} have been found to be tight~\cite{adv_lim}, following the natural assumption that no two gates can interact with the same qubit at the same time. Attempts at a full treatment of the entanglement dynamics in multipartite systems reveal surprisingly more complexity than in the bipartite case. For example, \textcite{edss} showed that in a system with a tripartition, entanglement can be generated between the two separated parts without ever entangling the intermediate region with the rest of the system. While this result defies the natural intuition that entanglement must flow between regions of a system, such separable entanglement generation is only possible if the initial state is a special mixed state. 
For systems starting in a pure state, a notion of entanglement flow between different subsystems was shown by \textcite{flow}. However, this result still provides only a trivial lower bound for routing through vertex bottlenecks.
 
For the star graph, the gate-based quantum routing time is $\bigtheta{N}$~\cite{adv_lim}.
However, in the Hamiltonian routing model,
which has previously only been shown to be constrained by edge bottlenecks,
the best known lower bound on the routing time is $\bigomega{1}$ (applying the result of \textcite{SIE} or a Lieb-Robinson bound \cite{QSLreview23}).  
This lower bound intuitively seems loose, since if it were indeed saturable,
this would allow for routing in constant time, independent of the system size.

Surprisingly, polynomial speedups for continuous-time evolution over the gate-based model are indeed possible, even in the simple setting of the star graph. For one of our main results, we investigate systems of free (i.e., non-interacting)  particles, where the edges of the underlying graph represent two-body hopping terms in the Hamiltonian. In systems of free particles, the gate-based model can be considered to allow architecture-respecting circuits of free-particle gates (i.e., unitaries generated by hopping and on-site interactions). The gate-based routing time on the star graph for free particles is $\bigtheta{N}$, by the Small Total Entangling lemma \cite{STE}, applied to circuits of free-particle gates. For these systems, we design a fast Hamiltonian routing protocol that takes time $t = \bigo{\sqrt{N}}$ for arbitrary permutations, which demonstrates a $\bigomega{\sqrt{N}}$ speedup over the gate-based model.
We further provide a concise proof that this protocol is optimal (for both free fermions and free bosons), showing a lower bound of $t = \bigomega{\sqrt{N}}$ on the routing time on the star graph for these systems. We note that this lower bound applies even to time-dependent Hamiltonian routing.

Returning to the more standard setting of qubit systems,
we significantly tighten the lower bound on the Hamiltonian routing time for the star graph to $\bigomega{N^{1/2-\delta}}$ for any $\delta>0$, when using piecewise time-independent Hamiltonians, or time-dependent Hamiltonians subject to a smoothness condition. This
addresses an open question of \textcite{adv_lim}. 
The first part of our proof leverages improved commutator bounds on Hamiltonian simulation for Trotter-Suzuki formulas  \cite{trotter_comm21} for random input states \cite{Trotter_Frob22}. This result utilizes the Frobenius norm, allowing us to leverage the fact that vertex-bottlenecked Hamiltonians exhibit tighter commutator scaling in the Frobenius norm compared to the operator norm used in entanglement dynamics bounds \cite{SIE} or Lieb-Robinson bounds \cite{QSLreview23}. In contrast to those results, we take advantage of the inherently state-independent task of performing large-scale routing to bound the \textit{average}-case routing time rather than just the worst case, which allows a more precise bound based on the Frobenius norm.  The technique behind our proof is inspired by the recent application of ``Frobenius light cone" bounds, which demonstrated the hardness of implementing the shift-by-one translation on a ring architecture \cite{shift}.

This result generalizes to arbitrary vertex bottlenecks, as illustrated in \cref{fig:tripartite}. For an arbitrary tripartition of $N_L, N_C, N_R$ qubits (with $N_L \geq N_R \geq N_C$), we show that the routing time is $\bigomega{N_R^{1-\delta}/\sqrt{N_L}N_C}$ for any $\delta>0$.

Our techniques further allow us to prove an upper bound on the entanglement capacity, or the amount of entanglement that can be generated or distributed through a vertex bottleneck. For a tripartition, we define the entanglement capacity for evolution of a state $\ket{\psi}$ by a given Hamiltonian $H$ for time $t$ as the increase in the amount of bipartite entanglement between the $L$ subsystem and the rest of the system. Existing results \cite{SIE, flow} bound the entanglement capacity as $\bigo{tN_LN_C}$ for any initial state $\ket{\psi}$. For the case where the input state is chosen from a 1-design ensemble, we show an improved upper bound of $\bigo{tN_C\sqrt{N_L} (\sqrt{N_LN_R/N_C} )^{\delta}}$, for any $\delta > 0$, on the entanglement capacity in systems with a tripartition. By averaging over input states, we obtain a tighter bound that depends on Frobenius norms of commutators between terms of the Hamiltonian, rather than the operator norm of the Hamiltonian as used in Refs.~\cite{SIE, flow}. We believe this result could be useful in bounding the time taken for more general quantum information processing tasks beyond routing.  For example, our methods may be useful in demonstrating the hardness of implementing specific unitary gates on a quantum processor. 

The remainder of the paper is organized as follows. In \cref{sec:prelims}, we introduce our models and technical background for the routing problem.  In \cref{sec:free_protocol}, we describe a fast routing protocol for systems of free fermions with star-graph connectivity, which performs routing in time $\bigo{\sqrt{N}}$. In \cref{sec:qubit_nogo} we discuss the challenges of converting this to a qubit routing protocol. In \cref{sec:free_bound}, we show a concise lower bound of $\bigomega{\sqrt{N}}$ on the routing time for fermions in systems with a vertex bottleneck, including the star graph, which demonstrates that our protocol in \cref{sec:free_protocol} is optimal. In \cref{sec:qubit_bound}, we show our main result bounding the routing time through vertex bottlenecks in systems of qubits. In \cref{sec:ent_cap}, we show how similar techniques provide a lower bound on the average entangling capacity in systems with a tripartition.  Finally, we discuss implications of our results and some open questions in \cref{sec:discussion}.

\section{Preliminaries}
\label{sec:prelims}

We let $\norm{\cdot}$ denote the spectral/operator norm and $\norm{\cdot}_p$ the Schatten $p$-norm. The (normalized) Frobenius norm of an operator $O$ is
\begin{equation}\label{eq:Frob}
    \norm{O}_{\rm F} \coloneqq \sqrt{\frac{\trace{O^\dagger O}}{D}},
\end{equation}
where $D$ is the total Hilbert space dimension. Note that $\norm{O}_{\rm F} = \norm{O}_2/\sqrt{D}$. With this normalization, $\norm{\mathbf{X}}_{\rm F}=1$ for any Pauli string $\mathbf{X}$. 

We specify the connectivity of a quantum system with 2-local interactions in the form of a graph $G=(V,E)$, with the vertices $V$ representing qubits/modes and the edges $E$ representing pairs of qubits/modes between which interactions are allowed.
In this paper, we are mostly concerned with graphs with a vertex bottleneck, as depicted in \cref{fig:tripartite}. 
Routing is the task of implementing arbitrary permutations from $\mathcal{S}_N$, the group of all permutations of $N$ elements, on qubits labeled from $1$ to $N$.
For any $p\in \mathcal{S}_N$, where $p(i)$ represents the destination of the $i^{\mathrm{th}}$ qubit, we define the permutation unitary $U_p$, such that, for any product state $\ket{\psi} = \ket{\psi_1}\ket{\psi_2}\dots  \ket{\psi_N}$,
\begin{equation}
    U_p \ket{\psi_1}\ket{\psi_2}\dots  \ket{\psi_N} = \ket{\psi_{p(1)}}\ket{\psi_{p(2)}}\dots  \ket{\psi_{p(N)}}.
\end{equation}

\subsection{Hamiltonian routing}

In the Hamiltonian routing model, we consider continuous-time evolution by a Hamiltonian $H$ that respects the connectivity constraints. Such an architecture-respecting evolution has no interaction terms between sites (qubits) that are not connected by an edge in the underlying graph $G$. Thus, any valid Hamiltonian can be written as 
\begin{equation}
    H = \sum_{(i,j) \in E}h_{ij} + \sum_{i\in V}h_i,
\end{equation}
where $h_{ij}$ is only supported on qubits $i,j$ (i.e., acts as identity on all other qubits). We further impose the constraint $\norm{h_{ij}} \leq 1$ in order to ensure that the timescale of two-site interactions is comparable to the time required for a 2-qubit gate in the gate-/swap-based routing models.
Unitary evolution by a Hamiltonian $H$ is given by the time-evolution operator
\begin{equation}
    U(H,t) \coloneqq \mathcal{T} e^{-\ii \int_0^t H(t')\,\dd t'},
\end{equation}
where $\mathcal{T}$ is the time-ordering operator. When $H$ is time-independent, $U(H,t) = e^{-\ii Ht}$. 

The \emph{Hamiltonian routing time} is the minimum time to implement a given permutation in this model, defined as 
\begin{equation}
    \hrt{G,p} \coloneqq \min_H \{t \textrm{ s.t. } U(H,t) = U_p\}
\end{equation}
for any $p \in \mathcal{S}_N$.
We define the worst-case Hamiltonian routing time for a graph $G$ as
\begin{equation}
    \hrt{G} \coloneqq \max_{p \in \mathcal{S}_{N}} \hrt{G,p}.
\end{equation}

In this work, we primarily consider time-independent or piecewise time-independent $H$.
\begin{definition}[Piecewise time-independent Hamiltonian]
\label{def:piecewise}
    $H$ is said to be piecewise time-independent with minimum segment width $\Delta$ if it can be written as 
    \begin{equation} 
    H = \sum_{i=0}^{k-1} H_i \cdot \mathbb{I}_{t\in [t_i, t_{i+1})},
    \end{equation}
    where $t_k = t$, $H_i$ is time-independent, $\mathbb{I}_{t\in [t_i, t_{i+1})}$ is an indicator function for ${t\in [t_i, t_{i+1})},$ and $\forall \, i,t_{i+1}-t_i \geq \Delta$.
\end{definition}

Similarly to the time-dependent case, we define the Hamiltonian routing time for piecewise time-independent Hamiltonians with minimum segment width $\Delta$ as 
\begin{equation}
    \hrtd{G,p} \coloneqq \min_H \{t \textrm{ s.t. } U(H,t) = U_p\}
\end{equation}
for any $p \in \mathcal{S}_N$.
We define the worst-case Hamiltonian routing time for piecewise time-independent Hamiltonians with minimum segment width $\Delta$ for a graph $G$ as
\begin{equation}
    \hrtd{G} \coloneqq \max_{p \in \mathcal{S}_{N}} \hrtd{G,p}.
\end{equation}

\subsection{Gate-based routing}

The gate-based routing model is a variant of the Hamiltonian routing model where we are restricted to performing 1-qubit and 2-qubit gates that respect the connectivity constraints (i.e., no gates are permitted between qubits not connected by an edge). Gates on disjoint pairs of vertices may be applied simultaneously. In this model, the routing time for a given permutation is the minimum depth to implement a given permutation $p$ by an architecture-respecting circuit,
\begin{equation}
    \rt{G,p} \coloneqq \min_U \textrm{depth}(U) \textrm{ s.t. } U = U_p.
\end{equation} 
Similar to the Hamiltonian routing model, we also define the \emph{worst-case routing time} for a graph $G$ as
\begin{equation}
    \rt{G} \coloneqq \max_{p \in \mathcal{S}_{N}} \rt{G,p}.
\end{equation}

The gate-based model generalizes classical routing, which is restricted to circuits composed of architecture-respecting swap gates \cite{Alon1994}.

\subsection{Free-particle routing}
An alternative type of Hamiltonian routing is free-particle routing (of bosons or fermions).

In a fermionic model, each vertex of $G$ represents a local fermionic mode, which can be either empty $(\ket{0})$ or occupied by a fermion $(\ket{1})$. Fermionic states and interactions are represented in terms of creation and annihilation operators on each mode $(c_j^{\dagger}, c_j)$ \cite{bravyi2002fermionic}.

The annihilation operator acts as follows:
\begin{subequations}\begin{align}
    &c_j \ket{n_0, n_1, \dots, n_{j-1}, 1, n_{j+1}, \dots, n_N} \nonumber \\
    &\qquad= (-1)^{\sum_{k=0}^{j-1} n_k}\ket{n_0, n_1, \dots, 0_j, \dots, n_N}, \\
    &c_j \ket{n_0, n_1, \dots, n_{j-1}, 0, n_{j+1}, \dots, n_N} = 0.
\end{align}\end{subequations}
These operators obey the fermionic anticommutation relations
\begin{equation}
    \{c_j, c_k \} = \{c_j^{\dagger}, c_k^{\dagger} \} = 0, ~ \{c_j, c_k^{\dagger}\} = \delta_{j,k} .
\end{equation}

We also consider routing in systems of free (non-interacting) bosons. Similarly to the fermionic case, states are characterized as Fock states, i.e., in the occupancy number basis $\ket{n_i}$ where $n_i$ is the occupancy of the $i^{th}$ mode (situated on  site $i$). The bosonic creation and annihilation operators are $(a_j^{\dagger}, a_j)$. The annihilation operator acts as follows:
\begin{align}
    &a_j \ket{n_0, n_1, \dots, n_{j-1}, n_j, n_{j+1}, \dots, n_N} \nonumber \\
    &\qquad=  \sqrt{n_j} \ket{n_0, n_1, \dots, n_j-1, \dots, n_N}. 
\end{align}

These operators obey the bosonic commutation relations
\begin{equation}
    [a_j, a_k ] = [a_j^{\dagger}, a_k^{\dagger} ] = 0, ~ [a_j, a_k^{\dagger}] = \delta_{j,k} .
\end{equation}

Any fermionic or bosonic Hamiltonian can be written as a sum of products of creation and annihilation operators. An architecture-respecting fermionic/bosonic Hamiltonian can be written as a sum of products of these operators on sites that are connected by an edge in $G$, and have bounded coefficients. In a free (non-interacting) model, the Hamiltonian only contains hopping and on-site terms (in \cref{sec:discussion}, we comment on the possibility of allowing for pairing). 
In particular, a free-particle Hamiltonian can be written as
\begin{equation}
    \label{eq:free_fermionic_H}
    H = \sum_{i \in V} h_i c_i^{\dagger}c_i + \sum_{(i, j) \in E} \left(h_{ij} c_i^{\dagger}c_j + \hc\right),
\end{equation}
where $|h_{ij}| \leq 1$, $h_i, h_{ij}$ may have arbitrary time dependence, and $c_i$ are all fermionic or all bosonic.

A permutation $p \in \mathcal{S}_N$ maps each mode on site $i$ to site $p(i)$. Therefore, $U_p$ maps the set of annihiliation operators $\{c_1, c_2, \dots\}$ to $\{c_{p(1)}, c_{p(2)}, \dots\}$.
For particles evolving by a free Hamiltonian with modes $b_i$, the time-evolved creation and annihilation operators at a given time $t$ can be written as $b_i(t) = \sum_{j} A_{ij}(t) b_j(0)$ and $b^\dagger_i(t) = \sum_{j} A_{ij}^*(t) b^\dagger_j(0)$ for some $A_{ij}(t) \in \mathbb{C}$. The dynamics of $A_{ij}(t)$ are the same for both bosons and fermions (and for a single particle). Hence, any protocol that performs routing for free fermions yields an identical routing protocol for free bosons and vice versa. 

We define the free-particle routing time as the minimum time to implement a given permutation using an architecture-respecting free-particle Hamiltonian:
\begin{equation}
    \hrtf{G,p} \coloneqq \min_H [t \textrm{ s.t. } U(H,t) = U_p],
\end{equation}
where $U(H,t)$ is the time-evolution operator by $H$ for time $t$.
We define the \emph{worst-case free-particle routing time} for a graph $G$ as
\begin{equation}
    \hrtf{G} \coloneqq \max_{p \in \mathcal{S}_{N}} \rtf{G,p}.
\end{equation}
Since the dynamics of the creation and annihilation operators are identical for bosons and fermions, the routing time does not depend on the particle type.

One can also define a gate-based routing model for free particles. Allowed two-mode gates between mode $i$ and mode $j$ for $(i,j) \in E$ are the unitaries generated by $h_i c_i^\dagger c_i + h_j c_j^\dagger c_j + (h_{ij} c^\dagger_i c_j + \hc)$ for time-dependent $h_i$, $h_j$, and $h_{ij}$. As in the case of qubits, gates on disjoint pairs of vertices may be applied simultaneously, and application of a single layer of allowed simultaneous gates takes depth (or time) 1. 
We define the gate-based free-particle routing time $\rtf{G,p}$ as the minimum depth to implement a given permutation $p$ using an architecture-respecting free-particle circuit (i.e., a circuit composed of free-particle gates).
We define the \emph{worst-case} gate-based free-particle routing time for a graph $G$ as $\rtf{G} \coloneqq \max_{p \in \mathcal{S}_N} \rtf{G,p}$.

\subsection{Transpositions}
\begin{definition}[Graph Tripartition]
    A tripartition of a graph $G$ is a partitioning of the vertices of $G$ into three sets $L, C, R$ with $N_L, N_C, N_R$ vertices, respectively, such that there are no edges connecting vertices in $L$ to vertices in $R$. Without loss of generality, $L$ is taken to be the larger of $L,R$, i.e., $N_L \geq N_R$.
\end{definition}
A graph tripartition is depicted in \cref{fig:tripartite}.
\begin{definition}[Vertex bottleneck]
    In a graph with a tripartition, $C$ is a vertex bottleneck if $N_C \leq N_R$. 
\end{definition}
Throughout this paper, we consider tripartitions with a vertex bottleneck, i.e., we assume everywhere that $N_L \geq N_R \geq N_C.$
We focus on permutations that perform the maximum possible number $N_R$ of pairs of disjoint swaps between sites through a general vertex bottleneck (\cref{fig:tripartite}). On the star graph $S_{N+1}$, such a permutation can be obtained by dividing the leaves into two sets $L,R$ of $N/2$ qubits each, and swapping each qubit of $L$ with a corresponding qubit of $R$. We call this a global transposition, and it is depicted in \cref{fig:star_transposition}.   
\begin{figure}[t]
    \centering
\includegraphics[width=0.6\columnwidth]{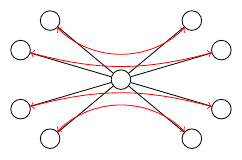}
\caption{A global transposition permutation on the star graph. The black lines denote the connectivity constraints. The red lines with arrows indicate pairs of qubits to be swapped. }
\label{fig:star_transposition}
\end{figure}

Our investigation of such permutations is motivated by the fact that they can be used at most twice 
to implement an arbitrary permutation.
Any cycle, or set of non-overlapping cycles, can be performed using two global transpositions. Since any permutation can be decomposed into a product of non-overlapping cyclic permutations, arbitrary permutations can be performed using two global transpositions. For example, the cyclic permutation $1\rightarrow 2,2\rightarrow 3,\ldots, 2n \rightarrow 1$, which we denote by $(1\; 2\ldots 2n)$, can be decomposed as a product of two global transpositions: 
\begin{align}
    (1\; 2\ldots 2n)&=(2\; 2n)(3\; 2n-1)\cdots (n\; n+2)\nonumber\\
    &\quad \times (1\; 2n)(2\;2n-1)\cdots(n\; n+1),
\end{align}
where the two lines on the right-hand side each consist of non-overlapping pairwise transpositions.
This is illustrated in \cref{fig:transposition}.
On the star graph, any permutation can be decomposed into a permutation that acts only on the leaves, followed by a single final swap with the center node. Hence, the time taken for arbitrary permutations on the star graph is predominantly determined by the time to do a global transposition.
\begin{figure}[t!]
    \centering
\includegraphics[width=0.6\columnwidth]{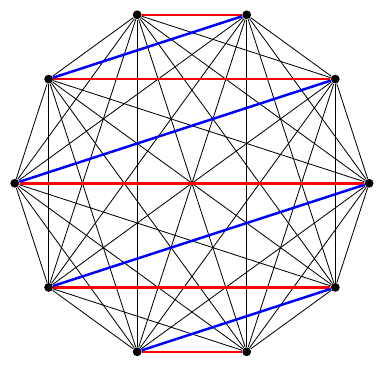}
\caption{
Any cyclic permutation can be decomposed into two stages of transpositions. For example, a clockwise cycle of all vertices can be achieved by a global transposition between pairs of sites connected by blue edges followed by a global transposition for red edges. To perform an anti-clockwise cycle, simply transpose along the red edges and then the blue edges.}
    \label{fig:transposition}
\end{figure}

\subsection{Routing and entanglement distribution}\label{sec:routing_entanglement}
Routing can be used to distribute entanglement, as depicted in \cref{fig:routing_entanglement}. This fact allows us to lower bound the routing time using bounds on the time to generate entanglement. We make use of two such bounds. 
\begin{figure}
    \subfloat[Initial state with local bell pairs]{\label{fig:entanglement_same} \includegraphics[width=0.9\columnwidth]{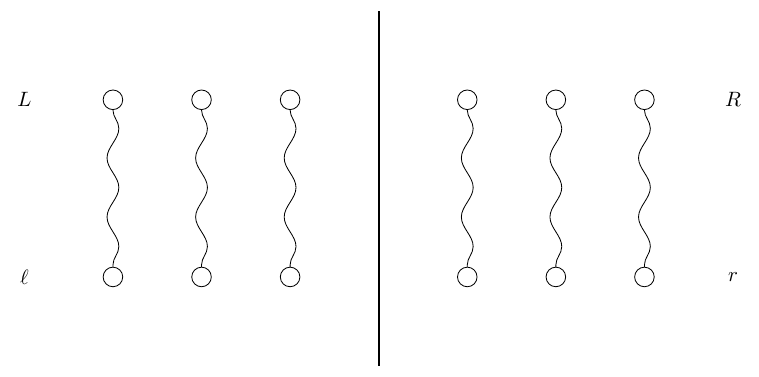}}
    \hfill %%
    \subfloat[Bell pairs distributed after routing]{\label{fig:entanglement_opposite} \includegraphics[width=0.9\columnwidth]{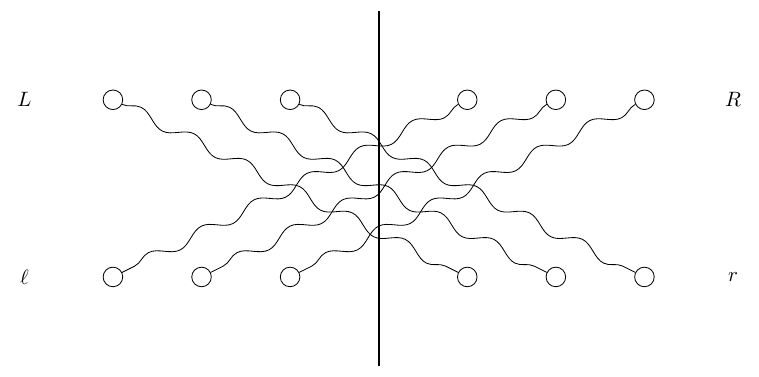}}
    \caption{Routing enables entanglement distribution. (a) The system has two parts $L, R$ that are initially entangled with hidden auxiliary systems $l, r$, respectively. (b) On swapping every qubit in $L$ with the corresponding qubit in $R$, the $L$ system becomes entangled with $r$, and $R$ becomes entangled with $l$, so we have generated 6 ebits of entanglement across the cut dividing $L$ and $R$.}
    \label{fig:routing_entanglement}
\end{figure}
The small incremental entangling theorem, whose conjecture is attributed to Kitaev in Ref.~\cite{bravyi_upper_bounds} and which is proved in Ref.~\cite{SIE}, bounds the rate at which entanglement can be generated between parts of a system in terms of the local dimension and the operator norm of the Hamiltonian coupling the parts. We restate this result here:
\begin{lemma}[Small Incremental Entangling (SIE)]
    \label{lem:SIE}
    Given a Hamiltonian $H = H_{Aa} + H_{Bb} + H_{AB}$ that acts on a system consisting of subsystems $a, A, B, b$ (\cref{fig:sie}), for any state $\rho$,
    \begin{align}
        \frac{\dd S_{aA}(\rho)}{\dd t} \le c \|H_{AB}\|\log(d), 
    \end{align}
    where $S_{Aa}(\rho) = -\trace{\rho_A \log \rho_A}$ is the entanglement entropy of $\rho$ across the $aA,Bb$ bipartition, $c$ is some positive constant ($c=2$ in \cite{Audenaert_2014}), and $d = \min\{\dim (A), \dim (B)\}$.
\end{lemma}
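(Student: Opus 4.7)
The plan is to reduce to the minimal bipartite setting and then execute a variational bound that exploits the restricted support of the interaction. First, $H_{Aa}$ is supported on $aA$ and $H_{Bb}$ on $Bb$, so both are local with respect to the $aA \vert Bb$ cut. Local unitaries preserve the reduced density matrix on either side up to a one-sided unitary and therefore do not change $S_{aA}$; a first-order-in-$\dd t$ argument then shows they contribute nothing to $\dd S_{aA}/\dd t$. Purifying $\rho$ with an environment on which $H$ acts trivially further lets me assume $\rho = \ket{\psi}\bra{\psi}$ without enlarging $\dim A$ or $\dim B$.

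Next, I would differentiate explicitly. From $\dd \rho_{aA}/\dd t = -\ii\,\traceOp_{Bb}[H_{AB}, \ket{\psi}\bra{\psi}]$, cyclicity of the trace, and Hermiticity of $H_{AB}$ and $\log \rho_{aA}$,
\begin{equation}
    \frac{\dd S_{aA}}{\dd t} = 2\,\Im \bra{\psi} H_{AB} \bigl(\log \rho_{aA} \otimes I_{Bb}\bigr) \ket{\psi}.
\end{equation}
The Schmidt decomposition of $\ket{\psi}$ across $aA \vert Bb$ supplies the key algebraic identity $\bigl(\log \rho_{aA} \otimes I_{Bb}\bigr)\ket{\psi} = \bigl(I_{aA} \otimes \log \rho_{Bb}\bigr)\ket{\psi}$, so either operator (or any convex combination) may be substituted to the right of $H_{AB}$.

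The heart of the proof is then the variational bound $|\Im \bra{\psi} H_{AB} M \ket{\psi}| = \bigo{\norm{H_{AB}} \log d}$ with $d = \min(\dim A, \dim B)$. A direct Cauchy-Schwarz estimate yields $\norm{M \ket{\psi}}^2 = \sum_i \lambda_i (\log \lambda_i)^2 = \bigo{(\log D)^2}$ in the Schmidt rank $D$ across $aA \vert Bb$, which can be much larger than $\log d$ once ancillas are attached. To obtain the correct scaling, I would expand the matrix element in the Schmidt basis, use that $H_{AB}$ acts as identity on $ab$ to factor out the ancilla indices, and reduce the resulting maximization over the Schmidt-squared distribution to a one-dimensional convex problem in which the logarithm is replaced by a linear upper envelope. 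This produces the sharp constant $c = 2$ recorded in Audenaert's version of the lemma.

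The main technical obstacle is precisely this $\log D \to \log d$ improvement. Generic Hölder or Cauchy-Schwarz estimates treat $H_{AB}$ as an arbitrary operator on the full Hilbert space, whereas $H_{AB}$ is constrained to act on $A \otimes B$, so after tracing out the ancillas the object paired with $H_{AB}$ effectively lives on a space of dimension at most $d^2$. Making this localization explicit through the Schmidt bookkeeping, rather than absorbing it into general operator inequalities, is the step that produces the tight $\log d$ prefactor and historically resolved the Kitaev conjecture.
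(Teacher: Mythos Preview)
The paper does not prove this lemma; it is quoted as a known result, attributed to Van~Acoleyen, Mari\"en, and Verstraete (with the sharp constant $c=2$ due to Audenaert). There is therefore no ``paper's own proof'' to compare against.

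As for your sketch itself: the reductions (dropping the local terms $H_{Aa}$, $H_{Bb}$, purifying $\rho$) and the derivative identity $\dd S_{aA}/\dd t = 2\,\Im\bra{\psi}H_{AB}(\log\rho_{aA}\otimes I)\ket{\psi}$ are standard and correct, and you have correctly identified the crux as the improvement from $\log D$ to $\log d$. However, the actual mechanism used in the cited references is not quite what you describe. Van~Acoleyen et~al.\ do not directly bound the matrix element by Schmidt bookkeeping and a one-dimensional convex reduction; they instead prove an auxiliary ``small incremental mixing'' statement for classical mixtures of density matrices and then deduce SIE from it via a doubling trick. Audenaert's sharpening to $c=2$ goes through his quantum skew divergence inequality rather than a linear-envelope argument on the Schmidt spectrum. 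Your last paragraph gestures at the right difficulty but the concrete tool you propose (replacing the logarithm by a linear upper envelope after factoring out the ancilla indices) is not, by itself, enough to close the gap; one genuinely needs an operator-concavity input of the Lieb type, which is what both cited proofs supply in different guises.
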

\begin{figure}[t!]
    \centering
\includegraphics[width=0.9\columnwidth]{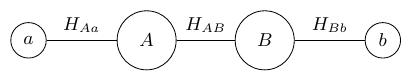}
\caption{Partitioning of system for SIE.}
    \label{fig:sie}
\end{figure}
We also make use of the small total entangling property \cite{STE, adv_lim}, which bounds the total amount of entanglement between parts of a system that can be generated by any unitary evolution, in terms of the local dimension of the parts.
\begin{lemma}[Small Total Entangling (STE) (Proposition 2 of Ref.~\cite{STE})]\label{lem:STE}
    In a system consisting of subsystems $a, A, B, b$, for any unitary acting only on $A$ and $B$, the maximum change in the entanglement entropy of the $Aa$ subsystem is bounded as
    \begin{equation}
        \Delta S_{Aa}(\rho) \leq 2 \log(\min\{\dim (A), \dim (B)\}).
    \end{equation}
\end{lemma}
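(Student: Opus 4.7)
The plan is to reduce to pure states by purification and then combine subadditivity with the Araki--Lieb (triangle) inequality, exploiting the fact that $U$ leaves both reduced states $\rho_a$ and $\rho_b$ invariant. First, given a possibly mixed $\rho=\rho_{aABb}$, I adjoin a reference system $R$ and a purification $\ket{\Psi}_{aABbR}$ with $\rho=\text{tr}_R\ketbra{\Psi}{\Psi}$. The reduction to $Aa$ is the same whether computed from $\rho$ or from $\ket{\Psi}$, and since $U$ acts only on $AB$ it commutes with $\text{tr}_R$; hence it suffices to prove the bound for pure global states, and I absorb $R$ into $b$ for notational convenience.

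For a pure $\ket{\Psi}_{aABb}$ with $\ket{\Psi'}=(I_{ab}\otimes U_{AB})\ket{\Psi}$, cyclicity of the partial trace over $AB$ gives $\rho'_a=\rho_a$ and $\rho'_b=\rho_b$, so $S_a$ and $S_b$ are invariant under the evolution. Applying subadditivity $S_{Aa}\le S_A+S_a$ to the final state and Araki--Lieb $S_{Aa}\ge S_a-S_A$ to the initial state, for the bipartite reduction on $aA$, gives
\begin{equation*}
S_{Aa}(\rho')\le S_a(\rho)+S_A(\rho'),\qquad S_{Aa}(\rho)\ge S_a(\rho)-S_A(\rho),
\end{equation*}
and subtraction yields $\Delta S_{Aa}\le S_A(\rho)+S_A(\rho')\le 2\log\dim A$.

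The companion bound $\Delta S_{Aa}\le 2\log\dim B$ follows from the pure-state identity $S_{Aa}=S_{Bb}$ (Schmidt decomposition across the $aA\mid Bb$ cut), which implies $\Delta S_{Aa}=\Delta S_{Bb}$, combined with the analogous subadditivity/Araki--Lieb calculation on the $(B,b)$ reduction using invariance of $S_b$. Together these give $\Delta S_{Aa}\le 2\log(\min\{\dim A,\dim B\})$, and the reverse direction $-\Delta S_{Aa}\le 2\log d$ follows by rerunning the same argument on $U^{\dagger}$ starting from $\rho'$. The main conceptual point is that the invariance of $\rho_a$ (and $\rho_b$) under $U$ is precisely what allows the subadditivity estimate on $\rho'$ and the Araki--Lieb estimate on $\rho$ to combine without slack; the only subtlety worth flagging, rather than a real obstacle, is that the $\dim B$ half of the bound genuinely needs the purification step, since it relies on the pure-state symmetry $S_{Aa}=S_{Bb}$ which is not available for a generic mixed state.
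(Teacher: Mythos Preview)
Your proof is correct. The paper does not supply its own proof of this lemma but merely cites it as Proposition~2 of Mari\"en et al.\ (Ref.~\cite{STE}); your argument via purification, subadditivity, and the Araki--Lieb inequality is the standard route to this result.
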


In systems of free particles, we quantify bipartite entanglement by the \emph{mode entanglement} of the underlying Fock state \cite{Shi_2003, Zanardi_2002}, which is the von Neumann entropy obtained by tracing out a subset of the modes.

\begin{definition}[Mode entanglement]
\label{def:mode_entanglement}
    For a state of identical particles $\rho$ (i.e., a density matrix in the occupation number basis) with modes partitioned into $X$ and $\bar X$, the mode entanglement is
    \begin{equation}
        S_X(\rho) = -\trace{\rho_X \log \rho_X}.
    \end{equation}
\end{definition}

\section{Routing free particles} \label{sec:free_particles}

\subsection{An optimal protocol for routing free particles} \label{sec:free_protocol}
Here we consider the problem of routing free particles on a star graph. 
We give a protocol that can route $N$ free fermions on the star graph in time $\sqrt{N}$. Like the $W$-state-based routing protocol on the vertex barbell graph (\cref{fig:vertex_barbell}) of~\textcite{adv_lim}, this makes use of states spread over multiple sites (here, the leaves) and transfers them one-by-one through the central vertex.

\begin{theorem}
    $\hrtf{S_N} = \bigo{\sqrt{N}}$.
\end{theorem}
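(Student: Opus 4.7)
The plan is to construct an explicit free-particle routing protocol on the star graph that implements any leaf permutation in time $\bigo{\sqrt{N}}$. Because the Hamiltonian is free, its evolution is determined entirely by a single-particle unitary on $\mathbb{C}^N$, and the architecture constraint amounts to using generator matrices whose only off-diagonal entries are in the row and column of the center and satisfy $|h_{ij}|\le 1$. By the two-global-transpositions decomposition described in \cref{sec:prelims}, it suffices to implement one global transposition --- consisting of $m\approx (N-1)/2$ pairwise leaf swaps $\ket{a_j}\leftrightarrow\ket{b_j}$ --- in time $\bigo{\sqrt{N}}$.

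The key idea is to block-diagonalize the global transposition by a discrete Fourier transform within each leaf group. With $\omega=e^{2\pi\ii/m}$, define the spread modes
\[
 \ket{F_k^{(L)}}=\tfrac{1}{\sqrt{m}}\textstyle\sum_{j=1}^m \omega^{jk}\ket{a_j},\quad
 \ket{F_k^{(R)}}=\tfrac{1}{\sqrt{m}}\textstyle\sum_{j=1}^m \omega^{jk}\ket{b_j}.
\]
Swapping $\ket{F_k^{(L)}}\leftrightarrow\ket{F_k^{(R)}}$ for every $k=0,\ldots,m-1$ is exactly the same unitary as swapping each $\ket{a_j}\leftrightarrow\ket{b_j}$. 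Each two-dimensional Fourier swap is performed through the center via the ancilla identity $\mathrm{SWAP}_{AB}=\mathrm{SWAP}_{0A}\mathrm{SWAP}_{0B}\mathrm{SWAP}_{0A}$, where the rotation $\ket{0}\leftrightarrow\ket{F_k^{(L)}}$ is driven by the star-graph Hamiltonian $H_k^{(L)}=\sum_j \omega^{-jk}\ketbra{0}{a_j}+\hc$ (and likewise $H_k^{(R)}$). The couplings have unit modulus, so $|h_{ij}|\le 1$ is satisfied; yet $\|H_k^{(L)}\|=\sqrt{m}$, so a $\pi$-pulse takes time only $\bigo{1/\sqrt{N}}$, and performing all $m$ three-step swaps in sequence gives total time $\bigo{\sqrt{N}}$. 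Different $k$-pulses do not interfere because each $H_k^{(L)}$ annihilates every other Fourier mode (the sum $\sum_j \omega^{j(k'-k)}$ vanishes for $k'\ne k$), and the same protocol routes free bosons with the same time bound since $A_{ij}(t)$ obeys identical dynamics for the two statistics.

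The main point I expect to handle carefully is phase bookkeeping. A $\pi$-pulse of a $\sigma_x$-type generator acts as $-\ii\sigma_x$ on its two-dimensional block, so the three-step sandwich produces a swap accompanied by a $-1$ sign on both Fourier modes and on the center. After all $m$ Fourier swaps, these corrections amount to an easily computable diagonal of $\pm 1$ phases on the leaves and center, which can be removed by on-site phase rotations using the unbounded on-site terms $h_i$ in $\bigo{1}$ additional time. The only real subtlety is verifying that the three-step sandwich for pair $k$ acts trivially on the Fourier pairs $k'\ne k$, which is where I would invest the most care; once this orthogonality is nailed down, summing the $m$ swap times yields the claimed $\bigo{\sqrt{N}}$ upper bound.
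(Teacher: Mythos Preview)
Your proposal is correct and follows essentially the same approach as the paper: define Fourier modes on the two leaf groups, swap the $k$th left and right Fourier modes through the center via the three-step $\mathrm{SWAP}_{0A}\mathrm{SWAP}_{0B}\mathrm{SWAP}_{0A}$ identity using the star-graph coupling $H_k=\sum_j \omega^{-jk}\ketbra{0}{a_j}+\hc$ (equivalently $\sqrt{m}\,c^\dagger f_{l,k}+\hc$), each taking time $\bigo{1/\sqrt{m}}$, and iterate over all $m$ Fourier modes. Your phase bookkeeping is in fact more careful than the paper's, which simply asserts that the three-step sandwich ``effectively swaps'' the Fourier modes without tracking the $-\ii$ factors; your proposed on-site phase corrections are exactly what is needed to make the protocol implement $U_p$ on the nose.
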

\begin{proof}
We consider the star graph $S_{2N+1}$, i.e., a star graph with an even number of leaves. To see that this is sufficient, consider any permutation $p$ on a star graph with an odd number of leaves. We can select some leaf $i$, and perform a modified permutation $p'$ which is identical to $p$ except when acting on sites $i$ and $p^{-1}(i)$. The permutation $p'$ instead maps $p^{-1}(i)$ to $p(i)$ and leaves $i$ unchanged, and is therefore a permutation on an even number of leaves. By first performing $p'$, and then swapping qubits $i$ and $p(i)$, we can perform the permutation $p$. Hence it suffices to consider a star graph with an odd number of vertices (i.e., an even number of leaves), since for any permutation $p$ on a star graph with an odd number of leaves $2N+1$, the routing time is only a constant (3) greater than the routing time for a corresponding permutation on a subgraph with an even number of leaves $2 N$.

As explained in \cref{sec:prelims}, an arbitrary permutation can be implemented by application of at most two global transpositions. For any global transposition, the leaf vertices can be divided into two sets $L$ (left) and $R$ (right) with $N$ leaves on each side, such that the transposition swaps each qubit in $L$ with the corresponding qubit in $R$. Let the pairs of creation and annihilation operators on each left (right) leaf $i$ be $l_i^\dagger$ and $l_i$ ($r_i^\dagger$ and $r_i$), respectively. On the center, we denote them as $c^\dagger$ and $c$.

The left Fourier modes are $f_{l,0}, \dots, f_{l,N-1}$, where 
\begin{equation} 
f_{l,k} \coloneqq \frac{1}{\sqrt{N}} \sum_j e^{-2\pi\ii kj/N} l_j.
\end{equation}
Likewise, we denote the right Fourier modes as $f_{r,0}, \dots, f_{r,N-1}$.

We divide our protocol into $N$ time steps, numbered $0$ to $N-1$.
At the $k$th time step, we perform the following operations:
\begin{itemize}
    \item  First, turn on coupling $H_{L,k} = \sum_j e^{-2\pi\ii j k} c^\dagger l_j + \hc$ for time $\frac{\pi}{2\sqrt{N}}$. This Hamiltonian can be equivalently written as $\sqrt{N} (c^\dagger f_{l,k} + \hc)$ This swaps the center mode with the $k$th Fourier mode on the left. This is illustrated in \cref{fig:Fourier_left}.
    \item  Repeat the same step but with the right leaves using the coupling $H_{R,k} = \sum_j e^{-2\pi\ii j k} c^\dagger r_j + \hc$ This interaction swaps $f_{l,k}$ (which sits at the center following the first step) with $f_{r,k}$ such that $f_{l,k}$ is now on the \textit{right} leaves, $f_{r,k}$ is on the center, and $c$ has been moved to the $k$th left Fourier mode. This is illustrated in \cref{fig:Fourier_right}.
    \item Finally, perform the coupling $H_{L,k} = \sum_j e^{-2\pi\ii j k} c^\dagger l_j + \hc$ with the left leaves again to swap $c$ and $f_{r,k}$. We have now effectively swapped Fourier modes $f_{l,k}$ and $f_{r,k}$, with $c$ remaining on the center. This is illustrated in \cref{fig:Fourier_left}.
\end{itemize}
\begin{figure}
    \subfloat[Left swap with center]{\label{fig:Fourier_left} \includegraphics[width=0.9\columnwidth]{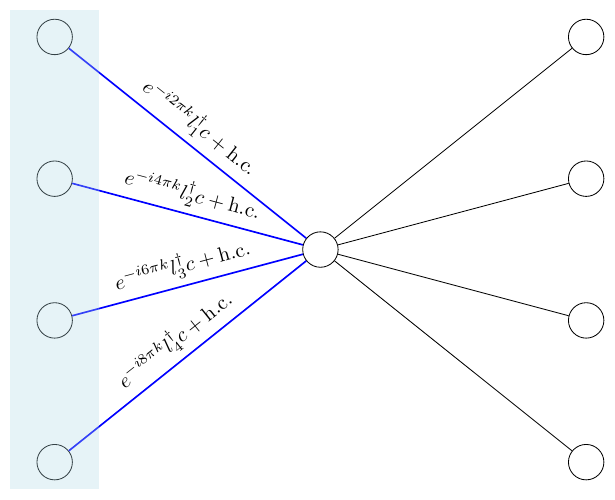}}
    \hfill %%
    \subfloat[Right swap with center]{\label{fig:Fourier_right} \includegraphics[width=0.9\columnwidth]{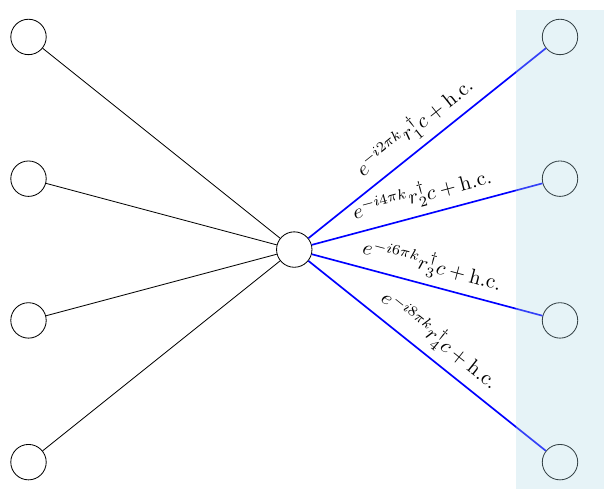}}
    \caption{Swapping the $k^{th}$ Fourier modes on the left and right sets of leaves.}
\end{figure}
Since both Fourier modes and spatial modes form orthonormal bases, swapping all of the corresponding Fourier modes between the left and the right vertices is equivalent to swapping each of the corresponding spatial modes. Since each step of this protocol swapping one Fourier mode at a time takes time $\bigo{\frac{1}{\sqrt{N}}}$, the whole protocol takes time $\bigo{\sqrt{N}}$.
As discussed in \cref{sec:prelims}, the dynamics of free bosonic and free fermionic creation and annihilation operators are identical. Hence this protocol performs routing for both bosons and fermions.
\end{proof}
 
\subsection{Routing qubits}
\label{sec:qubit_nogo}
 A natural question is whether a similar protocol exists for qubits. Unfortunately, simply converting the protocol above to a qubit protocol does not work, as we now explain.
 
 First, consider the fermionic version of this protocol. Applying the Jordan-Wigner \cite{Jordan_Wigner_1928} transform to convert the fermionic operators to qubit operators yields non-local terms, so the generated qubit Hamiltonian does not respect the architecture's geometry and hence does not produce a valid protocol. Even if we permitted ancillary qubits, using an approach such as the Verstraete-Cirac encoding \cite{Verstraete_Cirac_2005} to ensure that we respected locality would introduce a large number of ancillas ($\bigtheta{N}$) on the central qubit due to the large number of non-commuting terms in the Hamiltonian.
 
 Now we consider whether a bosonic protocol can be converted to a qubit protocol. One approach is to consider the bosonic protocol restricted to an initial state where each mode has occupancy at most 1, and replace the bosonic creation and annihilation operators with spin-1/2 raising and lowering operators, respectively. Another similar strategy to convert a bosonic protocol to a qubit routing protocol is by truncating the bosonic operators to a low  (constant) occupancy subspace by replacing the bosonic operators with qudit operators, which may yield an approximate protocol for qubits with ancillas. 
 
 Unfortunately, such an approach is infeasible.
 Consider a bosonic system starting in the uniform superposition of all bitstrings of length $N$ with Hamming weight $N/2$, i.e., the weight $N/2$ Dicke state $\ket{W^N_{N/2}}$.
Since every mode has occupancy at most 1, this is a valid initial state for a qubit system as well.
As before, we denote the left Fourier modes as $f_{l,0}, \dots, f_{l,N-1}$, where 
\begin{equation} 
f_{l,k} \coloneqq \frac{1}{\sqrt{N}} \sum_j e^{-2\pi\ii kj/N} l_j.
\end{equation}
Likewise, we denote the right Fourier modes as $f_{r,0}, \dots, f_{r,N-1}$.
Then, a simple calculation shows that the expected occupancy of the zeroth Fourier mode $f_{l,0}$ is $\bra{W_{N/2}^N}  f_{l,0}^\dagger f_{l,0} \ket{W_{N/2}^N} = \frac{\sqrt{N+2}}{4}$. Hence, if we were to truncate the protocol using qudit operators, it would require qudits of local dimension $\bigomega{\sqrt{N}}$. Such an approach therefore cannot yield a qubit routing protocol.

Though we have ruled out some obvious approaches, it may still be possible to produce a similar protocol for qubits (or to rule one out entirely), which we believe are interesting directions for future work.

\subsection{Lower bound on the free-particle routing time}
\label{sec:free_bound}

In this section, we give lower bounds on the time to route $N$ non-interacting particles (fermions or bosons) through a vertex bottleneck.

First, we show a lower bound of $ \bigomega{N_R/N_C}$ on the gate-based routing model for graphs with a vertex bottleneck. On the star graph, this bound yields $\rtf{S_N} = \bigomega{N}$. This implies that our Hamiltonian-based protocol in \cref{sec:free_protocol} obtains a quadratic speedup over gate-based routing.

Then, we show a lower bound on the Hamiltonian routing time $\hrtf{G}$ for any graph $G$ with a vertex bottleneck.
In particular, our proof implies that the free-particle routing time on the star graph is lower-bounded by $\rtf{S_N} =\bigomega{\sqrt{N}}$, which implies that the free-particle routing protocol given in \cref{sec:free_protocol} is optimal.

To prove these routing-time lower bounds, we first lower bound the time to generate $\bigtheta{N}$ mode entanglement (\cref{def:mode_entanglement}) between two halves of a free-particle system.
Analogously to the discussion in \cref{sec:routing_entanglement}, the ability of a system to route $N$ non-interacting particles across a partition in time $t$ can be reduced to the problem of generating $\bigtheta{N}$ mode entanglement in that time---for example, when $N$ modes, each of which is entangled with an ancillary mode as the state $\frac{1}{\sqrt{2}} (\ket{01} + \ket{10})$, are routed across a partition, the amount of entanglement generated across the partition is $\bigtheta{N}$. 
As such, a lower bound on the time to generate $\bigtheta{N}$ entanglement across a vertex bottleneck can be used to lower bound the time to perform routing.

To fix the model more precisely, we assume the system is defined on an architecture graph with a tripartition $V = L \cup C\cup R$ (cf.\ \cref{fig:tripartite}), where each vertex represents a mode and each edge indicates pairs of modes between which we are allowed to apply a gate.
We now show a lower bound on the gate-based routing time for free particles.
\begin{theorem}
    For any graph $G$ with a vertex bottleneck, $\rtf{G} = \bigomega{N_R/N_C}$.
\end{theorem}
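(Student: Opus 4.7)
The plan is to reduce the gate-based routing lower bound to a lower bound on the mode entanglement that must be generated across the vertex bottleneck (in the spirit of Sec.~\ref{sec:routing_entanglement}) and then combine it with the Small Total Entangling lemma, Lemma~\ref{lem:STE}. First I would select a worst-case permutation $p$ together with an initial state so that (i) the only gates that can change the entanglement across the chosen cut are $L$-$C$ boundary gates, and (ii) implementing $p$ requires producing $\bigomega{N_R}$ mode entanglement across that cut. Since $N_L \ge N_R$, pick $p$ to transpose the $N_R$ sites of $R$ with some $N_R$ chosen sites of $L$. Adjoin $N_R$ inert reference modes $R'$ (not adjacent to any vertex of $G$) and prepare each pair $(r,r') \in R \times R'$ in the particle-number-preserving Bell state $\tfrac{1}{\sqrt{2}}(\ket{01}+\ket{10})$, placing all other modes in the vacuum. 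Relative to the bipartition $L \mid (V\setminus L)\cup R'$, the initial mode entanglement vanishes, while after applying $U_p$ each Bell pair has one end in $L$ and the other still in $R'$, so the final entanglement equals $N_R\log 2$.

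Next I would bound the growth of mode entanglement under an architecture-respecting free-particle circuit. Because the tripartition has no $L$-$R$ edges and no edges touch $R'$, the only gates crossing the bipartition are those supported on one mode of $L$ and one mode of $C$. In each depth-$1$ layer the gates act on disjoint pairs, so at most $N_C$ boundary-crossing gates can appear per layer. Grouping the boundary-crossing gates of a single layer into one unitary acting on two disjoint fermionic subsystems of dimension at most $2^{N_C}$, Lemma~\ref{lem:STE} bounds its contribution to the entropy change by $2\log(2^{N_C}) = \bigo{N_C}$. Summing over $t$ layers, the total mode entanglement generated is $\bigo{N_C t}$; comparing with the required $\bigomega{N_R}$ yields $t = \bigomega{N_R/N_C}$. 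Since the excerpt has already observed that the free-particle routing time is the same for fermions and bosons, the lower bound transfers to the bosonic case as well.

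The main obstacle is the combined choice of the reference system $R'$ and the bipartition: $R'$ must sit on the $C\cup R$ side of the cut so that the initial entanglement is zero while the final entanglement scales as $\bigomega{N_R}$, yet it must be unconnected to $G$ so that no additional boundary-crossing gates are introduced. A secondary subtlety is the layerwise use of Lemma~\ref{lem:STE}: STE bounds the entanglement produced by one unitary, so I would justify the per-layer bound by treating all boundary-crossing gates in a single layer as a single global unitary on the union of their supports, allowing one invocation of STE to control the entire layer.
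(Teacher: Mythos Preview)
Your argument is correct and follows essentially the same route as the paper: reduce to an entanglement lower bound across the bottleneck and apply STE (\cref{lem:STE}) layer by layer to get $\bigo{N_C}$ entanglement change per layer against a required $\bigomega{N_R}$ total. The only cosmetic difference is that the paper works directly with the mixed state $\rho_{\text{in}} = 2^{-N_R}\mathbb{I}_R \otimes \ket{0}^{N_L+N_C}$ and tracks $S_R$ (which drops from $N_R$ to $0$ under the routing), whereas you purify explicitly with reference modes $R'$ and track $S_L$ (which rises from $0$ to $N_R$); these are dual descriptions of the same setup.
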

\begin{proof}
We first consider routing in a system of free fermions. As discussed in \cref{sec:free_particles}, due to their identical operator dynamics, any lower bound for routing free fermions also applies to routing free bosons.
Consider an initial state $\rho_{\text{in}} = \frac{1}{2^{N_R}} \mathbb{I}_R \otimes \ket{0}^{N_C+N_L}$. This can be considered as a state where each mode in $R$ is maximally entangled with some reference system that has been traced out. The initial von Neumann entropy of the subsystem $R$ is thus $S_R = N_R$. On routing every mode from $R$ to $L$, all the entanglement (or entropy) is transferred to $L$, and $S_R = 0$.
Following a similar argument to one presented in~\textcite{adv_lim}, by a straightforward application of STE (\cref{lem:STE}) to free fermions, the maximum change in the von Neumann entropy of $R$ when applying a circuit of depth $d$ is
\begin{equation}
    \Delta S_R \leq 2dN_C.
\end{equation}
On the other hand, to route every mode in $R$ to $L$ requires a total change in the von Neumann entropy of $R$ of $|\Delta S_R| = N_R$.
Hence, the minimum circuit depth to perform routing is $\rtf{G} = \bigomega{N_R/N_C}$.
\end{proof}
Applied to the star graph, this implies a bound of $\rtf{S_N} = \bigomega{N}$, and hence our Hamiltonian routing protocol from \cref{sec:free_protocol} obtains a quadratic speedup over gate-based routing.

We now show a lower bound on Hamiltonian free-particle routing in graphs with a vertex bottleneck. As before, each vertex represents a mode and each edge corresponds to a possible two-body ``hopping'' term in the Hamiltonian. 
Let the particles evolve under the following non-interacting (time-dependent) Hamiltonian:
\begin{align}
    \label{eq:freeHam}
 	H(t) &= H^{(LL)} + H^{(CC)} + H^{(LC)} \nonumber\\
  &= \sum_{i, j\in L}H^{(LL)}_{ij}(t)l_i^\dagger l_j + \sum_{i,j\in C}H^{(CC)}_{ij}(t)c_i^\dagger c_j \nonumber\\
  &+ \sum_{\substack{i \in L \\ j \in C}} \left[H^{(LC)}_{ij}(t) l_i^\dagger c_j + \hc\right] \nonumber \\
  &+ \sum_{\substack{i \in R \\ j \in C}} \left[H^{(RC)}_{ij}(t) r_i^\dagger c_j + \hc\right] + \sum_{i, j\in R}H^{(RR)}_{ij}(t)r_i^\dagger r_j,
\end{align}

where the $l_i^\dagger$ ($l_i$) are the creation (annihilation) operators on $L$, $c_i^\dag$ ($c_i$) are the corresponding operators on $C$, and $r_i^\dag$ ($r_i$) are the corresponding operators on $R$.

As before, we assume that Hamiltonian terms coupling $L$ and $C$ are bounded by $\norm{H^{(LC)}_{ij}(t)} \le 1$, for all times $t$. 

The following lemma bounds the entangling rate in a free fermionic system. We use this to bound the free fermionic Hamiltonian routing time, and therefore the Hamiltonian routing time for free bosons as well.

\begin{lemma}
    \label{lem:free_entangling}
For a free fermionic system defined on a graph with a vertex bottleneck evolving under the free-particle Hamiltonian defined in \cref{eq:freeHam}, the entangling rate between subsystems $L$ and $C \cup R$ is bounded as
\begin{equation}
    \frac{\dd S_L(\rho(t))}{\dd t}\le 4c N_C \sqrt{N_L}
\end{equation} 
\end{lemma}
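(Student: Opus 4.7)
My approach is to reduce the bound to a sum of applications of the small incremental entangling (SIE) lemma (\cref{lem:SIE}), each with a bipartition whose smaller side is a single fermionic mode; this avoids the $\log(\dim L)$ factor that a direct SIE application with $B = C \cup R$ would produce.

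First, I observe that only $H^{(LC)}$ can change $S_L$: the term $H^{(LL)}$ acts as a unitary conjugation on $\rho_L$ and preserves its spectrum, while $H^{(CC)}$, $H^{(RR)}$, and $H^{(RC)}$ are supported on $L^c = C \cup R$ and leave $\rho_L$ invariant (a one-line calculation of $\traceOp_{L^c}[H,\rho]$ confirms this). Since $\dot{S}_L$ is linear in the generating Hamiltonian at fixed state, I decompose $H^{(LC)} = \sum_{j \in C} H^{(LC,j)}$ with $H^{(LC,j)} \coloneqq \sum_{i \in L}\left[H^{(LC)}_{ij}(t)\, l_i^\dagger c_j + \hc\right]$ and write $\dot{S}_L = \sum_{j \in C} \dot{S}_L^{(j)}$, where $\dot{S}_L^{(j)}$ is the rate that would arise if only $H^{(LC,j)}$ were applied at state $\rho(t)$.

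Next, I bound $\norm{H^{(LC,j)}}$ using the fact that each $H^{(LC,j)}$ couples all of $L$ to a single mode. Let $v_i \coloneqq H^{(LC)}_{ij}(t)$ and $\norm{v}_2 \coloneqq \sqrt{\sum_{i \in L} |v_i|^2} \le \sqrt{N_L}$, and define the normalized fermionic mode $f_j^\dagger \coloneqq \norm{v}_2^{-1} \sum_{i \in L} v_i\, l_i^\dagger$, which satisfies $\{f_j, f_j^\dagger\} = 1$. Then $H^{(LC,j)} = \norm{v}_2\, (f_j^\dagger c_j + c_j^\dagger f_j)$ acts nontrivially only on the four-dimensional Fock space of the pair $(f_j, c_j)$: in the single-particle sector it acts as a Pauli $X$, while annihilating the vacuum and doubly-occupied sectors. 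Hence its operator norm is $1$ and $\norm{H^{(LC,j)}} \le \sqrt{N_L}$.

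I then apply SIE to each piece with $A = L$, $B = \{c_j\}$, and $(C \setminus \{c_j\}) \cup R$ absorbed into the ancilla $b$ attached to $B$; with $H_{AB} = H^{(LC,j)}$ and $H_{Aa} = H_{Bb} = 0$, SIE gives $\dot{S}_L^{(j)} \le c\, \norm{H^{(LC,j)}} \log d$ with $d = \min(\dim L, 2) = 2$, so $\dot{S}_L^{(j)} \le c\sqrt{N_L} \log 2$. Summing over the $N_C$ modes in $C$ and absorbing $\log 2$ into the constant yields $\dot{S}_L \le 4 c N_C \sqrt{N_L}$.

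The main conceptual hurdle is the choice of SIE bipartition: a direct application with $B = C \cup R$ produces a $\log d$ factor scaling with system size, multiplied by $\norm{H^{(LC)}} \le 2N_C\sqrt{N_L}$, overshooting the target by a factor of roughly $N_L$. Splitting the coupling by the $C$-side mode collapses $\log d$ to a constant at the cost of summing $N_C$ rates whose individual operator norms are only $\sqrt{N_L}$; the collapse of the $C$-side rotation into a single fermionic mode (which holds precisely because $H^{(LC,j)}$ is bilinear and free) is what makes the trade-off favorable.
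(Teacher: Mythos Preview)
Your proof is correct and follows essentially the same approach as the paper: both decompose $H^{(LC)} = \sum_{j \in C} h_j$ by the $C$-side mode, rewrite each $h_j$ as a hopping between $c_j$ and a single normalized $L$-mode (your $f_j$, the paper's $\tilde{l}_j$), and apply SIE term by term so that $d = 2$ and the $\log d$ factor is $O(1)$. Your explicit computation $\norm{f_j^\dagger c_j + c_j^\dagger f_j} = 1$ is slightly sharper than the paper's triangle-inequality bound of $2$, but the structure of the argument is identical.
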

\begin{proof}
The proof of the theorem follows from applying the Small Incremental Entangling (SIE) theorem (\cref{lem:SIE}) to a free fermionic Hamiltonian acting on a bipartition of the system $(L, C \cup R)$. 

By SIE, we are only concerned with $H^{(LC)} = \sum_{\substack{i \in L \\ j \in C}} \left[H^{(LC)}_{ij}(t) l_i^\dagger c_j + \hc\right]$.
Consider the single normalized mode
\begin{equation}
\tilde l_j^\dag(t) \coloneqq \frac{1}{\sqrt{\sum_{i \in L} \left|H^{(LC)}_{ij}(t)\right|^2}}\sum_{i \in L} H^{(LC)}_{ij}(t) l^\dag_i.
\end{equation}
Then
\begin{equation}
    H^{(LC)} = \sum_{j \in C} h_j(t)
\end{equation}
where $h_j(t) = \sqrt{\sum_{i \in L} \left|H^{(LC)}_{ij}(t)\right|^2} \left[\tilde l_j^\dag(t) c_j + \hc\right]$.

Taking the same approach as in the proof of the area law for quasi-adiabatic continuation in \cite{SIE}, we can bound the entangling rate as
\begin{align}
\begin{split}
    \label{eq:SIE_fat_center}
    \frac{\dd S_L(\rho(t))}{\dd t}  &= -\ii \trace{H^{(LC)}(t)[\rho(t),\log(\rho_L(t) \otimes \mathbf{1}_C)]}\\
    &= -\ii\sum_{j \in C} \trace{ h_j(t)[\rho(t),\log(\rho_L(t) \otimes \mathbf{1}_C)]}\\
    &\le 2c\sum_{j\in C} \|h_j(t)\|,
\end{split}
\end{align}
where $\rho_L(t) = \ptrace{C}{\rho(t)}$, and $c$ is the constant in \cref{lem:SIE}.
In the third line, we applied the SIE theorem to bound the contribution of each $h_j(t)$ term separately (cf.\ Eq. (18) in \cite{SIE}).

To finish the proof, we bound the norm of $h_j(t)$ as follows:

\begin{align}
\begin{split}
    \|h_j(t)\| &= \left\lVert\sqrt{\sum_{i \in L} \left|H^{(LC)}_{ij}(t)\right|^2} \left[\tilde l_j^\dag(t) c_j + \hc\right] \right\rVert \\
    &\le 2\sqrt{N_L}, \label{eq:HItbound}
\end{split}
\end{align}
where we have used the fact that, for fermions,  $\norm{\tilde l_j^\dag(t) c_j} \leq 1$.
Plugging this bound on $\|h_j(t)\|$ into \cref{eq:SIE_fat_center} yields
\begin{equation}
    \frac{\dd S_L(\rho(t))}{\dd t}\le 4c N_C \sqrt{N_L},
\end{equation}
as claimed.
\end{proof}

\begin{theorem}
    \label{thm:free_lower}
For a system defined on a graph with a vertex bottleneck evolving under the free-particle Hamiltonian defined in \cref{eq:freeHam}, 
\begin{equation}
    \hrtf{G} = \bigomega*{\frac{N_R}{N_C \sqrt{N_L}}}.
\end{equation}
\end{theorem}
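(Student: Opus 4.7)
The plan is to reduce free-particle routing to entanglement generation across the bipartition $(L,\,C\cup R)$ and then invoke \cref{lem:free_entangling}. First, I would choose a permutation witness that saturates the bottleneck: since $N_L \geq N_R \geq N_C$, pair the $N_R$ modes of $R$ with $N_R$ distinct modes of $L$ and let $p$ be the product of these pairwise transpositions, with all other modes fixed. Any architecture-respecting Hamiltonian that implements $U_p$ must do so on every input state, and in particular on the convenient one chosen next.

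Next, I would load the input with $N_R$ bits of entropy sitting in $R$: take $\rho_{\text{in}} = 2^{-N_R}\mathbb{I}_R \otimes \ketbra{0}{0}^{\otimes(N_L+N_C)}$, as in the gate-based argument above. Then $S_L(\rho_{\text{in}}) = 0$, while $S_L(U_p \rho_{\text{in}} U_p^\dagger) = N_R$, because the mode-permuting unitary $U_p$ transports the $R$-mode mixedness onto the $N_R$ target modes of $L$ (using that $\rho_{\text{in}}$ is diagonal in the Fock basis). Any free-particle Hamiltonian evolution that realizes $U_p$ on this state must therefore change $S_L$ by $N_R$.

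Finally, I would apply \cref{lem:free_entangling}, which bounds $\dd S_L(\rho(t))/\dd t \leq 4 c N_C \sqrt{N_L}$ uniformly throughout any evolution by a Hamiltonian of the form \cref{eq:freeHam}. Integrating from $0$ to the routing time $t$ yields $N_R \leq 4 c N_C \sqrt{N_L}\, t$, i.e.\ $t = \bigomega{N_R/(N_C\sqrt{N_L})}$. Since free fermions and free bosons have identical single-particle dynamics, as noted in \cref{sec:prelims}, the same bound transfers to free bosons.

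The main obstacle I anticipate is the second step: confirming that the mode-entanglement bounded by \cref{lem:free_entangling} really equals the entropy that must be generated to implement $U_p$ on $\rho_{\text{in}}$. Once one checks that the reduced state on $L$ after routing is maximally mixed on the $N_R$ target modes and vacuum on the remaining $N_L - N_R$ modes, so that its von Neumann entropy is exactly $N_R$, the rest of the argument is a one-line integration of the entangling-rate inequality.
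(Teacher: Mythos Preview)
Your proposal is correct and follows essentially the same approach as the paper: choose a permutation sending all $N_R$ modes of $R$ into $L$, load $R$ with $N_R$ bits of entropy, observe that $S_L$ must jump from $0$ to $N_R$, and integrate the entangling-rate bound of \cref{lem:free_entangling} to get $t = \bigomega{N_R/(N_C\sqrt{N_L})}$, with the boson case inherited from the identical single-particle dynamics. The only cosmetic difference is that the paper phrases the entropy source as ``each mode in $R$ maximally entangled with an ancillary mode,'' i.e.\ the purification of your $\rho_{\text{in}} = 2^{-N_R}\mathbb{I}_R \otimes \ketbra{0}{0}^{\otimes(N_L+N_C)}$; since the Hamiltonian does not touch the ancillas, $S_L$ is the same either way and the two formulations are equivalent.
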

\begin{proof}
   We start with the case of free fermions. By \cref{lem:free_entangling}, the entangling rate satisfies $\frac{\dd S_L(\rho(t))}{\dd t}\le 4c N_C \sqrt{N_L}$.
   Hence the time to generate $S_L(\rho(t)) = \bigtheta{N_R}$ ebits of entanglement between subsystems $L$ and $R \cup C$ is lower bounded by 
\begin{equation}
    t = \bigomega*{\frac{N_R}{\sqrt{N_L}N_C}}.
\end{equation} 
If every mode in $R$ is initially maximally entangled with an ancillary mode, then routing every mode from $R$ to $L$ requires producing $N_R$ bits of mode 
entanglement between $L$ and the rest of the system. Hence, the routing time for free fermions (and thus for free bosons as well) is lower bounded as $\bigomega*{\frac{N_R}{\sqrt{N_L}N_C}}$ as claimed.
\end{proof}

We now apply \cref{thm:free_lower} to the case of the star graph. 
Setting $N_L = N$ and $N_C = 1$ yields a lower bound of $t = \bigomega{\sqrt{N}}$ on the free-particle routing time, which matches the upper bound given by the free-fermion routing protocol in \cref{sec:free_protocol}. 
As such, for free particles, the optimal routing time on the star graph is $\hrtf{S_N} =\bigtheta{\sqrt{N}}$.

\section{Lower bounds on qubit routing through vertex bottlenecks}
\label{sec:qubit_bound}

We now return to the more common setting of qubits rather than free particles.
An architecture-respecting Hamiltonian for a graph with a tripartition can be decomposed as \begin{equation}\label{eq:H=LR}
    H = H_L + H_C + H_R,
\end{equation}
where $H_L$ consists of local terms in $L$, terms coupling different sites in $L$, and coupling sites in $L$ to sites in $C$; $H_C$ consists of terms local in $C$ or coupling different sites in $C$; and $H_R$ consists of terms local in $R$, coupling different sites in $R$, and coupling sites in $R$ to sites in $C$. These terms can be further expanded in terms of Pauli operators $\{X^0, X^1, X^2, X^3\}$: \begin{subequations}\label{eq:H=Pauli}
\begin{align}
    H_L &= \sum_{l_1}^{N_L} h_{l_1}^{(L)\alpha_1} X^{\alpha_1}_{l_1} + \sum_{l_1< l_2} ^{N_L, N_L}  h_{l_1, l_2}^{(LL)\alpha_1\alpha_2} X_{l_1}^{\alpha_1} X_{l_2}^{\alpha_2} \nonumber \\ &\qquad + \sum_{l_1, c_1} ^{N_L, N_C}  h_{l_1, c_1}^{(LC)\alpha_1\gamma_1} X_{l_1}^{\alpha_1} X_{c_1}^{\gamma_1}, \\
    H_C &= \sum_{c_1}^{N_C} h_{c_1}^{(C)\gamma_1} X^{\gamma_1}_{c_1} + \sum_{c_1< c_2} ^{N_C, N_C}  h_{c_1, c_2}^{(CC)\gamma_1\gamma_2} X_{c_1}^{\gamma_1} X_{c_2}^{\gamma_2}, \\
    H_R &= \sum_{r_1}^{N_R} h_{r_1}^{(R)\beta_1} X^{\beta_1}_{r_1} +\sum_{r_1< r_2} ^{N_R, N_R}  h_{r_1, r_2}^{(RR)\beta_1\beta_2} X_{r_1}^{\beta_1} X_{r_2}^{\beta_2} \nonumber \\ &\qquad+ \sum_{r_1, c_1} ^{N_R, N_C}  h_{r_1, c_1}^{(CR)\beta_1\gamma_1} X_{r_1}^{\beta_1} X_{c_1}^{\gamma_1},
\end{align}
\end{subequations}
where we use the Einstein summation convention (i.e.\ summing over repeated indices) for  $\alpha_1, \alpha_2,\beta_1, \beta_2, \gamma_1, \gamma_2 \in\{1,2,3\}$. We assume the coefficients satisfy
\begin{align}
    &\abs{h_{l_1}^{(L)\alpha_1}}, \abs{h_{c_1}^{(C)\gamma_1}}, \abs{h_{r_1}^{(R)\beta_1}} \le \sqrt{N}, \notag \\
    &\abs{h_{l_1, l_2}^{(LL)\alpha_1\alpha_2}},
    \abs{h_{l_1, c_1}^{(LC)\alpha_1\gamma_1}} \leq 1, \notag \\ &\abs{h_{c_1, c_2}^{(CC)\gamma_1\gamma_2}}, \abs{h_{r_1, r_2}^{(RR)\beta_1\beta_2}}, \abs{h_{r_1, c_1}^{(CR)\beta_1\gamma_1}} \le 1 \label{eq:LR<1}
\end{align}
with $N=N_L + N_R + N_C$ and $N_C \leq N_L, N_R$. It is natural to assume even the single-site terms must be bounded by 1. However, some routing models may allow for faster local fields \cite{adv_lim}. In this work, we assume a bound of $\sqrt{N}$ for technical reasons to retain our bound on the Frobenius commutator norm in \cref{lem:H<sqrtN_app}.
Since our graph has a bottleneck, $N_L \geq N_R \geq N_C$. These coefficients may be time-dependent such that $H$ is piecewise time-independent (recall \cref{def:piecewise}) with minimum segment width $\propto 1/\sqrt{N_L}$ (this value is due to the limitations of our techniques, as expressed by \cref{eq:segment_restriction}).
Our Hamiltonian can alternatively be written as
$H = H_{LC} + H_R$,
where $H_{LC} = H_L + H_C$.

Our main result is a lower bound on the Hamiltonian routing time for any graph with a vertex bottleneck.
\begin{theorem}
    \label{thm:qubit_lower}
    For any graph $G$ with a tripartition into $L,C,R$ with $N_L, N_C, N_R$ vertices, respectively, if $C$ is a vertex bottleneck, then for any constant $\delta \in (0, 1/3]$, if $N_R > 4(2\times 5^{\frac{1-\delta}{2\delta}})N_C + 2$, there is a constant $w_{\delta}$ such that  the routing time for piecewise time-independent Hamiltonians with minimum segment width $\Delta = w_{\delta}/\sqrt{N_L}$ is 
    \begin{equation}
        \hrtd{G} = \bigomega*{\frac{N_R^{1-\delta}}{\sqrt{N_L}N_C}}.
    \end{equation}
\end{theorem}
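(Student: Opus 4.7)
The plan is to reduce the routing lower bound to an upper bound on the rate of \emph{average} entanglement growth across the bipartition $L\mid CR$ for inputs drawn from a 1-design, and then to establish this entanglement-rate bound via a Trotter--Suzuki decomposition whose error is controlled in Frobenius norm rather than operator norm.

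For the reduction, I would consider the initial state in which each qubit of $R$ is placed in a Bell pair with an external ancilla while $L\cup C$ starts in $\ket{0}^{\otimes(N_L+N_C)}$. The reduced density matrix on the $N$ routed qubits is then maximally mixed, hence forms a trivial 1-design, and any permutation that sends $R$ into $L$ must transfer $\Omega(N_R)$ ebits across the cut separating $L$ from the rest of the system (ancillas included), as in \cref{sec:routing_entanglement}. Thus $\hrtd{G}$ is bounded below by the time needed to grow the averaged $L$-entropy from $0$ to $\Omega(N_R)$ on 1-design inputs, which is the quantity bounded in \cref{sec:ent_cap}.

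The core of the argument is an upper bound of order $O(N_C\sqrt{N_L}\,(\sqrt{N_LN_R/N_C})^{\delta})$ on this average entanglement rate. I would split each of the $\lceil t/\Delta\rceil$ piecewise-constant segments using a $p$-th order Trotter--Suzuki formula applied to the decomposition $H=H_{LC}+H_R$ from \cref{eq:H=LR}. In the Trotterized evolution every factor $e^{-\ii H_R\tau}$ leaves the $L\mid\bar L$ reduced density matrix invariant, while each $e^{-\ii H_{LC}\tau}$ can entangle $L$ with the rest only through the $L$-$C$ coupling, across which SIE (\cref{lem:SIE}) governs the growth. Crucially, for 1-design inputs the first-order contribution to the averaged entropy growth vanishes, so the effective rate is controlled by normalized Frobenius norms of the $L$-$C$ coupling rather than its operator norm; the vertex-bottleneck structure forces these to be much smaller than the corresponding operator norms because the coupling only populates $N_C$ distinct degrees of freedom on the bottleneck side, yielding a per-step rate of order $N_C\sqrt{N_L}$.

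To control the error between the true evolution and this Trotter approximation in the same averaged norm, I would invoke the Frobenius-norm Trotter--Suzuki commutator bounds of Ref.~\cite{Trotter_Frob22}. Each nested commutator of $H_{LC}$ and $H_R$ appearing in the Trotter error must by construction be supported through the bottleneck $C$, so at every layer of nesting its normalized Frobenius norm is reduced by inverse powers of $\sqrt{N_L}$ or $\sqrt{N_R}$ relative to its operator norm. The bounds $|h^{(L)}|,|h^{(C)}|,|h^{(R)}|\le\sqrt{N}$ in \cref{eq:LR<1} and the minimum segment width $\Delta=w_\delta/\sqrt{N_L}$ are calibrated so that these nested Frobenius estimates close, and choosing $p$ as a suitable function of $1/\delta$ absorbs the remaining polynomial overhead into $(\sqrt{N_LN_R/N_C})^{\delta}$. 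Combining the reduction with the rate bound then gives $\hrtd{G}=\bigomega{N_R^{1-\delta}/(\sqrt{N_L}N_C)}$. The hardest step will be the nested-commutator Frobenius estimate at arbitrary Trotter order: one must track how each successive commutator ``smears'' operator support outside of $C$ into $L$ and $R$ and verify that the normalized Frobenius norm grows only slowly with the nesting depth, which is the technical origin of the parameter range $\delta\in(0,1/3]$, the $\sqrt{N}$ bound on single-site terms, and the specific form of the minimum segment width in the statement.
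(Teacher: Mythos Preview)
Your proposal has the right high-level ingredients---Trotter decomposition, Frobenius-norm commutator bounds, and entropy arguments---but the reduction step is flawed in a way that breaks the argument. You want to derive the routing bound from the 1-design average entanglement capacity of \cref{thm:entangling_capacity}, but this does not work: the permutation unitary $U_p$ maps product states to product states, so for any 1-design built from product states its entanglement capacity is zero, and for Haar-random states Page's theorem makes the $L$-entropy essentially a function of subsystem size alone, so again $|S_L(U_p\ket\psi)-S_L(\ket\psi)|$ is negligible. Your proposed Bell-pair-with-ancilla state is a single mixed state (and is not even maximally mixed on the $N$ routed qubits: it is $\ket0\!\bra0_{LC}\otimes (I/2^{N_R})_R$), not a pure-state 1-design on $LCR$, so it does not fit the hypothesis of \cref{thm:entangling_capacity}. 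There is no 1-design of pure states on the system for which $U_p$ has average capacity $\Omega(N_R)$, so the reduction fails.

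There is also a gap in your proposed entanglement-rate bound. Applying SIE to $e^{-\ii H_{LC}\tau}$ gives a rate proportional to the operator norm of the $L$--$C$ coupling, which is $O(N_L N_C)$, not $O(N_C\sqrt{N_L})$. Your claim that ``for 1-design inputs the first-order contribution to the averaged entropy growth vanishes, so the effective rate is controlled by normalized Frobenius norms'' is not a known result and would need a proof; entropy is nonlinear in the state, so the 1-design property alone does not convert operator-norm bounds to Frobenius-norm bounds. The paper does not attempt any such averaged SIE. Instead it uses STE (\cref{lem:STE}), not SIE, on the Trotter circuit $\tU$: STE bounds the entanglement change by $2dN_C$ in terms of circuit \emph{depth}, independent of the Hamiltonian norm. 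The Frobenius norm enters separately, as the error $\norm{U-\tU}_{\rm F}$ of the Trotter approximation (\cref{lem:trotter_frobenius}), and the two pieces are tied together in \cref{lem:circuit_entanglement} via the specific mixed-state ensemble $\rho_z=\ket z\!\bra z_R\otimes(2^{-N_L-N_C}I)_{LC}$. This ensemble is engineered so that averaging over $z$ reproduces $\norm{\,\cdot\,}_{\rm F}^2$ exactly, while for each $\rho_z$ the permutation $U_p$ raises $S_R$ from $0$ to $N_R$ but the depth-$d$ circuit $\tU$ can only reach $S_R\le 2dN_C$; Fannes--Audenaert then converts this entropy gap into a trace-distance (hence Frobenius) lower bound on $\norm{\tU-U_p}_{\rm F}$, and the triangle inequality finishes the argument.
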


\textit{Proof sketch:}
Our proof consists of two parts. In the first part, in \cref{lem:circuit_approx}, we show that any evolution $U(H,t)$ by an architecture-respecting piecewise time-independent Hamiltonian $H$ with minimum segment width $w_{\delta}/\sqrt{N_L}$ for time $t\leq c_\delta N_R^{1-\delta}/(\sqrt{N_L}N_C)$ can be well-approximated by an architecture-respecting circuit $\tU$ of depth $d \leq N_R/(4N_C) + 2\times 5^{\frac{1-3\delta}{2\delta}}$ for some constant $c_\delta$. To show this, we first prove  \cref{lem:trotter_frobenius}, which bounds the approximation error of Trotter-Suzuki product-formula circuit approximations to $U(H,t)$ in systems with a tripartition. This lemma is based on the fact that vertex bottlenecks constrain the Frobenius norm of higher-order commutators of terms in any Hamiltonian that respects the tripartition connectivity, which we prove in \cref{lem:H<sqrtN_app}. We further make use of recent work \cite{Trotter_Frob22} that bounds the error of Trotter-Suzuki product-formula circuit approximations in terms of the higher-order-commutator Frobenius norm. In the second part of the proof, we show that such a circuit cannot approximate a desired permutation unitary well. According to STE (\cref{lem:STE}), the amount of entanglement between $L$ and $C\cup R$ that can be generated by an architecture-respecting circuit of depth $d$ is bounded by $2dN_C$.
There exist permutations $p$ such that the permutation unitary $U_p$ can be used to increase the entanglement between $L$ and $R$ by $N_R$,
an increase which, by STE, requires a circuit of depth $N_R/ (2N_C)$. In \cref{lem:circuit_entanglement}, we show that such a circuit must be far from the target unitary $U_p$. Now combining the two parts of the proof, since we know that any architecture-respecting evolution $U(H,t)$ for $t\leq  c_\delta N_R^{1-\delta}/\sqrt{N_L}N_C$ is well approximated by a circuit $\tU$ that cannot perform the permutation $p$, our main result of a lower bound on the Hamiltonian routing time follows. We defer the formal proof to the end of this section.

In \cref{sec:frob_proof}, we show the following bound on the error of approximating architecture-respecting evolutions by circuits obtained using the Trotter-Suzuki formula.
\begin{lemma}
\label{lem:trotter_frobenius}
Consider a graph with a vertex bottleneck. Let $H$ be any time-independent Hamiltonian that respects this connectivity.
    Let $\tU$ be the architecture-respecting simulation circuit corresponding to dividing the time-evolution $U=U(H,t)$ into $M$ equal segments, and simulating each by the $(2k)^{\mathrm{th}}$-order Trotter-Suzuki formula.
    Then, $\tU$ has depth $d = 2 \times 5^{k-1} M$ \cite{Trotter_Frob22}
    and there exists a function $g(k)$, only dependent on $k$, such that
    \begin{equation}
    \label{eq:U-tU<comm}
    \norm{U- \tU}_{\rm F} \le g(k) \frac{t^{2k+1}}{M^{2k}} \sqrt{{N_L}^{2k} N_R N_C}.
    \end{equation}
\end{lemma}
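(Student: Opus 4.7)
The plan is to combine two ingredients: the Frobenius-norm Trotter error bound of Ref.~\cite{Trotter_Frob22}, and the nested-commutator Frobenius-norm bound of \cref{lem:H<sqrtN_app}. First, I would split $H = (H_L + H_C) + H_R$ as in \cref{eq:H=LR}, noting that $H_L + H_C$ is supported on $L \cup C$ while $H_R$ is supported on $C \cup R$. The standard recursive Suzuki construction applied to a two-term grouping produces a $(2k)$-th order product formula consisting of $2\times 5^{k-1}$ exponentials of scalar multiples of $H_L + H_C$ and $H_R$ per Trotter step. Each such exponential is architecture-respecting, so concatenating $M$ Trotter steps yields the circuit $\tU$ with the claimed depth $d = 2\times 5^{k-1} M$.

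Next, I would apply the Frobenius-norm Trotter error bound of \cite{Trotter_Frob22} to a single Trotter step of length $\tau = t/M$. This yields an error bounded by a $k$-dependent combinatorial constant times $\tau^{2k+1}$ times the sum of Frobenius norms of $(2k+1)$-fold nested commutators of the pieces $\{H_L+H_C,\, H_R\}$, which by multilinearity further decomposes into nested commutators of $\{H_L, H_C, H_R\}$. Concatenating the $M$ segments and applying the standard telescoping/triangle-inequality argument for unitaries contributes a factor of $M$, yielding a total Frobenius error of order $t^{2k+1}/M^{2k}$ times the worst such nested-commutator Frobenius norm. Invoking \cref{lem:H<sqrtN_app}, which asserts that every $(2k+1)$-fold nested commutator of terms from $\{H_L, H_C, H_R\}$ with coefficients bounded as in \cref{eq:LR<1} has Frobenius norm at most a $k$-dependent constant times $\sqrt{N_L^{2k} N_R N_C}$, and absorbing all $k$-dependent factors into a single function $g(k)$, gives the bound claimed in the lemma.

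The main obstacle is establishing \cref{lem:H<sqrtN_app} with the sharp scaling $\sqrt{N_L^{2k} N_R N_C}$: naively applying the operator-norm-based nested-commutator bounds of \cite{trotter_comm21} would produce extra factors (in particular an unwanted power of $N_L$) that spoil the downstream routing lower bound. The key asset is that the normalized Frobenius norm $\norm{\cdot}_{\rm F}$ defined in \cref{eq:Frob} is multiplicative on tensor products of disjoint-support traceless Pauli operators, while each Hamiltonian piece has a Pauli expansion (\cref{eq:H=Pauli}) with bounded coefficients. Any commutator of two Pauli strings either vanishes or remains a single Pauli string (up to a phase) whose support is contained in the union of the two inputs' supports, so that each nested commutator expands into a sum of orthogonal Pauli strings whose coefficients one can square and sum via Cauchy--Schwarz. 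The $(\sqrt{N_L})^{2k}$ growth per layer of nesting, together with the single $\sqrt{N_R N_C}$ factor arising from the $L$--$C$ and $C$--$R$ boundary couplings, must emerge from carefully organizing this accounting by support type. This is the technical core of \cref{lem:H<sqrtN_app}, whose statement we treat here as a black box.
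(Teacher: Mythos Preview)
Your proposal is correct and follows essentially the same route as the paper: two-term splitting $H=H_{LC}+H_R$, the Frobenius Trotter bound of \cite{Trotter_Frob22}, and then \cref{lem:H<sqrtN_app} for the nested-commutator norm, with all $k$-dependent constants absorbed into $g(k)$. One small correction: \cref{lem:H<sqrtN_app} is stated for sequences in $\{LC,R\}$ (i.e., nested commutators of $H_{LC}$ and $H_R$), not for the three pieces $\{H_L,H_C,H_R\}$, so your multilinearity detour is unnecessary and your paraphrase of the lemma is slightly off; simply apply the lemma directly to the $\{H_{LC},H_R\}$ commutators produced by the product formula, as the paper does.
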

We use this lemma to show that the circuit obtained from the Trotter-Suzuki formula can be an arbitrarily good approximation of the time-evolution.
\begin{lemma}
    \label{lem:circuit_approx}
    Consider a graph with a vertex bottleneck. Let $H$ be any piecewise time-independent Hamiltonian with minimum segment width $\Delta$ that respects the connectivity of a graph with a tripartition.
    Let $U=U(H,t)$ be the unitary corresponding to evolution by $H$ for time $t$.
    Then, for any constant $\delta \in (0, 1/3]$ and $\epsilon > 0$,
    there exist constants $c_{\delta, \epsilon}, w_{\delta, \epsilon} > 0$ and $k \in \mathbb N_+$ such that, for \begin{equation}\label{eq:T<cdelta}
        t \leq  \frac{c_{\delta, \epsilon} N_R^{1-\delta}}{\sqrt{N_L}N_C}
    \end{equation}
    and $\Delta > w_{\delta, \epsilon}/\sqrt{N_L}$,
    there exists an architecture-respecting circuit $\tU$
    with depth $d \leq N_R/(4N_C) + 2 \times 5^{k-1}$ such that $\norm{\tU-U}_F \leq \epsilon$.

\end{lemma}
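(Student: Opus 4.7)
The plan is to approximate $U$ by a Trotter-Suzuki circuit applied segment-by-segment, using \cref{lem:trotter_frobenius} to control the Frobenius error on each segment, and then to choose the Trotter order $2k$ and step size $\tau_0$ so that the resulting circuit simultaneously satisfies the depth budget and the error tolerance. Writing $U = U_K \cdots U_1$ with $U_i = e^{-\ii H_i \tau_i}$ a time-independent segment satisfying $\tau_i \geq \Delta$ and $\sum_i \tau_i = t$, I would approximate each $U_i$ by the $(2k)$-th order Trotter-Suzuki circuit $\tU_i$ using $M_i = \lceil \tau_i / \tau_0 \rceil$ Trotter steps. By \cref{lem:trotter_frobenius}, $\tU_i$ has depth $2 \times 5^{k-1} M_i$ and Frobenius error at most $g(k)(\tau_i^{2k+1}/M_i^{2k}) \sqrt{N_L^{2k} N_R N_C} \leq g(k)\,\tau_i\, \tau_0^{2k}\sqrt{N_L^{2k} N_R N_C}$. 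Since every $U_i$ and $\tU_i$ is unitary and hence has unit normalized Frobenius norm, a telescoping triangle inequality yields the global error $\|\tU - U\|_{\rm F} \leq \sum_i \|\tU_i - U_i\|_{\rm F} \leq g(k)\, t\, \tau_0^{2k} \sqrt{N_L^{2k} N_R N_C}$, while the total depth is $d = 2 \times 5^{k-1} \sum_i M_i \leq 2 \times 5^{k-1}(t/\tau_0 + K)$ with $K \leq t/\Delta$ the number of segments.

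Next I would set $\tau_0 = \Theta(N_C\, t/N_R)$ (precisely, large enough that $2 \times 5^{k-1} \cdot t/\tau_0 \leq N_R/(8 N_C)$) and take $w_{\delta,\epsilon}$ large enough that $\Delta = w_{\delta,\epsilon}/\sqrt{N_L} \geq \tau_0$, which ensures $K \leq t/\tau_0$ and produces the required depth $d \leq N_R/(4 N_C) + 2 \times 5^{k-1}$. Substituting this $\tau_0$ and then $t = c_{\delta,\epsilon} N_R^{1-\delta}/(\sqrt{N_L}N_C)$ into the error bound and collecting exponents gives
\[
\|\tU - U\|_{\rm F} = \bigo*{c_{\delta,\epsilon}^{2k+1} \cdot \frac{N_R^{3/2 - \delta(2k+1)}}{\sqrt{N_L N_C}}} \leq \bigo*{c_{\delta,\epsilon}^{2k+1}\, N_R^{1 - \delta(2k+1)}},
\]
where I used the vertex-bottleneck assumption $N_L \geq N_R \geq N_C$.

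Finally, I would take the Trotter order $k = \lceil (1-\delta)/(2\delta) \rceil$, which makes $\delta(2k+1) \geq 1$ and hence bounds the error by a constant depending only on $k$. Shrinking $c_{\delta,\epsilon}$ and correspondingly adjusting $w_{\delta,\epsilon}$ then drives this constant below $\epsilon$, completing the proof. Note that this choice of $k$ gives $k - 1 = (1-3\delta)/(2\delta)$ whenever $(1-\delta)/(2\delta)$ is an integer, which recovers the constant $2 \times 5^{(1-3\delta)/(2\delta)}$ appearing in \cref{thm:qubit_lower}.

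The main obstacle will be the two-sided tension on $\tau_0$: a smaller $\tau_0$ drives the per-segment Trotter error down but inflates the circuit depth, whereas the tight depth budget $N_R/(4N_C)$ forces $\tau_0 = \Theta(N_C t/N_R)$. This compels $k$ to grow like $1/\delta$ before the $(2k)$-th order Trotter error can overcome the $N_L^k$ factor from \cref{lem:trotter_frobenius}, which in turn is what makes the segment-width hypothesis $\Delta \gtrsim 1/\sqrt{N_L}$ necessary: each segment must be wide enough to accommodate at least one Trotter step at the chosen step size $\tau_0 = O(1/\sqrt{N_L})$.
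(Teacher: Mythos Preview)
Your proposal is correct and takes essentially the same approach as the paper: apply the $(2k)$-th order Trotter--Suzuki formula with $k$ chosen so that $\delta(2k+1)\ge 1$, invoke \cref{lem:trotter_frobenius} to bound the Frobenius error, and balance the Trotter step size against the depth budget $N_R/(4N_C)$. The only organizational difference is that the paper first treats the time-independent case globally (with a single $M$) and then observes that each Trotter window contains at most two constant-Hamiltonian pieces, whereas you Trotterize each segment separately and telescope; both routes yield the same parameters and bounds.
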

\begin{proof}

We first assume $H$ is time-independent for simplicity, and consider the piecewise case at the end of the proof.
Let $k$ be the smallest positive integer such that $\frac{1}{2k+1} \leq \delta$.

Selecting $M = \lceil \frac{1}{4}\left(2 \times 5^{k-1} \right)^{-1} N_R/N_C \rceil$, the circuit approximation $\tU$ obtained from \cref{lem:trotter_frobenius} has depth $d  = 2 \times 5^{k-1} M \leq N_R/(4N_C) + 2 \times 5^{k-1}$.
By \cref{lem:trotter_frobenius},
\begin{align}\label{eq:U-tU<tN}
    &\norm{U- \tU}_{\rm F} \spliteq \le 2^{6k} (5^{k-1})^{2k}  g(k) \spliteq \qquad\times t^{2k+1}\lr{\frac{N_C}{N_R}}^{2k} \sqrt{{N_L}^{2k} N_R N_C}.
\end{align}
Bounding the right-hand side by $\epsilon$ and solving for $t$, we obtain
\begin{equation}
    t \le \frac{\left(2^{6k}5^{2k^2-2k} g(k) \epsilon^{-1}\right)^{\frac{-1}{2k+1}} {N_R}^{1 - \frac{1}{2k+1}}}{\sqrt{N_L}N_C} \label{eq:T<ck}.
\end{equation}
This proves the Lemma for time-independent $H$, because there is a suitable constant $c_{\delta, \epsilon}$ such that \cref{eq:T<cdelta} implies \cref{eq:T<ck}, which guarantees $\norm{U- \tU}_{\rm F}\le \epsilon$.

The above results easily generalize to piecewise time-independent Hamiltonians with minimum segment width $w_{\delta, \epsilon}/\sqrt{N_L}$.
The Trotter error formula \cref{eq:U-tU<comm} does not hold directly because the Hamiltonian changes with time in some of the time windows of duration $t/M$.
Nevertheless, 
\begin{equation}
\label{eq:segment_restriction}
\frac{t}{M}\le \frac{(8\times 5^{k-1})c_{\delta,\epsilon}}{N_R^\delta \sqrt{N_L}} < \frac{w_{\delta, \epsilon}}{\sqrt{N_L}}
\end{equation}
for constant $w_{\delta, \epsilon}$ determined by $c_{\delta, \epsilon}$, which will be chosen shortly. Since $t/M$ is smaller than the minimum segment width $w_{\delta,\epsilon}/\sqrt{N_L}$, 
each time window of duration $t/M$ can be split into at most two smaller time windows,
each containing time-independent evolution.
Applying the Trotter error formula to these possibly smaller time windows,
we obtain \cref{eq:U-tU<comm} again but with an extra prefactor of $2$,
because there are at most $2M$ new time windows of duration at most $t/M$. The extra factor of $2$ propagates to \cref{eq:U-tU<tN}, so we only need to choose a slightly smaller $c_{\delta,\epsilon}$ than the time-independent case (say, half of its value) to guarantee that \cref{eq:T<cdelta} still implies $\norm{U- \tU}_{\rm F}\le \epsilon$. Hence $w_{\delta, \epsilon} = (4\times 5^{k-1})c_{\delta,\epsilon}$ suffices. 
\end{proof}

We now show that the approximation circuit $\tU$ of \cref{lem:circuit_approx} cannot achieve the permutation that routes the maximum number of qubits through the bottleneck. 
\begin{lemma}
\label{lem:circuit_entanglement}
	For any graph $G$ with a vertex bottleneck, consider a permutation $p$ that maps $m$ qubits from $R$ to $L$ through the vertex bottleneck
    and its associated permutation unitary $U_p$.
    Then, for any quantum circuit $\tilde{U}$ with depth $d<m/N_C$,
    \begin{equation}
        \norm{\tU-U_p}^2_{\rm F} > \frac{1}{4} \lr{\frac{m-2dN_C-1}{N_R}}^2.
    \end{equation}
\end{lemma}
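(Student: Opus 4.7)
The plan is to lower-bound $\norm{\tU-U_p}_{\rm F}^2$ by exhibiting an ensemble of input states on which $U_p$ generates much more $R$-entanglement than any shallow circuit $\tU$ can, and then convert this entropy gap into a distance lower bound. Introduce $N_R$ ancillary qubits $A_R$ on which $\tU$ acts as the identity, and for each $y\in\{0,1\}^{N_L+N_C}$ consider
\begin{equation}
    \ket{\psi_0}_y \coloneqq \ket{y}_{LC}\otimes\frac{1}{\sqrt{2^{N_R}}}\sum_{r} \ket{r}_R\ket{r}_{A_R}.
\end{equation}
Averaging $\ptrace{A_R}{\ket{\psi_0}_y\bra{\psi_0}_y}$ over uniform $y$ gives $I/2^N$, which yields the key identity
\begin{equation}
    \norm{\tU-U_p}_{\rm F}^2 \;=\; \Ex_y\,\norm{\lr{(\tU-U_p)\otimes I_{A_R}}\ket{\psi_0}_y}^2,
\end{equation}
so it suffices to lower bound each summand.

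For every $y$, I compute the entropy of the reduced state on $R$. Initially $\rho_R=I/2^{N_R}$ has entropy $N_R$. After $U_p$, the $m$ $R$-positions that receive qubits from $L\cup C$ carry computational-basis values inherited from $\ket{y}$, while the remaining $N_R-m$ positions remain Bell-paired with their ancillas, giving $S(\rho_R^{U_p,y})=N_R-m$ exactly. For $\tU$, the only cut-crossing gates across the $R\,|\,LC$ bipartition are $C$--$R$ gates (at most $N_C$ per layer on disjoint qubits), so applying STE (\cref{lem:STE}) layer-by-layer bounds $|S(\rho_R^{\tU,y})-N_R|\le 2dN_C$. Hence whenever $m>2dN_C+1$,
\begin{equation}
    |S(\rho_R^{\tU,y})-S(\rho_R^{U_p,y})|\;\ge\; m-2dN_C,
\end{equation}
uniformly in $y$. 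The Fannes--Audenaert inequality applied with $\dim R=2^{N_R}$ then converts this into $T(\rho_R^{\tU,y},\rho_R^{U_p,y})\ge (m-2dN_C-1)/N_R$.

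Data processing of the trace distance, $T_{\rm full}\ge T_R$, combined with the pure-state identity $T_{\rm full}=\sqrt{1-|F_y|^2}$ where $F_y=\braket{\psi_\tU}{\psi_{U_p}}_y$, and the elementary estimate $\norm{\ket{\phi}-\ket{\psi}}^2\ge 1-|F|^2$, then gives
\begin{equation}
    \norm{\lr{(\tU-U_p)\otimes I}\ket{\psi_0}_y}^2 \;\ge\; \lr{\frac{m-2dN_C-1}{N_R}}^2,
\end{equation}
with the stated prefactor $1/4$ naturally absorbed if one instead chains through $\norm{\cdot}_{\rm F}^2 \ge \tfrac14\norm{\rho^{\tU}-\rho^{U_p}}_1^2$. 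Averaging over $y$ finishes the proof. The main conceptual step, and the essential obstacle, is designing the $y$-ensemble so that it \emph{simultaneously} (i) averages to the identity on the system, so that the expectation recovers the (squared) Frobenius norm of $\tU-U_p$, and (ii) places all of the entanglement across the $R$ cut, so that the reduced $R$-entropy is maximally sensitive to the action of the permutation. Once this ensemble is chosen, the remaining inequalities---STE, Fannes--Audenaert, and data processing---are standard.
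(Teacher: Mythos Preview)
Your proof is correct and follows the same architecture as the paper's (STE applied layer-by-layer across the $R\,|\,LC$ cut, Fannes--Audenaert, data processing, and the pure-state fidelity/trace-distance relation), the only difference being a dual choice of ensemble: the paper places the pure computational-basis label on $R$ and purifies the maximally mixed state on $LC$ with a $2^{N_L+N_C}$-dimensional ancilla, whereas you do the reverse. Your chain through $T_{\rm full}=\sqrt{1-|F_y|^2}$ and $\norm{\ket{\phi}-\ket{\psi}}^2\ge 1-|F|^2$ in fact yields the bound \emph{without} the $1/4$, so the hedge about ``absorbing'' the prefactor is unnecessary.
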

\begin{proof}

First, we bound $\norm{\tU- U_p}_{\rm F}^2$ in terms of the trace distance. For any bit string $z$ of length $N_R$, we define 
\begin{equation}
    \rho_z:= \ket{z}_{R} \bra{z} \otimes (2^{-N_L-N_C}I)_{LC}.
\end{equation}
One can purify the system by a $2^{N_L + N_C}$-dimensional ancilla (labeled by $A$) so that
\begin{equation}
\label{eq:rho_purification}
    \rho_z = \ptrace{A}{\ket{\Psi(z)} \bra{\Psi(z)}}.
\end{equation}

Let $U' = U\otimes I_A$ and $\tU' = \tU\otimes I_A$. By definition of the Frobenius norm, and introducing $I = 2^{N_L + N_C}\sum_z \rho_z$,
\begin{align}
 &\norm{U - \tU}_{\rm F}^2 \nonumber \\
 &= \label{eq:frob_trace1} \frac{1}{2^{N_R}}\trace*{\sum_z \rho_z \abs{U-\tU}^2} \nonumber \\
 &= \frac{1}{2^{N_R}}\sum_z \brakett{\Psi(z)}{\abs{U-\tU}^2 \otimes I_A}{\Psi(z)} \nonumber\\
 &= \frac{1}{2^{N_R}} \sum_z \left[2 - 2 \Real{\brakett{\Psi(z)}{U'^\dagger \tU'}{\Psi(z)}} \right].
 \end{align}
 Using the fact that $\Real{x} \leq |x|$, and for $x \in [0,1], 1-x^2 \leq 2(1-x)$, we obtain
 \begin{align}
 &\sum_z \left[ 2 - 2 \Real{\brakett{\Psi(z)}{U'^\dagger \tU'}{\Psi(z)}} \right] \nonumber \\
 &\ge \sum_z 2(1-\abs{\brakett{\Psi(z)}{U'^\dagger \tU'}{\Psi(z)}}) \nonumber\\
 &\ge \sum_z 1-\abs{\brakett{\Psi(z)}{U'^\dagger \tU'}{\Psi(z)}}^2 \nonumber\\
 &=\label{eq:fidelity_trace_norm}\frac{1}{4} \sum_z \norm{U'\ketbra{\Psi(z)}{\Psi(z)}U'^\dagger - \tU' \ketbra{\Psi(z)}{\Psi(z)} \tU'^\dagger}_1^2 \\
 & \geq \frac{1}{4} \sum_z \norm{\traceOp_A\big[U'\ketbra{\Psi(z)}{\Psi(z)}U'^\dagger \nonumber \\ &\qquad\qquad\qquad\qquad\qquad - \tU' \ketbra{\Psi(z)}{\Psi(z)} \tU'^\dagger \big]}_1^2 \nonumber\\
&= \label{eq:frob_trace2}\frac{1}{4} \sum_z \norm{\tU\rho_z \tU^\dagger-U_{\rm p}\rho_z U_{\rm p}^\dagger}_1^2,
\end{align}
where in \cref{eq:fidelity_trace_norm}, we have used the relation between fidelity and trace-norm for pure states.

By the small total entangling theorem (\cref{lem:STE}), $\tU$ with depth $d$ can increase the entanglement across the $LC, R$ bipartition by at most $2d N_C$, i.e.,
\begin{equation}\label{eq:STE}
    S_R(\tU\rho \tU^\dagger) - S_R(\rho) \le 2d N_C,
\end{equation}
where $S_R$ is the von Neumann entropy of the reduced density matrix on the right $N_R$ qubits, and the vertex boundary of these qubits contains $N_C$ qubits as in \cref{fig:tripartite}.

Observe that $\rho_z$ is pure on $R$: $S_R(\rho_z)=0$, so \cref{eq:STE} implies \begin{equation}\label{eq:SL<N2}
    S_R(\tU\rho_z \tU^\dagger) \le 2d N_C.
\end{equation}
On the other hand, $U_p$ routes $m$ identity operators in $LC$ to $R$, so that \begin{equation}\label{eq:SL=N}
    S_R(U_{\rm p}\rho_z U_{\rm p}^\dagger) =m.
\end{equation}
Intuitively, two unitaries that produce different changes in the von Neumann entropy of the same initial state must be distant. By the Fannes-Audenaert Inequality~\cite{Audenaert_2007} (the second inequality below), 
\begin{align}\label{eq:tUrho-P>}
    &\norm{\tU\rho_z \tU^\dagger-U_{\rm p}\rho_z U_{\rm p}^\dagger}_1 \spliteq
    \ge \norm{ \ptrace{LC}{\tU\rho_z \tU^\dagger-U_{\rm p}\rho_z U_{\rm p}^\dagger } }_1 \spliteq
    \ge \frac{S_R(U_{\rm p}\rho_z U_{\rm p}^\dagger)-S_R(\tU\rho_z \tU^\dagger)-1}{\log \mathrm{dim} \mathcal{H}_R} \spliteq
   \ge \frac{m-2dN_C-1}{N_R}.
\end{align}
Using this to lower bound \cref{eq:frob_trace2} gives the result.
\end{proof}

Combining \cref{lem:circuit_approx} and \cref{lem:circuit_entanglement}, we may now prove our main result.
\begin{proof}[Proof of \cref{thm:qubit_lower}]
Let $p$ be a permutation that routes all $N_R$ qubits from $R$ to $LC$. 
Let $U$ be the evolution by any architecture-respecting piecewise time-independent Hamiltonian $H$ with minimum segment width $w_{\delta, \epsilon}/\sqrt{N}$ for time $t \leq c_{\delta, \epsilon} N_R^{1-\delta}/\sqrt{N_L}N_C$ for any $\delta>0$. By \cref{lem:circuit_approx}, $U$ can be $\epsilon$-approximated by an architecture-respecting circuit $\tU$ of depth at most $ N_R/(4N_C) + 2 \times 5^{k-1}$ such that $\norm{\tU-U}_{\rm F} \leq \epsilon$.  
Now, \cref{lem:circuit_entanglement} implies
\begin{align}
    \norm{\tU-U_{\rm p}}_{\rm F}^2 &\geq \frac{1}{4} \lr{\frac{N_R-2dN_C-1}{N_R}}^2 \nonumber\\&\geq \frac{1}{4} \lr{\frac{1}{2} - \frac{2(2\times 5^{k-1})N_C}{N_R}-\frac{1}{N_R}}^2. 
\end{align}
We thus have
\begin{align}
     \norm{\tU-U_{\rm p}}_{\rm F} &\geq \frac{1}{4} - \frac{(2\times 5^{k-1})N_C}{N_R} - \frac{1}{2N_R} \nonumber\\
     &\geq \frac{1}{4} - \frac{2(2\times 5^{k-1})N_C+1}{2N_R} .
\end{align}
    From the reverse triangle inequality, we have
\begin{align}
    \norm{U-U_{\rm p}}_{\rm F} &\geq \left|\norm{\tU-U_{\rm p}}_{\rm F} - \norm{\tU-U}_{\rm F} \right| \nonumber\\
    &\geq \frac{1}{4} - \frac{2(2\times 5^{k-1})N_C+1}{2N_R} - \epsilon.
\end{align}
Since $k$ is the smallest positive integer such that $\frac{1}{2k+1} \leq \delta$,
\begin{align}
    &\frac{1}{2(k-1)+1} > \delta \nonumber\\ &\implies k-1 < \frac{1-\delta}{2\delta}.
\end{align}
Therefore,
\begin{align}
    & N_R > \frac{4(2\times 5^{\frac{1-\delta}{2\delta}})N_C + 2}{1-8\epsilon} \nonumber\\
     \implies & N_R > \frac{4(2\times 5^{k-1})N_C + 2}{1-8\epsilon} \nonumber\\
     \implies & \norm{U-U_{\rm p}}_{\rm F} > \epsilon,
\end{align}
and $U$ cannot achieve the desired permutation.
Selecting any constant $\epsilon$ allows us to pick $w_{\delta} = w_{\delta, \epsilon}$ for which the theorem holds.
\end{proof}

A straightforward application of \cref{thm:qubit_lower} to the star graph yields the following result.
\begin{corollary}
For any $\delta \in (0, 1/3]$, if $N> 8(2\times 5^{\frac{1-\delta}{2\delta}}) + 4$, 
the Hamiltonian routing time for the star graph on $N+1$ vertices with $\Delta = \frac{1}{\sqrt{N}}$ satisfies $\hrtd{S_{N+1}} = \bigomega{N^{\frac{1}{2}-\delta}}$.
\end{corollary}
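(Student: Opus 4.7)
The plan is to apply \cref{thm:qubit_lower} directly with a natural tripartition of the star graph. Since $S_{N+1}$ consists of a single center vertex adjacent to all $N$ leaves, and no two leaves share an edge, any partition of the leaves into two sets yields a valid tripartition whose bottleneck is the center. Concretely, I would take $C$ to consist of the single center vertex and split the $N$ leaves into two subsets $L$ and $R$ of equal size $N/2$ (if $N$ is odd, isolate one leaf from the rest of the argument and handle it with a final $O(1)$-depth SWAP, which does not affect the asymptotic bound). This gives $N_L = N_R = N/2$ and $N_C = 1$, so the vertex-bottleneck hypothesis $N_L \geq N_R \geq N_C$ of the theorem holds.

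With these parameters, the conclusion of \cref{thm:qubit_lower} specializes to
\begin{equation}
    \hrtd{S_{N+1}} = \bigomega*{\frac{N_R^{1-\delta}}{\sqrt{N_L}\, N_C}} = \bigomega*{\frac{(N/2)^{1-\delta}}{\sqrt{N/2}}} = \bigomega{N^{\frac{1}{2}-\delta}},
\end{equation}
which is exactly the claimed bound. The size hypothesis $N > 8(2\times 5^{(1-\delta)/(2\delta)}) + 4$ of the corollary is precisely the condition $N_R = N/2 > 4(2\times 5^{(1-\delta)/(2\delta)}) N_C + 2$ required by the theorem when $N_C = 1$, so no additional assumption is needed.

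The only nuance concerns the minimum segment width. The theorem is stated for $\Delta = w_\delta/\sqrt{N_L}$, whereas the corollary asserts the bound for $\Delta = 1/\sqrt{N}$. The two differ only by the constant factor $\sqrt{2}\, w_\delta$ (since $\sqrt{N_L} = \sqrt{N/2}$), and inspection of the proof of \cref{lem:circuit_approx} shows that the only role of $w_\delta$ is to ensure a single time-independent segment of the Hamiltonian spans at most a constant number of Trotter windows; any constant-factor rescaling of $w_\delta$ is therefore harmless and can be absorbed by the implicit constant in $\bigomega{\cdot}$. I do not anticipate any genuine obstacle here: the whole content of the corollary is a parameter substitution into \cref{thm:qubit_lower}, and the only thing to check carefully is that the tripartition is legitimate (which is immediate from the structure of $S_{N+1}$) and that the threshold on $N$ translates correctly (which it does on the nose).
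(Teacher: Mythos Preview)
Your proposal is correct and follows exactly the approach the paper intends: the paper states the corollary as ``a straightforward application of \cref{thm:qubit_lower} to the star graph'' without further proof, and your tripartition $N_L=N_R=N/2$, $N_C=1$ is the natural one that makes the threshold condition and the $\Omega(N^{1/2-\delta})$ bound drop out exactly. Your observation that the segment-width discrepancy between $w_\delta/\sqrt{N_L}$ and $1/\sqrt{N}$ is a harmless constant rescaling absorbed into $c_{\delta,\epsilon}$ (and hence into the $\Omega$-constant) is also correct and is in fact more careful than the paper, which does not comment on this point.
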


\section{Entanglement flow through bottlenecks}\label{sec:ent_cap}

In this section, we investigate the time taken to generate entanglement between subsystems connected by a bottleneck. In a system with a vertex bottleneck, the small total entangling theorem (\cref{lem:STE}) bounds the amount of entanglement that can be generated between $L$ and $RC$ by a circuit of depth $t$ as $\Delta S_L(\rho) \leq 2tN_C$. On the other hand, the small incremental entangling theorem (\cref{lem:SIE}) \cite{SIE} and other results on entanglement flow \cite{flow} bound the entangling rate in terms of the Hamiltonian operator norm as $\frac{\dd S_L(\rho)}{\dd t} \le c \|H_{LC}\|N_C $. When $N_C \ll N_L$, these two bounds differ greatly. Integrating the bound from \cref{lem:SIE} for an evolution of time $t$, we obtain 
\begin{equation}
    \Delta(S_L(\rho)) \le c N_LN_Ct.
\end{equation}  
Therefore, if this bound were saturable, by evolving for the same amount of time in a Hamiltonian-model rather than a gate-based model, we would obtain a speedup by a factor of $N_L$ in our ability to generate entanglement.

At first, this seems very counter-intuitive. Indeed, in systems of free fermions, we showed a tighter bound of $\bigo{N_C\sqrt{N_L}}$ on the entangling rate between $L$ and $RC$ in \cref{lem:free_entangling}.
On the other hand, in systems of free bosons, the entangling rate can be very large, depending on the initial state. Consider a system with just two modes.
If the initial state is the Fock state $\ket{N, 0}$, a simple calculation shows that applying a 50/50 beam splitter on the two modes leads to a mode entanglement of $ \bigtheta{\log N}$.

A simple example illustrates that a large increase of the entanglement rate with continuous evolution may be possible for qubits as well. Consider the system depicted in \cref{fig:ghz}. In a star graph, starting with an initial GHZ state on the leaves and $\ket{0}$ on the center, we can evolve into a GHZ state on the leaves and the center together in time $\pi/N$. Since the GHZ state has a von Neumann entropy of 1 between a single qubit and the remaining qubits, this implies that the entangling rate at some time must have been at least linear in $N$.

\begin{figure}
    % \centering
    \includegraphics[width=0.5\linewidth]{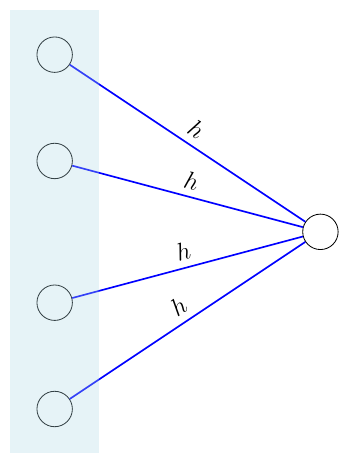}
    \caption{Linear entangling rate between $N$ leaves of the star graph and the center vertex. If the initial state is a product state of the $N$-qubit GHZ state $\frac{1}{\sqrt{2}}(\ket{00\dots0} + \ket{11\dots1})$ on the leaves, and $\ket{0}$ on the center, then evolving for time $\pi/N$ with interaction $h=\ketbra{1-}{1-}_{l_i, c}$ (i.e., the generator of a CNOT) on each edge results in a final $(N+1)$-qubit GHZ state on all the vertices. 
    }
    \label{fig:ghz}
\end{figure}

Although this system illustrates a worst-case counterexample for the instantaneous entangling rate through a bottleneck, the possibility of the bottleneck restricting our ability to produce entanglement between $L$ and $R$ remains open. Indeed, if it were always possible to generate fast entanglement, starting in any state, we could use this as a resource to perform routing. The existence of states that allow for fast entangling rates was used in Ref.~\cite{adv_lim} to obtain a fast routing protocol for the vertex barbell graph, although it also required very fast interactions with ancilla qubits. Our lower bound on the routing time in systems with a vertex bottleneck gives a hint that such states may be rare. Following this hint, we instead look at the entanglement that can be generated \textit{on average} starting with a state drawn from a 1-design.

We define the entanglement capacity $C_{\Delta}(\ket{\psi}, t):=|S_L(U(H,t)\ketbra{\psi}{\psi} U(H,t)^\dagger - S_L(\ketbra{\psi}{\psi})|$ as the amount of entanglement between $L$ and $RC$ that can be generated by an architecture-respecting piecewise time-dependent Hamiltonian $H$ of minimum segment width $\Delta$ in time $t$, starting in the state $\ket{\psi}$. 
Given a pure-state ensemble $\mu$, we define the average entangling capacity as $\Cavg$.
We prove that, in a system with a vertex bottleneck, the average entangling capacity for any 1-design \cite{Ambainis_Emerson_2007,Gross_Audenaert_Eisert_2007,Renes_Blume-Kohout_Scott_Caves_2004} scales with $N_L$ as $\bigo{\sqrt{N_L}}$.
This can be contrasted with the worst-case scaling in $N_L$ of $\bigo{N_L}$ for the entanglement capacity that is obtained by integrating the entangling rate obtained from \cref{lem:SIE}.

\begin{theorem}\label{thm:entangling_capacity}
    Let $\mu$ be a pure-state 1-design and let $H$ be a piecewise time-dependent Hamiltonian with minimum segment width $\Delta = 1/\sqrt{N_L}$ that respects the connectivity constraints of a graph with a vertex bottleneck.
    Then for any $0 < \delta \leq 1/6$,
    \begin{equation}
      \Cavg =\bigo{1 + tN_C\sqrt{N_L} \left(N_LN_R/N_C \right)^{\delta}}.
    \end{equation}
\end{theorem}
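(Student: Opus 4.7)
The plan is to adapt the Trotter approximation strategy of \cref{lem:circuit_approx} to the entanglement-capacity setting. I would approximate $U(H,t)$ by a short-depth architecture-respecting Trotter circuit $\tU$ over short time windows, bound the entanglement generated by $\tU$ using \cref{lem:STE}, and control the deviation $U-\tU$ using the Fannes--Audenaert inequality together with a 1-design averaging trick that turns the Frobenius-norm bound of \cref{lem:trotter_frobenius} into a bound on the \emph{expected} pure-state distance. The key leverage is that \cref{lem:trotter_frobenius} is tighter in the Frobenius norm than the operator-norm SIE argument, so on average we bypass the naive $O(tN_LN_C)$ bound.

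Concretely, I would partition $[0,t]$ into $N_w=t/\tau$ windows of duration $\tau<\Delta$, so that each window contains at most two time-independent segments of the piecewise Hamiltonian (handled as in \cref{lem:circuit_approx}). In each window, applying \cref{lem:trotter_frobenius} with a single Trotter step at order $2k$ yields an architecture-respecting circuit $\tU_i$ of depth $O_k(1)$ with $\|U_i-\tU_i\|_{\rm F}\le g(k)\,\tau^{2k+1}N_L^{k}\sqrt{N_R N_C}$. Telescoping the entanglement change across windows, each per-window contribution $|S_L(U_i\rho_i U_i^\dagger)-S_L(\rho_i)|$ is bounded by $|S_L(\tU_i\rho_i\tU_i^\dagger)-S_L(\rho_i)|+|S_L(U_i\rho_i U_i^\dagger)-S_L(\tU_i\rho_i\tU_i^\dagger)|$; the first term is at most $O(N_C)$ by \cref{lem:STE}, and the second is bounded via Fannes--Audenaert by $O(\epsilon_i N_L+h(\epsilon_i))$, where $\epsilon_i=\|(U_i-\tU_i)\ket{\phi_i}\|$ and $\ket{\phi_i}$ is the pure-state representative of $\rho_i$.

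The 1-design averaging enters as follows. Because $\mu$ is a 1-design and $U_{i-1}\cdots U_1$ is unitary, the ensemble of $\ket{\phi_i}$ is also a 1-design, so $\mathbb{E}_\psi[\epsilon_i^2]=\|U_i-\tU_i\|_{\rm F}^2$, and Cauchy--Schwarz gives $\mathbb{E}_\psi[\epsilon_i]\le\|U_i-\tU_i\|_{\rm F}$. Applying Jensen's inequality to the concave $h$ and summing the telescoped per-window bounds, the averaged capacity satisfies $\Cavg = O(tN_C/\tau)+O((t/\tau)N_L\|U_i-\tU_i\|_{\rm F})+O(1)$.

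Balancing the two polynomial terms by choosing $\tau\sim N_L^{-1/2-1/(4k+2)}(N_C/N_R)^{1/(4k+2)}$ gives the total scaling $O(tN_C\sqrt{N_L}(N_LN_R/N_C)^{1/(4k+2)})$, which matches the target for any $\delta\ge 1/(4k+2)$; the hypothesis $\delta\le 1/6$ then ensures that a small $k$ suffices and that $\tau<\Delta=1/\sqrt{N_L}$, so the piecewise-handling argument of \cref{lem:circuit_approx} applies verbatim. The main obstacle I anticipate is the careful bookkeeping of the parameter trade-off between Trotter order $k$, window width $\tau$, and segment width $\Delta$, together with verifying that the accumulated $h(\epsilon_i)$ Fannes contributions across all $N_w$ windows are genuinely absorbed into the additive $O(1)$ term rather than inflating the polynomial scaling.
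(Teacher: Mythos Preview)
Your approach is correct and reaches the claimed bound, but it is organized more elaborately than necessary. The paper does not partition $[0,t]$ into windows or telescope; it applies a \emph{single global} $(2k)$th-order Trotter approximation $\tU$ to $U=U(H,t)$ with $M$ steps (depth $d=2\times 5^{k-1}M$), then for each fixed $\ket{\psi}$ writes
\[
C_\Delta(\ket{\psi},t)\le \bigl|S_L(U\ketbra{\psi}{\psi} U^\dagger)-S_L(\tU\ketbra{\psi}{\psi}\tU^\dagger)\bigr| + 2dN_C,
\]
invokes Fannes--Audenaert \emph{once}, and uses the $1$-design property \emph{once} (on the initial state) to convert the expected pure-state distance into $\norm{U-\tU}_{\rm F}$, to which \cref{lem:trotter_frobenius} applies directly. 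Choosing $M=\lceil t\sqrt{N_L}(N_LN_R/N_C)^{1/(4k+2)}\rceil$ balances $2dN_C$ against $2N_L\norm{U-\tU}_{\rm F}$, and the piecewise case follows because $t/M\le 1/\sqrt{N_L}=\Delta$.

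Your per-window scheme recovers the same parameter trade-off (your $\tau$ coincides with the paper's $t/M$), but introduces two complications the paper avoids: the per-window $1$-design argument via unitary invariance, and the accumulation of $N_w$ Fannes correction terms $\sum_i h(\epsilon_i)$. The latter is \emph{not} absorbed into the additive $O(1)$ as you anticipate; the best uniform bound is $\sum_i h(\epsilon_i)\le N_w$, which is the main term divided by $N_C$ and hence is absorbed into the $O\bigl(tN_C\sqrt{N_L}(N_LN_R/N_C)^\delta\bigr)$ term---so the argument still closes, just not for the reason you state. The paper's global formulation sidesteps this bookkeeping entirely. (Also, the constraint $\tau<\Delta$ holds automatically from $N_C\le N_LN_R$, independently of $\delta\le 1/6$.)
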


\begin{proof}
As in \cref{sec:qubit_bound}, let $U \coloneqq U(H,t)$, and $\tU$ be the circuit obtained by the $(2k)^{\mathrm{th}}$ order Trotter-Suzuki formula for $U$, where $k$ is chosen below. The entangling capacity of $H$ on a particular input pure state $\ket{\psi}$ in time $t$ is
\begin{align}
    C_{\Delta}(\ket{\psi},t) &= |S_L(U\ketbra{\psi}{\psi} U^\dagger) - S_L(\ketbra{\psi}{\psi})| \nonumber\\
    &= \Big|S_L(U\ketbra{\psi}{\psi} U^\dagger) -S_L(\tU\ketbra{\psi}{\psi}\tU^\dagger) 
    \spliteq +S_L(\tU\ketbra{\psi}{\psi}\tU^\dagger) - S_L(\ketbra{\psi}{\psi})\Big| \nonumber\\
    &\leq |S_L(U\ketbra{\psi}{\psi} U^\dagger) -S_L(\tU\ketbra{\psi}{\psi}\tU^\dagger)| + 2dN_C \nonumber\\
    &\leq 1 + N_L \norm{U\ketbra{\psi}{\psi} U^\dagger - \tU\ketbra{\psi}{\psi} \tU^\dagger}_1 + 2dN_C,
\end{align}
where in the final line we have applied the Fannes-Audenaert inequality \cite{Audenaert_2007}.

Note that
\begin{align}
    &\norm{U \ketbra{\psi}{\psi} U^\dagger - \tU \ketbra{\psi}{\psi} \tU^\dagger}_1 \nonumber\\
   &=\norm{(U-\tU)\ketbra{\psi}{\psi}U^\dagger + \tU \ketbra{\psi}{\psi}(U-\tU)^\dagger}_1 \nonumber\\
    & \le \label{eq:thm5triangle} \norm{(U- \tU) \ketbra{\psi}{\psi}U^\dagger}_1 + \norm{\tU \ketbra{\psi}{\psi} (U - \tU)^\dagger}_1 \\
    & = \label{eq:thm5purenorm1} 2 \sqrt{\brakett{\psi}{(U - \tU)^\dagger(U - \tU)}{\psi}},
\end{align}
where in \cref{eq:thm5triangle}, we have used the triangle inequality, and \cref{eq:thm5purenorm1} follows from the definition of the 1-norm.
For states drawn from $\mu$, applying Jensen's inequality and then using the fact that $\mu$ is a 1-design,
\begin{align}
    & \expectationq*{\ket{\psi} \sim \mu}{2 \sqrt{\brakett*{\psi}{(U - \tU)^\dagger(U - \tU)}{\psi}}} \nonumber\\
    &\leq 2\sqrt{\expectationq*{\ket{\psi} \sim \mu}{\brakett{\psi}{(U - \tU)^\dagger(U - \tU)}{\psi}}} \nonumber\\
    &= \label{eq:thm5design} 2\norm{U- \tU}_{\rm F}.
\end{align}
Thus, we can bound the expected capacity by the Frobenius distance
\begin{equation}
    \Cavg \leq 1 + 2dN_C + 2N_L\norm{U- \tU}_{\rm F}.
\end{equation}

As in \cref{lem:circuit_approx} we first treat the case of time-independent $H$, and then show how the result generalizes to piecewise time-independent $H$.
Using \cref{lem:trotter_frobenius} with
\begin{equation}
M = \ceil*{t \sqrt{N_L}\left(N_LN_R/N_C\right)^{\frac{1}{4k+2}}},
\end{equation}
we obtain
\begin{equation}
    \Cavg = \bigo{1 + tN_C\sqrt{N_L} \left(N_LN_R/N_C \right)^{\frac{1}{4k+2}}}
\end{equation}
since $d \leq  2 \times 5^{k-1}M$ \cite{Trotter_Frob22}.
The result for the time-independent case follows by choosing $k$ large enough that $1/(4k+2) \leq \delta$.

Since $t/M \leq 1/\sqrt{N_L}$ is smaller than the minimum segment width, the Trotter error formula can be applied, and the result holds in the piecewise time-independent case with minimum segment width $1/\sqrt{N_L}$.
We note that this restriction on the segment width arises for technical reasons (as in \cref{eq:segment_restriction}). The number of segments $M$ cannot be chosen to be very large, since this would allow for large $d$, giving a looser bound on the amount of entanglement generated. Our segment width is thus restricted such that $t/M$ is smaller than the segment width without making $M$ very large.

\end{proof}

Though there exist examples of states with high entangling rate with a bottleneck (as in \cref{fig:ghz}),
\cref{thm:entangling_capacity} shows that  in any 1-design $\mu$ there are few such states that can sustain a high entangling rate for long times. In particular, we can show that the entanglement rate for a randomly drawn state over any given period of time is unlikely to be significantly larger than $\bigo{N_C\sqrt{N_L}}$. Specifically, 
applying Markov's inequality and using \cref{thm:entangling_capacity}, we obtain
\begin{equation}
\label{eq:markov_rate}
    \Prob_{\ket\psi \sim \mu} \bracket*{C_{\Delta}(\ket{\psi}, t) \geq \Gamma t} = \bigo*{\frac{1+tN_C\sqrt{N_L} \left(N_LN_R/N_C \right)^{\delta}}{\Gamma t}},
\end{equation}
which is vanishingly small when
\begin{equation}
    \Gamma \gg N_C\sqrt{N_L} \paren*{\frac{N_LN_R}{N_C}}^{\delta}.
\end{equation}
For a state $\ket{\psi}$ drawn from $\mu$, let $\ket{\psi(t)} = U(H,t)\ket{\psi}$ be a state along its trajectory under evolution by $H$.
\Cref{eq:markov_rate} shows that the averaged instantaneous entanglement rate over time,
\begin{equation}
    C_{\Delta}(\ket{\psi(0)}, T) = \frac{1}{T} \int_0^T \frac{\dd S_L(\ketbra{\psi(t)}{\psi(t)})}{\dd t} \dd t,
\end{equation}
is not greater than $\bigo{N_C\sqrt{N_L} \left(N_LN_R/N_C \right)^{\delta}}$ with high probability.

\section{Discussion}
\label{sec:discussion}

In this paper, we showed a lower bound that scales as $\bigomega{N_R^{1-\delta}/\sqrt{N_L}N_C}$ for any $\delta>0$ on the routing time in graphs with a vertex bottleneck.  For the star graph, this provides a lower bound of $\bigomega{\sqrt{N^{1-\delta}}}$ on the routing time. We further showed an optimal routing protocol that saturates this bound in systems of free fermions on the star graph.

For graphs with a vertex bottleneck, the best previous lower bound we are aware of is $\Omega(1)$, following from the Small Incremental Theorem \cite{SIE} or a Lieb-Robinson bound \cite{QSLreview23}. However, unlike those results, a limitation of our work is our introduction of additional assumptions about the allowed Hamiltonian: piecewise time-independence, absence of ancillas, and bounds on the on-site terms $\norm{h_i}$.
A straightforward argument can be used to prove a similar lower bound for routing with time-dependent Hamiltonians for which the Frobenius norm has suitably bounded derivative, by simulating the time-dependent Hamiltonian evolution with a piecewise time-independent evolution, as in \cite{Berry_Childs_Cleve_Kothari_Somma_2014}.
A natural direction for future work is to extend our lower bound to systems with more general Hamiltonians with more general time dependence or unbounded local terms. Our assumption on the piecewise time dependence of the allowed Hamiltonians stems from our use of a Trotterized circuit with Frobenius commutator norm scaling of error in \cref{lem:circuit_approx}. If bounds on the Trotter error for random input states can be extended beyond piecewise time-independent Hamiltonians to time-dependent Hamiltonians whose norm has unbounded derivative, our result could be extended to arbitrary time-dependent Hamiltonian routing as well. Such an approach might also extend our result to Hamiltonians with unbounded local (on-site) terms, since we could move into the interaction picture, removing the on-site terms and adding time-dependence to the non-local terms. We conjecture that a similar lower bound can be proved for more general time-dependent Hamiltonians (whose norms may have large derivatives), and that our assumed bound on the on-site terms can be removed. Another direction for future work is to investigate routing with a piecewise time-independent Hamiltonian of smaller segment width. For technical reasons, our result is restricted to segment width $\propto 1/\sqrt{N_L}.$ It is natural to ask whether routing with a piecewise time-independent Hamiltonian of arbitrarily small segment width is equivalent to routing with arbitrary time-dependent Hamiltonians (i.e., whether $\lim_{\Delta \to 0} \hrtd{G} = \hrt{G}$).

We also assumed that interactions between any two sites in the same partition have bounded norm. This is a natural assumption in systems where interactions between the partitions are physically similar to interactions within the partitions. A possible direction for future work would be to relax this assumption. We expect that this would lead to a looser lower bound on the routing time. Indeed, in systems with ancillas and unbounded interactions within each partition, we can perform routing faster. An example of this is the $W$-state broadcasting-based protocol of \textcite{adv_lim}, which performs Hamiltonian routing in time $\bigo{\sqrt{N}}$ on the vertex barbell graph with $2N+1$ vertices (\cref{fig:vertex_barbell}), obtaining a speedup over the gate-based model (where routing requires depth $\bigtheta{N}$). 
Recent work \cite{yin2024fastaccurateghzencoding} implies the possibility of even faster routing, in $\poly(\log N)$ time, in the same system. However, speedups obtained in this routing model may not always apply since they require the experimentally challenging capability of fast swap operations between qubits and their ancillas. Our techniques provide an $\bigomega{N^{1/2-\delta}}$ lower bound in the more realistic setting of the vertex barbell graph without ancillas.   
Even with bounded 2-site interactions, a related question for future work is to obtain lower bounds that apply when ancillas are permitted.
\begin{figure}[t]
   \centering
\includegraphics[width=\columnwidth]{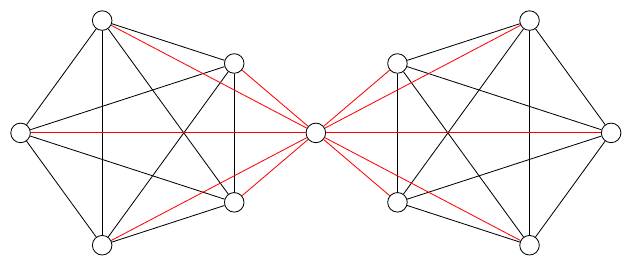}
\caption{The vertex barbell graph with $2N+1$ vertices consists of two copies of the complete graph $K_N$ connected by a common vertex. Here, we depict the vertex barbell with 11 vertices.}
\label{fig:vertex_barbell}
\end{figure}

A more refined analysis might be used to remove the $\delta$ dependence of our qubit lower bound. While our proofs hold for arbitrarily small positive constant $\delta$, we conjecture that the result holds with $\delta=0$. A limitation of our techniques is that our lower bound has a constant factor inversely proportional to $\delta$, so taking the limit as $\delta \rightarrow 0$ causes the lower bound to diverge.

We also showed an upper bound on the average entanglement capacity, or average amount of entanglement that can be produced in time $t$, through a vertex bottleneck. Intuitively, one expects a bottleneck to limit entangling rates, but surprisingly there are states for which fast entanglement with a central qubit is possible. One such example is the $W$ state, which can be used to create constant entanglement with a central vertex in time $\bigo{1/\sqrt{N}}$ \cite{adv_lim}. The GHZ state allows for an even higher entangling rate, producing a constant amount of entanglement in time $\bigo{1/N}$. Our bound on the average entangling capacity for a 1-design shows that states with an entangling rate higher than $\bigomega{\sqrt{N}}$ 
are rare, and constrains their use for tasks such as routing. This result may be interpreted as showing that most states cannot support fast entangling through a vertex bottleneck. An open question is whether this bound can be extended to mixed states as well, which are known to exhibit surprising entanglement dynamics \cite{edss}. It would also be interesting to investigate how this bound constrains other quantum information processing tasks based on their requirements of entanglement. Similarly to our routing lower bound, our entanglement capacity upper bound can be extended to time-dependent Hamiltonians whose norm has bounded derivative by approximating their evolution by a piecewise time-dependent one. Natural questions for future work are to extend the entanglement capacity upper bound to Hamiltonians with arbitrary time dependence, with fast on-site interactions and fast interactions within partitions, and with ancillas. We also conjecture that this bound holds with $\delta = 0$.

The existence of states that saturate the worst-case bounds on entangling rate, which depend on the operator norm, necessitates our proof by the average case, using the Frobenius norm. For quantum information tasks involving operators with significant difference in the operator norm and Frobenius norm (as for nested commutators of Hamiltonians with a vertex bottleneck), our results provide an example where studying the average case yields tighter bounds than the worst case.

Recent work by one of the authors has shown a fast approximate broadcasting (i.e., GHZ encoding) protocol on the complete graph \cite{yin2024fastaccurateghzencoding}. While this protocol has not been proven to have asymptotically vanishing error, numerics suggest this may be the case. Such a protocol, coupled with fast ancilla interactions, would allow for approximate routing on the vertex barbell in time $\poly(\log N)$. An open question is to resolve whether such broadcasting can be done with vanishing error, or even exactly. The best known exact protocol for routing on the vertex barbell (with fast local ancillas) involves broadcasting into a $W$ state \cite{adv_lim}, and takes time $\bigo{\sqrt{N}}$.

While we developed optimal protocols for routing free particles, a natural open question is whether such protocols can be sped up by allowing interaction terms or even terms that do not preserve particle number, such as $l_i^{\dagger} c_i^{\dagger}$. In bosonic systems, introduction of a certain type of interaction term has been shown to provide speedups for the closely related task of state transfer \cite{Vu_Kuwahara_Saito_2024}. 

In \cref{sec:qubit_nogo}, we discussed the difficulty of extending our free-particle routing protocols to work for qubits or qudits. A major open question is to obtain fast (sublinear-time) qubit routing protocols through bottlenecks that leverage continuous time evolution. While such results have been obtained on the barbell graph using $W$ states (and we expect that the result of \textcite{yin2024fastaccurateghzencoding} allows for even faster routing using GHZ states), these protocols require the use of ancillas with fast interactions. Currently, we do not know of any sublinear-time routing protocol for qubits on the star graph.

\section*{Acknowledgments}
We thank Adam Ehrenberg and Joe Iosue for helpful discussions. D.D.\ acknowledges support by the NSF GRFP under Grant No.~DGE-1840340 and an LPS Quantum Graduate Fellowship. C.Y. and A.L. acknowledge support from the Department of Energy under Quantum Pathfinder Grant DE-SC0024324. 
E.S.\ acknowledges support by the Laboratory Directed Research
and Development program of Los Alamos National Laboratory under project number 20210639ECR
and then by the U.S. Department of Energy, Office of Science, National Quantum Information Science Research Centers, Quantum Science Center. D.D., A.Y.G, A.M.C., and A.V.G.~were supported in part by the DoE ASCR Quantum Testbed Pathfinder program (awards No.~DE-SC0019040 and No.~DE-SC0024220), NSF QLCI (award No.~OMA-2120757), and the U.S.~Department of Energy, Office of Science, Accelerated Research in Quantum Computing, Fundamental Algorithmic Research toward Quantum Utility (FAR-Qu). D.D., A.Y.G, and A.V.G.~also acknowledge support from NSF STAQ program, AFOSR MURI, ARL (W911NF-24-2-0107), DARPA SAVaNT ADVENT, NQVL:QSTD:Pilot:FTL. D.D., A.Y.G, and A.V.G.~also acknowledge support from the U.S.~Department of Energy, Office of Science, National Quantum Information Science Research Centers, Quantum Systems Accelerator.

% \bibliography{sources}

%apsrev4-2.bst 2019-01-14 (MD) hand-edited version of apsrev4-1.bst
%Control: key (0)
%Control: author (8) initials jnrlst
%Control: editor formatted (1) identically to author
%Control: production of article title (0) allowed
%Control: page (0) single
%Control: year (1) truncated
%Control: production of eprint (0) enabled
%

\appendix
\onecolumngrid
\section{Proof of Lemma \ref{lem:trotter_frobenius}}
\label{sec:frob_proof}

In this section, we prove \cref{lem:trotter_frobenius}. 
First, we upper bound the Frobenius norms of commutators of terms in the Hamiltonian, which allows us to construct low-depth circuit approximations to the Hamiltonian evolution using the Trotter-Suzuki formula.

\begin{lemma}\label{lem:H<sqrtN_app}
Given a sequence  $\eta_1,\ldots,\eta_{P+Q}\in \{LC,R\}$,
    let the nested commutator $[H_{\eta_{P+Q}}, \ldots, [H_{\eta_2}, H_{\eta_1}]]$ include $H_{LC}$ a total of $P$ times and $H_R$ a total of $Q$ times.
    Then there exists an $f(P,Q) > 0$ with no dependence on $N_L, N_C, N_R$ such that
    \begin{equation}\label{eq:H<sqrtNapp}
        \norm{[H_{\eta_{P+Q}}, \ldots, [H_{\eta_2}, H_{\eta_1}]]}_{\rm F} \le  f(P,Q) \sqrt{N_L^P N_R^QN_C}.
    \end{equation}
\end{lemma}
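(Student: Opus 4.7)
My plan is to expand each $H_{LC}$ and $H_R$ in the Pauli basis, writing
\begin{equation*}
T = \sum_{(\alpha_i)} \paren*{\prod_i h_{\alpha_i}} [\mathbf{X}_{\alpha_{P+Q}}, \ldots, [\mathbf{X}_{\alpha_2}, \mathbf{X}_{\alpha_1}]],
\end{equation*}
where each inner nested commutator of Pauli strings is either zero or $\pm (2i)^{P+Q-1}\mathbf{X}_{\gamma(\alpha)}$ for a Pauli string $\gamma(\alpha)$ supported in $\bigcup_i \text{supp}(\mathbf{X}_{\alpha_i})$. Collecting like Paulis gives $T = \sum_\gamma c_\gamma \mathbf{X}_\gamma$, and by orthogonality of Pauli strings in the normalized Frobenius inner product, $\norm{T}_{\rm F}^2 = \sum_\gamma |c_\gamma|^2$. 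The lemma then reduces to bounding this sum.

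The key structural observation is that, for the inner nested commutator to be nonzero, every $\mathbf{X}_{\alpha_i}$ must share a site with the accumulated support where the two Pauli letters differ. In particular, a 1-local term cannot introduce a new site (it only modifies an existing one), while a 2-local term adds at most one new site per step. Moreover, because $H_{LC}$ is supported in $L\cup C$ and $H_R$ in $R\cup C$, any nonzero sequence must connect its $L$-supported and $R$-supported Pauli strings through at least one site in $C$, which will produce the lone factor of $N_C$ in the bound. I will then classify sequences by \emph{type}: for each $i$, whether $\mathbf{X}_{\alpha_i}$ is 1-local (on $L$, $C$, or $R$) or 2-local ($LL$, $LC$, $CC$, $RR$, or $CR$). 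The number of types depends only on $P+Q$.

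For each type separately, I will count the number of distinct Paulis $\gamma$ that can arise (a product of factors $N_L$, $N_R$, $N_C$ from the choice of new sites in $L$, $R$, $C$, respectively) and bound $\max_\gamma |c_\gamma|^2$ via a product of single-term coefficient bounds from \cref{eq:LR<1} (with 1-local $|h|^2 \le N \le 3N_L$ and 2-local $|h|^2 \le 1$). The $\sqrt{N}$ bound on 1-local terms is precisely calibrated so that each $H_{LC}$ factor contributes at most $O(N_L)$ to $\sum_\gamma|c_\gamma|^2$---either as an extra $L$-site of a 2-local term, or as the squared coefficient of a 1-local term---and each $H_R$ factor contributes $O(N_R)$; extra sites in $C$ beyond the one mandatory connecting site come with $O(1)$ coefficients, and their $N_C$ factors are absorbed into $N_L$ or $N_R$ using $N_C\le N_R\le N_L$. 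The main obstacle will be organizing this case analysis uniformly across all types, which I intend to handle by induction on $P+Q$: the base case $P=Q=1$ is verified by direct computation (the dominant contributions come from $CR$ paired with 1-local $C$, $LC$, or $CC$, each giving $\sum_\gamma|c_\gamma|^2=O(N_L N_R N_C)$), and the inductive step amounts to showing that commuting a Pauli in $T_{P',Q'}$ with a new $H_{LC}$ (resp.\ $H_R$) multiplies the squared coefficient bound by $O(N_L)$ (resp.\ $O(N_R)$) while preserving the connectedness-through-$C$ structure. Since the number of types and the combinatorial prefactors depend only on $P+Q$, summing them yields a constant $f(P,Q)$ as claimed.
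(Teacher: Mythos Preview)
Your proposal is correct and follows essentially the same strategy as the paper's proof: expand in Pauli strings, use orthogonality in the normalized Frobenius inner product to reduce to $\sum_\gamma|c_\gamma|^2$, and run an induction on $P+Q$ with base case $P=Q=1$, showing that each additional $H_{LC}$ (resp.\ $H_R$) multiplies the bound by $O(N_L)$ (resp.\ $O(N_R)$). The paper makes the induction hypothesis precise by indexing Pauli strings according to their support sizes $(p_1,q,p_2)$ in $(L,R,C)$ and proving $|A|\le c_{P,Q}\,N_L^{(P-p_1)/2}N_R^{(Q-q)/2}N_C^{(1-p_2)/2}$; this is exactly the concrete bookkeeping your plan would need, in particular to handle the case where a $2$-local term \emph{removes} a site from the support (e.g.\ $[X_1X_2,X_1Z_2]\propto Y_2$), which contributes a factor of $N_L$ or $N_C$ to the \emph{coefficient} rather than to the count of distinct $\gamma$---a point your phrase ``bound $\max_\gamma|c_\gamma|^2$ via a product of single-term coefficient bounds'' elides but your inductive statement correctly captures.
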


\begin{proof}
The proof strategy is as follows. We first show that the nested commutator can be expressed in terms of sums over Pauli strings with bounded coefficients, using induction to bound the coefficients. The base case is the first-order commutator. We write out terms of the commutator and regroup coefficients of identical Pauli strings that originate from products of different terms, and then bound the resulting coefficient. We take a similar approach to recursively identify the structure of the higher-order nested commutator and bound its coefficients. Finally, we use orthogonality of Pauli strings under the Frobenius norm to compute the bound of \cref{eq:H<sqrtNapp}.

For convenience, we re-label the terms in $H$:
{\allowdisplaybreaks
\begin{align}
    H^{(L)} &= \sum_{l_1 \in L} h_{l_1}^{(L)\alpha_1} X^{\alpha_1}_{l_1}, &
    H^{(LL)}&= \sum_{\substack{l_1, l_2\in L \\ l_1< l_2}}  h_{l_1, l_2}^{(LL)\alpha_1\alpha_2} X_{l_1}^{\alpha_1} X_{l_2}^{\alpha_2}, &
    H^{(LC)}&= \sum_{\substack{l_1 \in L\\ c_1 \in C}}  h_{l_1, c_1}^{(LC)\alpha_1\gamma_1} X_{l_1}^{\alpha_1} X_{c_1}^{\gamma_1}, \notag \\
    H^{(C)} &= \sum_{c_1 \in C} h_{c_1}^{(C)\gamma_1} X^{\gamma_1}_{c_1}, &
    H^{(CC)}&= \sum_{\substack{c_1, c_2 \in C\\c_1< c_2}}  h_{c_1, c_2}^{(CC)\gamma_1\gamma_2} X_{c_1}^{\gamma_1} X_{c_2}^{\gamma_2}, \notag \\
    H^{(R)} &= \sum_{r_1 \in R} h_{r_1}^{(R)\beta_1} X^{\beta_1}_{r_1}, &
    H^{(RR)}&=\sum_{\substack{r_1, r_2 \in R \\r_1< r_2}}  h_{r_1, r_2}^{(RR)\beta_1\beta_2} X_{r_1}^{\beta_1} X_{r_2}^{\beta_2}, &
    H^{(RC)}&= \sum_{\substack{r_1 \in R \\ c_1 \in C}} h_{r_1, c_1}^{(CR)\beta_1\gamma_1} X_{r_1}^{\beta_1} X_{c_1}^{\gamma_1}
\end{align}
}%
(recall that we use the Einstein summation convention for $\alpha_1, \alpha_2, \beta_1, \beta_2, \gamma_1, \gamma_2 \in\{1,2,3\}$).
Here, $\forall \, i, l_i \in L, r_i \in R, c_i \in C$. We number the vertices in $L,C,R$ uniquely so that these are disjoint sets.
We group these terms as 
\begin{subequations}\begin{align}
    H_{LC} &= H^{(L)} + H^{(LL)} +  H^{(LC)} + H^{(C)} + H^{(CC)},\\
    H_R &= H^{(R)} + H^{(RR)} + H^{(RC)}.
\end{align}\end{subequations}
As mentioned above, we proceed by induction. 
For any sequence $\eta_1,\ldots,\eta_{P+Q}\in\{LC,R\}$, suppose there are $P\ge 1$ $LC$s and $Q\ge 1$ $R$s (the order does not matter). We will show by induction that the corresponding nested commutator, which we call a $(P,Q)$-commutator, is of the form
\begin{align}\label{eq:comm=}
    [H_{\eta_{P+Q}}, \ldots, [H_{\eta_2}, H_{\eta_1}]] &= \sum_{p_1=0}^{P} \sum_{p_2=1}^{P+1-p_1} \sum_{q=0}^{Q} \sum_{\li_1<\li_2<\cdots<\li_{p_1}} \sum_{c_1<c_2<\cdots<c_{p_2}} \sum_{\ri_1<\ri_2<\cdots<\ri_{q}} \notag \\&\qquad \Acoeff X_{\li_1}^{\alpha_1}\cdots X_{\li_{p_1}}^{\alpha_{p_1}} X_{\ri_1}^{\beta_1}\cdots X_{\ri_q}^{\beta_q} X_{c_1}^{\gamma_1} \cdots X_{c_{p_2}}^{\gamma_{p_2}}, \end{align}
    where
    \begin{equation}
     \abs{\Acoeff} \le c_{P,Q} {N_L}^{(P-p_1)/2}{N_R}^{(Q-q)/2}{N_C}^{(1-p_2)/2}, \label{eq:induction_A}
\end{equation}
and where $c_{P,Q}$ is a constant that does not depend on $N_L, N_C, N_R$. 

The base case is the first-order commutator 
\begin{align}
	\label{eq:comm_first_order}
    [H_{LC}, H_R] &= \sum_{c_1, r_1, c_1'} h_{c_1}^{(C)\gamma_1} h_{r_1, c_1'}^{(CR)\beta_1\gamma_1'} X_{r_1}^{\beta_1}[ X^{\gamma_1}_{c_1}, X_{c_1'}^{\gamma_1'}  ]
    + \sum_{c_1, c_2, r_1, c_1'} h_{c_1, c_2}^{(CC)\gamma_1\gamma_2} h_{r_1, c_1'}^{(CR)\beta_1\gamma_1'} X_{r_1}^{\beta_1} [X_{c_1}^{\gamma_1} X_{c_2}^{\gamma_2}, X_{c_1'}^{\gamma_1'} ] \nonumber \\
    &\qquad+ \sum_{l_1, c_1, r_1, c_1'} h_{l_1, c_1}^{(LC)\alpha_1\gamma_1} h_{r_1, c_1'}^{(CR)\beta_1\gamma_1'} X_{r_1}^{\beta_1}  X_{l_1}^{\alpha_1} [X_{c_1}^{\gamma_1}, X_{c_1'}^{\gamma_1'}] .
\end{align}
By expanding the commutators and grouping terms based on which sites they are supported (i.e., act non-trivially) on, we can rewrite this in the form
\begin{equation}
\label{eq:grouped_first_order}
 [H_{LC}, H_R] =\sum_{r_1, c_1} A^{\beta_1 \tilde{\gamma_1}}_{r_1 c_1} X_{r_1}^{\beta_1} X_{c_1}^{\tilde{\gamma_1}}
+ \sum_{r_1, c_1, c_2} A^{\beta_1 \tilde{\gamma_1} \tilde{\gamma_2}}_{r_1 c_1 c_2} X_{r_1}^{\beta_1} X_{c_1}^{\tilde{\gamma_1}} X_{c_2}^{\tilde{\gamma_2}}
+ \sum_{l_1, r_1, c_1} A^{\alpha_1 \beta_1 \tilde{\gamma_1}}_{l_1 r_1 c_1} X_{l_1}^{\alpha_1} X_{r_1}^{\beta_1}  X_{c_1}^{\tilde{\gamma_1}}.
\end{equation}
$A^{\beta_1 \tilde{\gamma_1}}_{r_1 c_1}, A^{\alpha_1 \beta_1 \tilde{\gamma_1}}_{l_1 r_1 c_1}, A^{\beta_1 \tilde{\gamma_1} \tilde{\gamma_2}}_{r_1 c_1 c_2}$ are coefficients of distinct Pauli strings in \cref{eq:grouped_first_order}. For conciseness, we leave $P,Q$ implicit in our notation for the first-order commutator, where $P=Q=1$. Also note that, for example, the coefficients $A^{\alpha_1 \beta_1 \tilde{\gamma_1}}_{l_1 r_1 c_1}$ and $ A^{\beta_1 \tilde{\gamma_1} \tilde{\gamma_2}}_{r_1 c_1 c_2}$ are distinguishable since the lower indices come from distinct sets (e.g., $l_1 \in L$ and $r_1 \in R$).
Each $A^{\beta_1 \tilde{\gamma_1}}_{r_1 c_1}$ is a sum of coefficients of $X_{r_1}^{\beta} X_{c_1}^{\tilde{\gamma_1}}$ from \cref{eq:comm_first_order}: \begin{equation}\label{eq:A_example}
    A^{\beta_1 \tilde{\gamma_1}}_{r_1 c_1} = 2\sum_{\gamma_1,\gamma_1'} h_{c_1}^{(C)\gamma_1} h_{r_1, c_1}^{(CR)\beta_1\gamma_1'} \mathbb{I}[X^{\gamma_1}_{c_1}X^{\gamma'_1}_{c_1}=X^{\tilde{\gamma_1}}_{c_1}],
\end{equation}
where $\mathbb{I}[x]$ is the indicator function that returns $0$ ($1$) if $x$ is false (true). For each $A^{\beta_1 \tilde{\gamma_1}}_{r_1 c_1}$, there are four contributing terms from \cref{eq:comm_first_order} (counting the multiplicity factor $2$ in \cref{eq:A_example}), obtained by expanding the commutator and making use of the Pauli multiplication rules. The number of such terms is the number of valid indices $(c_1, r_1, c_1', \beta_1, \gamma_1, \gamma_1')$ that, on regrouping \cref{eq:comm_first_order} to the form \cref{eq:grouped_first_order}, contribute to terms with new indices $(c_1, r_1, \beta_1, \tilde{\gamma_1})$. A tilde on an index indicates that it is an updated index in the commutator formed by the multiplication of two Paulis as in \cref{eq:A_example}, where $\tilde{\gamma_1}$ is the index corresponding to the single-qubit Pauli obtained on multiplying $X^{\gamma_1}$ and $X^{\gamma_1'}$. Indices without a tilde are unchanged and are identical to those in $H_{LC}, H_R.$ 

Likewise, each $A^{\alpha_1 \beta_1 \tilde{\gamma_1}}_{l_1 r_1 c_1}$ is the sum of four coefficients from \cref{eq:comm_first_order}, and each $A^{\beta_1 \tilde{\gamma_1} \tilde{\gamma_2}}_{r_1 c_1 c_2}$ is the sum of eight coefficients.

Hence \begin{subequations}
\begin{align}
    |A^{\beta_1 \tilde{\gamma_1}}_{r_1 c_1}| & \leq 4 \max |h_{c_1}^{(C)\gamma_1} h_{r_1, c_1}^{(CR)\beta_1\gamma_1}| \leq 4\sqrt{N}, \label{eq:factor4} \\
    |A^{\alpha_1 \beta_1 \tilde{\gamma_1}}_{l_1 r_1 c_1}| & \leq 4 \max |h_{l_1, c_1}^{(LC)\alpha_1\gamma_1} h_{r_1, c_1}^{(CR)\beta_1\gamma_1}| \leq 4, \\
    |A^{\beta_1 \tilde{\gamma_1} \tilde{\gamma_2}}_{r_1 c_1 c_2}| & \leq 8 \max |h_{c_1, c_2}^{(CC)\gamma_1\gamma_2} h_{r_1, c_1'}^{(CR)\beta_1\gamma_1'}| \leq 8.
\end{align}\end{subequations}
Now, as different Pauli strings are orthogonal with respect to the Frobenius norm, we have 
\begin{align}
    \norm{[H_{LC}, H_R]}_{\rm F}^2 &\leq \sum_{c_1, r_1} 16N
    + \sum_{c_1, c_2, r_1} 64
    + \sum_{l_1, c_1, r_1} 16\notag \\
    &\leq 16N \times N_C \times N_R + 64 N_C^2 \times N_R + 16 N_L \times N_C \times N_R,
\end{align}
satisfying \cref{eq:H<sqrtNapp} for $P=Q=1$ because we have assumed $N_L$ is larger than $N_R,N_C$.

We now consider higher-order commutators. 
We use the induction hypothesis \cref{eq:comm=} to prove that the same equation holds for the $(P+1,Q)$-commutator, i.e., adding one more outermost commutator with $H_{LC}$. The case $\widetilde{P}=P$, $\widetilde{Q}=Q+1$ follows analogously.  We use a tilde to denote next-order quantities, e.g., the numbers of $LC$s and $R$s in the new $\eta$ sequence are \begin{align}\label{eq:tPtQ}
    \widetilde{P}&=P+1,  & \widetilde{Q}&=Q.
\end{align}

We use $\cdots$ to indicate products over intermediate indices. For example, $X_{\li_1}^{\alpha_1}\cdots X_{\li_{j}}^{\alpha_j}$ indicates $\prod_{i=1}^{j} X_{\li_i}^{\alpha_i}$. We also use $X^0_j$  to denote the identity operator for any $j$. We write $l\notin \overrightarrow{\li_{p_1}}$ as shorthand for $l\notin\{\li_1,\ldots,\li_{p_1}\}$, and similarly for $c\notin \overrightarrow{c_{p_2}}$, etc.
We also abbreviate $\Acoeff$ as $\Acoeffvec$.
We first compute the commutators of terms in $H_{LC}$ (whose indices are labeled by primes, e.g., $X_{l_1'}^{\alpha_1'} X_{c_1'}^{\gamma_1'}$) with terms from the $(P,Q)$-commutator (whose indices are labeled without primes, e.g., $X_{\li_1}^{\alpha_1}\cdots X_{\li_{p_1}}^{\alpha_{p_1}} X_{\ri_1}^{\beta_1}\cdots X_{\ri_q}^{\beta_q} X_{c_1}^{\gamma_1} \cdots X_{c_{p_2}}^{\gamma_{p_2}}$).

First, we consider the commutator with $H^{(L)}$.
Any term $X_{l_1'}^{\alpha_1'}$ in $H^{(L)}$ is non-commuting with $X_{\li_1}^{\alpha_1}\cdots X_{\li_{p_1}}^{\alpha_{p_1}} X_{\ri_1}^{\beta_1}\cdots X_{\ri_q}^{\beta_q} X_{c_1}^{\gamma_1} \cdots X_{c_{p_2}}^{\gamma_{p_2}}$ iff $\exists \, j$ such that $l_j = l_1'$ and $\alpha_j\neq \alpha_1'$. Therefore,
\begin{align}
    &[H^{(L)}, \Acoeffvec  X_{\li_1}^{\alpha_1}\cdots X_{\li_{p_1}}^{\alpha_{p_1}} X_{\ri_1}^{\beta_1}\cdots X_{\ri_q}^{\beta_q} X_{c_1}^{\gamma_1} \cdots X_{c_{p_2}}^{\gamma_{p_2}}] \nonumber \\
    &\label{eq:pqa} \qquad= \sum_{l_1' \in L} h_{l_1'}^{(L)\alpha_1'} \Acoeffvec \mathbb{I}[\li_1' =\li_j] X_{\li_1}^{\alpha_1}\cdots X_{\li_{j}}^{0} \cdots X_{\li_{p_1}}^{\alpha_{p_1}} X_{\ri_1}^{\beta_1}\cdots X_{\ri_q}^{\beta_q} X_{c_1}^{\gamma_1} \cdots X_{c_{p_2}}^{\gamma_{p_2}}[X^{\alpha_1'}_{l_1'}, X^{\alpha_j}_{l_j}]. 
\end{align}
Now we consider terms in $H^{(LL)}$. For a term $X_{l_1'}^{\alpha_1'}X_{l_2'}^{\alpha_2'}$, we divide the terms from the $(P,Q)$-commutator which have the form $X_{\li_1}^{\alpha_1}\cdots X_{\li_{p_1}}^{\alpha_{p_1}}X_{\ri_1}^{\beta_1}\cdots X_{\ri_q}^{\beta_q} X_{c_1}^{\gamma_1} \cdots X_{c_{p_2}}^{\gamma_{p_2}}$ into those with exactly one overlapping site with $l_1',l_2'$ or exactly two overlapping sites. The commutator is thus expressed as
\begin{align}
    &[H^{(LL)}, \Acoeffvec X_{\li_1}^{\alpha_1}\cdots X_{\li_{p_1}}^{\alpha_{p_1}} X_{\ri_1}^{\beta_1}\cdots X_{\ri_q}^{\beta_q} X_{c_1}^{\gamma_1} \cdots X_{c_{p_2}}^{\gamma_{p_2}}] \nonumber \\ 
    &\qquad=  \sum_{\substack{l_1, l_2\in L \\ l_1< l_2}}  h_{l_1', l_2'} ^{(LL)\alpha_1'\alpha_2'} \Acoeffvec X_{\ri_1}^{\beta_1}\cdots X_{\ri_q}^{\beta_q} X_{c_1}^{\gamma_1} \cdots X_{c_{p_2}}^{\gamma_{p_2}} \times \nonumber\\
    &\qquad\qquad\Bigg(\label{eq:pqb}
    \mathbb{I}[\li_1' =\li_j]\mathbb{I}[\li_2' \notin \overrightarrow{\li_{p_1}}] X_{\li_1}^{\alpha_1}\cdots   X^{0}_{l_{j}} \cdots X_{\li_{p_1}}^{\alpha_{p_1}}  X_{l_2'}^{\alpha_2'}[X^{\alpha_1'}_{l_1'}, X^{\alpha_j}_{l_j}] \\
    &\label{eq:pqc}\qquad\qquad+\mathbb{I}[\li_1' \notin \overrightarrow{\li_{p_1}}]\mathbb{I}[\li_2' = \li_j] X_{\li_1}^{\alpha_1}\cdots X^{0}_{l_{j}}  \cdots X_{\li_{p_1}}^{\alpha_{p_1}}  X_{l_1'}^{\alpha_1'} [X_{l_2'}^{\alpha_2'}, X^{\alpha_j}_{l_j}] \\
    &\label{eq:pqd}\qquad\qquad+ \mathbb{I}[\li_1' = \li_{j_1}]\mathbb{I}[\li_2' = \li_{j_2}] X_{\li_1}^{\alpha_1}\cdots X^{0}_{l_{j_1}}  \cdots X^{0}_{l_{j_2}} \cdots X_{\li_{p_1}}^{\alpha_{p_1}}[X_{l_1'}^{\alpha_1'}X_{l_2'}^{\alpha_2'}, X_{\li_{j_1}}^{\alpha_{j_1}} X_{\li_{j_2}}^{\alpha_{j_2}}]  \Bigg).
\end{align}
Now we look at the commutator with $H^{(LC)}$. Similarly to $H^{(LL)}$, we divide terms of the order-$(P+Q)$ nested commutator into those that have exactly one or two overlapping sites with each term of $H^{(LC)}$. Therefore,
\begin{align}
    &[H^{(LC)}, \Acoeffvec X_{\li_1}^{\alpha_1}\cdots X_{\li_{p_1}}^{\alpha_{p_1}} X_{\ri_1}^{\beta_1}\cdots X_{\ri_q}^{\beta_q} X_{c_1}^{\gamma_1} \cdots X_{c_{p_2}}^{\gamma_{p_2}}]  = \sum_{\substack{l_1 \in L\\ c_1 \in C}} h_{l_1', c_1'}^{(LC)\alpha_1'\gamma_1'} \Acoeffvec X_{\ri_1}^{\beta_1}\cdots X_{\ri_q}^{\beta_q} \times \nonumber \\
    &\qquad\Bigg(
    \mathbb{I}[\li_1' = \li_j]\mathbb{I}[c_1' = c_k] X_{\li_1}^{\alpha_1}\cdots X_{\li_j}^{0} \cdots X_{\li_{p_1}}^{\alpha_{p_1}}  X_{c_1}^{\gamma_1} \cdots X_{c_k}^{0} \cdots X_{c_{p_2}}^{\gamma_{p_2}} [X_{l_1'}^{\alpha_1'}X_{c_1'}^{\gamma_1'}, X^{\alpha_j}_{l_j}X_{c_k}^{\gamma_k}] \label{eq:pqe} \\
    &\qquad\qquad+\mathbb{I}[\li_1' \notin \overrightarrow{\li_{p_1}}]\mathbb{I}[c_1' = c_j] X_{\li_1}^{\alpha_1}\cdots X_{\li_{p_1}}^{\alpha_{p_1}} X_{l_1'}^{\alpha_1'}  X_{c_1}^{\gamma_1} \cdots X^{0}_{c_{j}} \cdots X_{c_{p_2}}^{\gamma_{p_2}}  [X_{c_1'}^{\gamma_1'},X_{c_j}^{\gamma_j}]\label{eq:pqf} \\
    &\qquad\qquad\qquad+\mathbb{I}[\li_1' = \li_j]\mathbb{I}[c_1' \notin \overrightarrow{c_{p_2}}] X_{\li_1}^{\alpha_1}\cdots  X^{0}_{l_j}   \cdots X_{\li_{p_1}}^{\alpha_{p_1}} X_{c_1}^{\gamma_1} \cdots X_{c_{p_2}}^{\gamma_{p_2}}  X_{c_1'}^{\gamma_1'}[X_{l_1'}^{\alpha_1'}, X^{\alpha_j}_{l_j}] \label{eq:pqg}
    \Bigg).
\end{align}
Now we consider the commutator with $H^{(C)}$, which is analogous to the commutator with $H^{(L)}$:
\begin{align}
&[H^{(C)}, \Acoeffvec X_{\li_1}^{\alpha_1}\cdots X_{\li_{p_1}}^{\alpha_{p_1}} X_{\ri_1}^{\beta_1}\cdots X_{\ri_q}^{\beta_q} X_{c_1}^{\gamma_1} \cdots X_{c_{p_2}}^{\gamma_{p_2}}] \nonumber \\
&\label{eq:pqh}\qquad=\sum_{c_1'\in C} h_{c_1'}^{(C)\gamma_1'} \Acoeffvec \mathbb{I}[c_1' =c_j] X_{\li_1}^{\alpha_1} \cdots X_{\li_{p_1}}^{\alpha_{p_1}} X_{\ri_1}^{\beta_1}\cdots X_{\ri_q}^{\beta_q} X_{c_1}^{\gamma_1} \cdots  X^{0}_{c_{j}} \cdots X_{c_{p_2}}^{\gamma_{p_2}} [X^{\gamma_1'}_{c_1'},X_{c_j}^{\gamma_j}].
\end{align}

The case for $H^{(CC)}$ is analogous to the commutator with $H^{(LL)}$:
\begin{align}
&[H^{(CC)}, \Acoeffvec X_{\li_1}^{\alpha_1}\cdots X_{\li_{p_1}}^{\alpha_{p_1}} X_{\ri_1}^{\beta_1}\cdots X_{\ri_q}^{\beta_q} X_{c_1}^{\gamma_1} \cdots X_{c_{p_2}}^{\gamma_{p_2}}] \nonumber \\
&\qquad= \sum_{\substack{c_1, c_2 \in C\\c_1< c_2}}  h_{c_1', c_2'}^{(CC)\gamma_1'\gamma_2'} \Acoeffvec X_{\li_1}^{\alpha_1} \cdots X_{\li_{p_1}}^{\alpha_{p_1}} X_{\ri_1}^{\beta_1}\cdots X_{\ri_q}^{\beta_q} \times \nonumber \\
&\label{eq:pqi}\qquad\qquad\Bigg( \mathbb{I}[c_1' =c_j]\mathbb{I}[c_2' \notin \overrightarrow{c_{p_2}}] X_{c_1}^{\gamma_1} \cdots X^{0}_{c_{j}}   \cdots X_{c_{p_2}}^{\gamma_{p_2}}  X_{c_2'}^{\gamma_2'} [X_{c_1'}^{\gamma_1'}, X_{c_j}^{\gamma_j}] \\
&\label{eq:pqj}\qquad\qquad+ \mathbb{I}[c_1' \notin \overrightarrow{c_{p_2}}]\mathbb{I}[c_2' = c_j] X_{c_1}^{\gamma_1} \cdots X^{0}_{c_{j}}  \cdots X_{c_{p_2}}^{\gamma_{p_2}}   X_{c_1'}^{\gamma_1'}[X_{c_2'}^{\gamma_2'},X_{c_j}^{\gamma_j}]  \\
&\label{eq:pqk}\qquad\qquad+ \mathbb{I}[c_1' = c_{j_1}]\mathbb{I}[c_2' = c_{j_2}] X_{c_1}^{\gamma_1} \cdots X_{c_{j_1}}^0 \cdots X_{c_{j_2}}^0  \cdots X_{c_{p_2}}^{\gamma_{p_2}} [ X_{c_1'}^{\gamma_1'} X_{c_2'}^{\gamma_2'},X_{c_{j_1}}^{\gamma_{j_1}} X_{c_{j_2}}^{\gamma_{j_2}}]  \Bigg).
\end{align}

Overall, the $(P+1,Q)$-commutator is
{\allowdisplaybreaks
\begin{align}
\label{eq:p1qcommutator}
    &[H_{LC}, [H_{\eta_{P+Q}}, \cdots, [H_{\eta_2}, H_{\eta_1}]]] = \sum_{p_1=0}^{P} \sum_{p_2=1}^{P+1-p_1} \sum_{q=0}^{Q} \sum_{\li_1<\li_2<\cdots<\li_{p_1}} \sum_{c_1<c_2<\cdots<c_{p_2}} \sum_{\ri_1<\ri_2<\cdots<\ri_{q}} \Bigg{(} \Acoeffvec \nonumber \\
    &\qquad \times [H^{(L)}+H^{(LL)}+H^{(LC)}+H^{(C)}+H^{(CC)}, X_{\li_1}^{\alpha_1}\cdots X_{\li_{p_1}}^{\alpha_{p_1}} X_{\ri_1}^{\beta_1}\cdots X_{\ri_q}^{\beta_q} X_{c_1}^{\gamma_1} \cdots X_{c_{p_2}}^{\gamma_{p_2}}] \Bigg{)}.
\end{align}
}
Using \crefrange{eq:pqa}{eq:pqk} and grouping terms together, we can rewrite this as
\begin{multline}
\label{eq:nested_commutator_grouped}
    [H_{LC}, [H_{\eta_{P+Q}}, \cdots, [H_{\eta_2}, H_{\eta_1}]]]= \sum_{\widetilde{p_1}=0}^{P+1} \sum_{\widetilde{p_2}=1}^{P+2-\widetilde{p_1}} \sum_{q=0}^{Q} \\ \qquad\qquad \sum_{\widetilde{\li_1}<\cdots<\widetilde{\li}_{\widetilde{p_1}}} \sum_{\widetilde{c}_1<\cdots<\widetilde{c}_{\widetilde{p_2}}} \sum_{\ri_1<\ri_2<\cdots<\ri_{q}} \Atilde \\
    X_{\widetilde{\li_1}}^{\widetilde{\alpha_1}}\cdots X_{\li_{\widetilde{p_1}}}^{\widetilde{\alpha_{\widetilde{p_1}}}} X_{\ri_1}^{\widetilde{\beta_1}}\cdots X_{\ri_q}^{\widetilde{\beta_q}} X_{\widetilde{c_1}}^{\widetilde{\gamma_1}} \cdots X_{c_{\widetilde{p_2}}}^{\widetilde{\gamma_{\widetilde{p_2}}}}.
\end{multline}

Each coefficient $\Atilde$ is a sum of coefficients of the form $\Acoeffvec \times h$, where $h$ is a coefficient from $H_{LC}$. We can thus bound $\norm{\Atilde}$ using the induction hypothesis, which bounds $\norm{\Acoeff}$, and by counting the number of terms (and their norms) in each of \crefrange{eq:pqa}{eq:pqk} that contribute to the $(P+1, Q)$ commutator:
\begin{itemize}
    \item In \cref{eq:pqa}, the number of sites in the support of the commutator is the same as the number of sites in the support of $X_{\li_1}^{\alpha_1}\cdots X_{\li_{p_1}}^{\alpha_{p_1}} X_{\ri_1}^{\beta_1}\cdots X_{\ri_q}^{\beta_q} X_{c_1}^{\gamma_1} \cdots X_{c_{p_2}}^{\gamma_{p_2}}$, i.e., $\widetilde{p_1}=p_1,\widetilde{p_2}=p_2$. We count the number of terms in the $(P+1,Q)$ commutator that are obtained from \cref{eq:pqa}. For a given $\overrightarrow{\widetilde{l_{\widetilde{p_1}}}}$, we count the number of terms from \cref{eq:pqa} with  $l_1',\overrightarrow{l_{p_1}}$ that would produce a term in the $(P+1,Q)$-commutator with $\overrightarrow{\widetilde{l_{\widetilde{p_1}}}}$. We see that $\overrightarrow{l_{p_1}}$ must equal $\overrightarrow{\widetilde{l_{\widetilde{p_1}}}}$, and there are $\widetilde{p_1}$ choices of $\li'_1$ that overlap with a site from $\overrightarrow{l_{p_1}}$.  As in \cref{eq:factor4}, we have a multiplicity of 4, obtained by expanding the commutator and from the Pauli multiplication rules. Thus, there are at most $4\widetilde{p_1}$ terms from \cref{eq:pqa} which are grouped together in \cref{eq:nested_commutator_grouped} for each coefficient $\Atilde$.
    
    \item Similar to the analysis of \cref{eq:pqa}, we count the number of terms of \cref{eq:pqb} that contribute to a given $\overrightarrow{\widetilde{l_{\widetilde{p_1}}}}$. In \cref{eq:pqb}, the number of sites in $L$ in the support of the commutator increases by 1, so we have $\widetilde{p_1} = p_1+1, \widetilde{p_2} = p_2$. There are ${\widetilde{p_1}}$ choices of $\li_2'$ and ${\widetilde{p_1}}-1$ choices of $\li_1'$ that match a given $\overrightarrow{\widetilde{l_{\widetilde{p_1}}}}$. Hence there are at most $4\widetilde{p_1}(\widetilde{p_1}-1)$ coefficients from \cref{eq:p1qcommutator} that are grouped together in $\Atilde$.
    
    \item \Cref{eq:pqc} is similar to \cref{eq:pqb}. There are at most $4\widetilde{p_1}(\widetilde{p_1}-1)$ coefficients from \cref{eq:p1qcommutator} that are grouped together in $\Atilde$.
    
    \item \Cref{eq:pqd} is non-zero on terms of $H^{(LL)}$ which have two overlapping sites with terms of the $(P,Q)$-commutator, such that the Paulis commute on one site and anticommute on the other site. An example of this on two sites is $[X_1X_2,X_1Z_2]\propto Y_2$. Once again, we count the number of choices of sites $l_1', l_2', \overrightarrow{l_{p_1}}$ in \cref{eq:pqd} that contribute to a given $\Atilde$, i.e., match the sites $\overrightarrow{\widetilde{l_{\widetilde{p_1}}}}$. There are at most $N_L$ choices of the site to be removed ($\li_2'$), and $\widetilde{p_1}$ choices of the site to retain $(\li_1'$).
    For a given Pauli on site $\li_1'$, there are 6 commutators that contribute to the coefficient. For example, $Y_2$ can be obtained from the following commutators: $[X_1X_2,X_1Z_2],[Y_1X_2,Y_1Z_2],[Z_1X_2,Z_1Z_2]$ and the commutators obtained from these by swapping $X_2 \leftrightarrow Z_2$. On expanding each commutator, we obtain two terms. There are thus at most $12N_L\widetilde{p_1}$ contributing terms from \cref{eq:pqd} grouped in $\Atilde$ with $\widetilde{p_1} = p_1-1, \widetilde{p_2} = p_2$. 
    
    \item \Cref{eq:pqe} is similar to \cref{eq:pqd}. Terms in \cref{eq:pqe} have one overlapping site in $C$ and one overlapping site in $L$ with the $(P,Q)$-commutator. 
    Similar to \cref{eq:pqd}, the number of sites decreases, in either $C$ or $L$. If the site removed is from $C$, then there are at most $N_C$ choices of the site to be removed ($c_1'$), and $\widetilde{p_1}$ choices of the site to retain $(\li_1'$). Likewise, if the site removed is from $L$, then there are $N_L$ choices of $\li_1'$ and $\widetilde{p_2}$ choices of $c_1'$. 
    There are at most $12N_L\widetilde{p_2}$ terms of \cref{eq:pqe} grouped in $\Atilde$ with $\widetilde{p_1} = p_1-1, \widetilde{p_2} = p_2$.
    Likewise, there are at most $12N_C\widetilde{p_1}$ terms of \cref{eq:pqe} grouped in $\Atilde$ with $\widetilde{p_1} = p_1, \widetilde{p_2} = p_2-1$.
    
    \item From \cref{eq:pqf}, the number of sites in $L$ in the support of the commutator increases by one, while the number of sites in the support from $C$ remains unchanged. Therefore, there are at most $4\widetilde{p_1}\widetilde{p_2}$ terms of \cref{eq:pqf} grouped in $\Atilde$ with $\widetilde{p_1} = p_1+1, \widetilde{p_2} = p_2$. 
    
    \item \Cref{eq:pqg} is similar to \cref{eq:pqf}, increasing the number of sites from $C$ while leaving the number of sites from $L$ unchanged. There are at most $4\widetilde{p_1}\widetilde{p_2}$ terms of \cref{eq:pqf} grouped in $\Atilde$ with $\widetilde{p_1} = p_1, \widetilde{p_2} = p_2+1$. 
    
    \item \Cref{eq:pqh} is similar to \cref{eq:pqa}. There are at most $4\widetilde{p_2}$ terms of \cref{eq:pqh} grouped in $\Atilde$ with $\widetilde{p_1} = p_1, \widetilde{p_2} = p_2$.
    
    \item The analysis of \cref{eq:pqi} is similar to that of \cref{eq:pqb}. There are at most $4\widetilde{p_2}(\widetilde{p_2}-1)$ terms of \cref{eq:pqb} grouped in $\Atilde$ with  $\widetilde{p_1} = p_1, \widetilde{p_2} = p_2+1$.
    
    \item The analysis of \cref{eq:pqj} is similar to that of \cref{eq:pqi}. There are at most $4\widetilde{p_2}(\widetilde{p_2}-1)$ terms of \cref{eq:pqb} grouped in $\Atilde$ with  $\widetilde{p_1} = p_1, \widetilde{p_2} = p_2+1$.
    
    \item The analysis of \cref{eq:pqk} is similar to that of \cref{eq:pqd}. \Cref{eq:pqk} is non-zero on terms of $H^{(CC)}$ which have two overlapping sites with terms of the $(P,Q)$-commutator, such that the Paulis on one site anticommute, and on the other site they commute.  We count the number of choices of sites $c_1', c_2', \overrightarrow{c_{p_2}}$ that match with a given $\overrightarrow{\widetilde{c_{\widetilde{p_2}}}}$. There are at most $N_C$ choices of the site to remove $(c_2')$ and $\widetilde{p_2}$ choices of the site to retain $(c_1')$.  Hence there are at most $12N_C\widetilde{p_2}$ terms of \cref{eq:pqk} grouped in $\Atilde$ with $\widetilde{p_2} = p_2-1, \widetilde{p_1} = p_1$. 
\end{itemize}

\Cref{eq:nested_commutator_grouped} is the sum of \crefrange{eq:pqa}{eq:pqk}. Applying the triangle inequality to the cases above, we thus have
\begin{align}
    \norm{\Atilde} &\leq 4\widetilde{p_1} \max\, \norm{\Acoeffvec} \bigg |_{p_1=\widetilde{p_1},p_2=\widetilde{p_2}} + \cdots \nonumber\\
    &\leq 4\widetilde{p_1}\times c_{P,Q} {N_L}^{(P-\widetilde{p_1})/2}{N_R}^{(Q-q)/2}{N_C}^{(1-\widetilde{p_2})/2} + \cdots,
\end{align}
where the first term comes from \cref{eq:pqa}, and $\cdots$ represents the contributions from the remaining equations. The second line follows from the induction hypothesis [\cref{eq:induction_A}]. Adding the terms from the analysis of each of \crefrange{eq:pqa}{eq:pqk}, we thus have 
\begin{align}
    \norm{\Atilde} &\leq \bigg[\left( 4\widetilde{p_1}+4\widetilde{p_2}\right) +\left(8\widetilde{p_1}^2 + 4\widetilde{p_1}\widetilde{p_2}-8\widetilde{p_1}\right)\sqrt{N_L} \notag \\ &\qquad +\left(8\widetilde{p_2}^2 + 4\widetilde{p_1}\widetilde{p_2}- 8\widetilde{p_2}\right)\sqrt{N_C} + \left(12\widetilde{p_2} + 12\widetilde{p_1}\right)\frac{N_C}{\sqrt{N_C}}  + \left(12\widetilde{p_1} + 12\widetilde{p_2}\right)\frac{N_L}{\sqrt{N_L}}\bigg]  \nonumber \\ & \qquad \times c_{P,Q} {N_L}^{(P-\widetilde{p_1})/2}{N_R}^{(Q-q)/2}{N_C}^{(1-\widetilde{p_2})/2} \nonumber \\
    &\leq \bigg[\left( 4\widetilde{p_1}+4\widetilde{p_2}\right) +\left(8(P+1)^2 + 4P(P+1)-8\widetilde{p_1}\right)\sqrt{N_L} \notag \\ &\qquad +\left(8P^2 + 4P(P+1)- 8\widetilde{p_2}\right)\sqrt{N_C} + \left(12(P+1) + 12P\right)\frac{N_C}{\sqrt{N_C}} \nonumber \\ & \qquad+ \left(12(P+1) + 12P\right)\frac{N_L}{\sqrt{N_L}}\bigg]  \nonumber \\ & \qquad \times c_{P,Q} {N_L}^{(P-\widetilde{p_1})/2}{N_R}^{(Q-q)/2}{N_C}^{(1-\widetilde{p_2})/2} \nonumber \\
    &\leq (24P^2 + 72P +32)\times c_{P,Q} {N_L}^{(P+1-\widetilde{p_1})/2}{N_R}^{(Q-q)/2}{N_C}^{(1-\widetilde{p_2})/2}
    \nonumber\\
    &\leq  c_{P+1,Q} {N_L}^{(P+1-\widetilde{p_1})/2}{N_R}^{(Q-q)/2}{N_C}^{(1-\widetilde{p_2})/2} ,
\end{align}
where, for example, the term $\propto 4\widetilde{p_1}\widetilde{p_2}\sqrt{N_L}$ comes from the contribution of \cref{eq:pqf}, with the extra factor $\sqrt{N_L}$ due to $\widetilde{p_1}=p_1+1$.
The induction hypothesis, \cref{eq:induction_A}, is thus satisfied for $P+1,Q$ by choosing $c_{P+1,Q} \geq (24P^2 + 72P +32) c_{P,Q}$.

Now, we use this to show a bound on the commutator norm. Since different Pauli strings are orthogonal, the Frobenius norm is bounded as
\begin{align}
    \norm{[H_{\eta_{P+Q}}, \cdots, [H_{\eta_2}, H_{\eta_1}]]}_{\rm F}^2  &\leq  \sum_{p_1=0}^{P} \sum_{p_2=1}^{P+1-p_1} \sum_{q=0}^{Q} \sum_{\li_1<\cdots<\li_{p_1}} \sum_{c_1<\cdots<c_{p_2}} \sum_{\ri_1<\ri_2<\cdots<\ri_{q}} \norm{A_{\li_1\cdots \li_{p_1}, c_1 \cdots c_{p_2},\ri_1\cdots \ri_{q}; P+1,Q}^{\alpha_1 \cdots \alpha_{p_1}, \beta_1\cdots \beta_{q},\gamma_1 \cdots \gamma_{p_2}}}^2 \\
    & \leq  \sum_{p_1=0}^{P} \sum_{p_2=1}^{P+1-p_1} \sum_{q=0}^{Q} \sum_{\li_1<\cdots<\li_{p_1}} \sum_{c_1<\cdots<c_{p_2}} \sum_{\ri_1<\ri_2<\cdots<\ri_{q}} c_{P+1,Q}^2 {N_L}^{P-p_1} {N_R}^{Q-q} {N_C}^{1-p_2} \\
    &\leq  \sum_{p_1=0}^{P} \sum_{p_2=1}^{P+1-p_1} \sum_{q=0}^{Q} \binom{N_L}{p_1} \binom{N_C}{p_2} \binom{N_R}{q} c_{P+1,Q}^2 {N_L}^{P-p_1} {N_R}^{Q-q} {N_C}^{1-p_2} \\
    &\leq  \sum_{p_1=0}^{P} \sum_{p_2=1}^{P+1-p_1} \sum_{q=0}^{Q} \frac{N_L^{p_1}}{p_1!} \frac{N_C^{p_2}}{p_2!} \frac{N_R^q}{q!} c_{P+1,Q}^2 {N_L}^{P-p_1} {N_R}^{Q-q} {N_C}^{1-p_2}  \\
    &\leq \left(f(P,Q)\right)^2 {N_L}^{P} {N_R}^{Q} N_C,
\end{align}
where $f(P,Q)$ does not depend on $N_L, N_C, N_R$.
\end{proof}

Now we are ready to prove \cref{lem:trotter_frobenius}, which we restate here.
\begin{lemma}
\label{lem:trotter_frobenius_app}
Consider a graph with a vertex bottleneck. Let $H$ be any time-independent Hamiltonian that respects this connectivity.
    Let $\tU$ be the architecture-respecting simulation circuit corresponding to dividing the time-evolution $U=U(H,t)$ into $M$ equal segments, and simulating each by the $(2k)^{\mathrm{th}}$-order Trotter-Suzuki formula.
    Then, $\tU$ has depth $d = 2 \times 5^{k-1} M$ \cite{Trotter_Frob22}
    and there exists a function $g(k)$, only dependent on $k$, such that
    \begin{equation}
    \label{eq:U-tU<comm_app}
    \norm{U- \tU}_{\rm F} \le g(k) \frac{t^{2k+1}}{M^{2k}} \sqrt{{N_L}^{2k} N_R N_C}.
    \end{equation}
\end{lemma}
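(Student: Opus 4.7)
The plan is to combine a Frobenius-norm version of the Trotter error formula with the commutator bound of \cref{lem:H<sqrtN_app}, using a telescoping argument over the $M$ Trotter segments. Concretely, I would split the target evolution as $U = (e^{-iH\tau})^M$ with $\tau = t/M$ and the circuit as $\tU = S_{2k}(\tau)^M$, where $S_{2k}$ denotes the $(2k)$-th order Suzuki product formula for the bipartite splitting $H = H_{LC} + H_R$. Because the normalized Frobenius norm is unitarily invariant, inserting resolutions of the identity between factors and applying the triangle inequality yields the standard subadditive estimate
\begin{equation}
\norm{U - \tU}_{\rm F} \le M \, \norm{e^{-iH\tau} - S_{2k}(\tau)}_{\rm F}.
\end{equation}

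Next I would invoke the commutator-scaling Trotter error bound from \cite{trotter_comm21}, in the Frobenius refinement established in \cite{Trotter_Frob22}: there is a constant $\alpha_k$ such that
\begin{equation}
\norm{e^{-iH\tau} - S_{2k}(\tau)}_{\rm F} \le \alpha_k \tau^{2k+1} \sum_{\eta \in \{LC, R\}^{2k+1}} \norm{[H_{\eta_{2k+1}}, \ldots, [H_{\eta_2}, H_{\eta_1}]]}_{\rm F}.
\end{equation}
Applying \cref{lem:H<sqrtN_app} to each nested commutator with $P$ copies of $H_{LC}$ and $Q = 2k{+}1-P$ copies of $H_R$ (recall that sequences with $P=0$ or $Q=0$ give vanishing commutators) bounds the summand by $f(P,Q)\sqrt{N_L^P N_R^Q N_C}$. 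Since $N_L \ge N_R$ and $P+Q$ is fixed at $2k+1$, the largest contribution comes from $P=2k,\,Q=1$, giving $\sqrt{N_L^{2k} N_R N_C}$. Absorbing the $\binom{2k+1}{P}$ weighted sum of $f(P,Q)$ values into a $k$-dependent prefactor and combining with the $M$-factor above gives
\begin{equation}
\norm{U - \tU}_{\rm F} \le g(k)\,\frac{t^{2k+1}}{M^{2k}}\sqrt{N_L^{2k}N_R N_C},
\end{equation}
as claimed. The depth count $d = 2 \times 5^{k-1} M$ is the standard gate complexity of the Suzuki recursion \cite{Trotter_Frob22}, and architecture-respecting-ness of $\tU$ is inherited from that of $H_{LC}$ and $H_R$.

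The main obstacle is ensuring that the commutator-scaling Trotter bound can be stated in terms of the Frobenius (rather than operator) norm of the nested commutators; this is crucial because the improvement encoded in \cref{lem:H<sqrtN_app} is specifically a Frobenius-norm statement (the operator norm of the same commutators would pick up factors of $N_L$ instead of $\sqrt{N_L}$, destroying the scaling we want). The derivation in \cite{Trotter_Frob22} essentially replaces operator-norm estimates with Frobenius-norm ones inside the integral-remainder expression for the Suzuki formula, using unitary invariance and submultiplicativity of the $2$-norm against bounded operators. I would simply verify that their argument applies verbatim to the two-term splitting $H = H_{LC} + H_R$ (not assuming any commutativity within a group), after which all remaining steps are bookkeeping.
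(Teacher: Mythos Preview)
Your proposal is correct and follows essentially the same route as the paper: apply the Frobenius-norm Trotter error bound of \cite{Trotter_Frob22} to the two-term splitting $H=H_{LC}+H_R$, bound each nested commutator via \cref{lem:H<sqrtN_app}, and observe that the dominant term is $P=2k$, $Q=1$ since $N_L\ge N_R$. The only cosmetic difference is that you make the telescoping step over the $M$ segments explicit, whereas the paper quotes Theorem~2 of \cite{Trotter_Frob22} directly (which already contains that $M$-factor); the resulting $g(k)$ is the same up to harmless constants.
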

\begin{proof}
    We use the decomposition in \cref{eq:H=LR} to define $\tU$ as the simulation circuit corresponding to dividing the time-evolution by $H$, $U(H,t)$, into $M$ equal segments, and simulating each by the $(2k)$th-order Trotter-Suzuki formula. For example, if $k=1$, 
\begin{equation}
    \tU = \lr{e^{-\ii (t/2M) H_{LC}}e^{-\ii (t/M) H_R}e^{-\ii (t/2M) H_{LC}}}^M.
\end{equation}
$\tU$ consists of $ 2 \times 5^{k-1}M$ gates each acting on either the left and center qubit, or the right and center qubit  \cite{Trotter_Frob22}. We call $\tU$ the circuit approximation.
According to Theorem 2 in Ref.~\cite{Trotter_Frob22} (but using the normalized Frobenius norm), 
\begin{align}
    \norm{U- \tU}_{\rm F} &\le 2(2\times 5^{k-1})^{2k+1} M \lr{\frac{t}{M}}^{2k+1} \sum_{\eta_1,\cdots,\eta_{2k+1}\in \{L,R\} } \norm{[H_{\eta_{2k+1}}, \cdots, [H_{\eta_2}, H_{\eta_1}]]}_{\rm F}, \nonumber\\
    &\le 2(4\times 5^{k-1})^{2k+1} M \lr{\frac{t}{M}}^{2k+1} \max_{\eta_1,\cdots,\eta_{2k+1}\in \{L,R\} } \norm{[H_{\eta_{2k+1}}, \cdots, [H_{\eta_2}, H_{\eta_1}]]}_{\rm F}.
\end{align}

Let $f^*_{2k+1} \coloneqq \max_{P,Q} f(P,Q) \textrm{ s.t.} \, P+Q = 2k+1$. 
Then \cref{lem:H<sqrtN_app} implies the distance is bounded by
\begin{equation}\label{eq:frob_circuit}
    \norm{U- \tU}_{\rm F} \le 2(4\times 5^{k-1})^{2k+1} M \lr{\frac{t}{M}}^{2k+1}
    \sqrt{f^*_{2k+1}} \sqrt{{N_L}^{2k} N_R N_C},
\end{equation}
which completes the proof with $g(k) = 2(4\times 5^{k-1})^{2k+1}\sqrt{f^*_{2k+1}}$.
\end{proof}
\end{document}